\newcounter{brojac}
\newtheorem{assumption}[brojac]{Assumption}
\newtheorem{theorem}{Theorem}
\newtheorem{lemma}[theorem]{Lemma}
\title{}
\title{Convergence Rates of Distributed Nesterov-like Gradient Methods on Random Networks}
\author{Du$\check{\mbox{s}}$an Jakoveti\'c, Jo\~ao Xavier, and Jos\'e M.~F.~Moura$^{\star}$
%\thanks{Copyright (c) 2010 IEEE. Personal use of this material is permitted. However, permission to use this material for any other purposes must be obtained from the IEEE by sending a request to pubs-permissions@ieee.org.}
\thanks{The first and second authors are with Instituto de Sistemas e Rob\'otica~(ISR), Instituto Superior T\'ecnico~(IST), Technical University of Lisbon, 1049-001 Lisbon, Portugal.  Their work is supported by: the Carnegie Mellon$|$Portugal Program under a grant from the Funda\c{c}\~ao de Ci$\hat{\mbox{e}}$ncia e Tecnologia~(FCT) from Portugal; by FCT grants CMU-PT/SIA/0026/2009 and SFRH/BD/33518/2008 (through the Carnegie Mellon$|$Portugal Program managed by ICTI); by ISR/IST plurianual funding (POSC program, FEDER). The first and third authors are with Department of Electrical and Computer Engineering, Carnegie Mellon University, Pittsburgh, PA 15213, USA. Their work is funded by AFOSR grant~FA95501010291 and by NSF grant~CCF1011903. Du$\breve{\mbox{s}}$an Jakoveti\'c holds a fellowship from~FCT. Authors e-mails:
[djakovetic,jxavier]@isr.ist.utl.pt, moura@ece.cmu.edu.}}
\begin{document}
\maketitle \thispagestyle{empty} \maketitle
\vspace{-1.2cm}

\begin{abstract}
We consider distributed optimization in random networks where $N$ nodes cooperatively
minimize the sum $\sum_{i=1}^N f_i(x)$ of their individual convex costs.
Existing literature
 proposes distributed gradient-like methods that are computationally
 cheap and resilient to link failures, but have slow convergence rates.
In this paper, we propose accelerated distributed gradient methods
that: 1) are resilient to link failures; 2) computationally cheap; and 3) improve convergence rates over other gradient methods.
%and establish their convergence rate guarantees on random networks,
%in terms of the expected optimality gap in the cost function at arbitrary node~$i$.
We model the network by a sequence of independent, identically distributed random matrices
$\{W(k)\}$ drawn from the set of symmetric, stochastic matrices with positive diagonals.
The network is connected on average and the cost functions are convex, differentiable, with Lipschitz continuous and bounded gradients.
We design two distributed Nesterov-like gradient methods that modify the D--NG and D--NC methods that we proposed for static networks. We prove their convergence rates in terms of the expected optimality gap at the cost function. Let $k$ and $\mathcal K$ be the number of per-node gradient evaluations
 and per-node communications, respectively. Then the modified D--NG achieves rates~$O(\log k/k)$ and $O(\log \mathcal K/\mathcal K)$, and the modified D--NC rates~$O(1/k^2)$ and $O(1/\mathcal K^{2-\xi})$, where $\xi>0$ is arbitrarily small. For comparison, the standard distributed gradient method cannot do better than $\Omega(1/k^{2/3})$ and $\Omega(1/\mathcal K^{2/3})$, on the same class of cost functions (even for static networks).
 % Finally, we extend the modified D--NC method to constrained optimization
  % with compact constraint sets.
   Simulation examples illustrate our analytical findings.
\end{abstract}
\hspace{.43cm}\textbf{Keywords:} Distributed optimization, convergence rate, random networks, Nesterov gradient, consensus.

%Consensus, weight optimization, correlated link failures, unconstrained optimization, sensor networks, switching topology, broadcast gossip.
%\newpage

%%%%%%%%%%%%%%%%%%%%%%%%%%%%%%%%%%%%%%%%%%%%%%%%%%%%%%%%%%%%%%%%%%%%%%%%%
%\title{  Research report   }
%\author{Du$\breve{\mbox{s}}$an Jakoveti\'c}
%%%%%%%%%%%%%%%%%%%%%%%%%%%%%%%%%%%%%%%%%%%%%%%%%%%%%%%%%%%%%%%%%%%%%%%%%
%\begin{document}
\maketitle \thispagestyle{empty} \maketitle
%
%
%We propose a distributed algorithm for solving constrained convex optimization problem across a wireless sensor network (WSN).
%
%
%\newpage
%
%
\section{Introduction}
\label{section-introduction}
We study distributed optimization where $N$ nodes
in a (sensor, multi-robot, or cognitive) network minimize the sum $\sum_{i=1}^N f_i(x)$
 of their individual costs subject
 to a global optimization variable~$x \in {\mathbb R}^d$.
 Each $f_i:\,{\mathbb R}^d \rightarrow \mathbb R$
  is convex and known only by node~$i$.
  The goal for each node is to estimate
  the vector $x^\star \in {\mathbb R}^d$ of common interest to all nodes.
  Each node $i$ acquires locally data $d_i$ that reveals partial knowledge on $x^\star$ and forms the cost function $f_i(x;\,d_i)$  of the global variable $x$.
  The nodes cooperate to find $x^\star$ that minimizes
  $\sum_{i=1}^N f_i(x;d_i)$.
  %Our objective is in distributed, iterative algorithms that solve the latter problem, where nodes over iterations $k$ communicate only with their immediate neighbors in the network.
 This setup has been studied in the context of many signal processing applications, including:
 1) distributed estimation in sensor networks, e.g.,~\cite{RibeiroADMM1,RibeiroADMM2};
 2) acoustic source localization, e.g., \cite{Rabbat};
 and 3) spectrum sensing for cognitive radio networks, e.g.,~\cite{bazerque_lasso,bazerque_sensing}.
% encompasses many networked systems applications,
% including distributed inference, e.g.,~\cite{SoummyaEst},
% and acoustic source localization, e.g.,~\cite{Rabbat}, in sensor networks.

For the above problem,
reference~\cite{arxivVersion}, see also~\cite{cdc-submitted,asilomar-nesterov}, presents
two distributed Nesterov-like gradient algorithms for static (non-random) networks, referred to
as D--NG~(Distributed Nesterov Gradient algorithm)
and D--NC~(Distributed Nesterov gradient with Consensus iterations).
The distributed gradient methods D--NG and D--NC significantly
improve the convergence rates over standard distributed gradient methods, e.g., \cite{nedic_T-AC,duchi}.

In this paper, we propose the mD--NG and mD--NC algorithms, which modify the D--NG and D--NC algorithms, and, beyond proving their convergence, we solve the much harder problem of establishing their convergence rate guarantees on~\emph{random networks}.
%We propose the modified D--NG and D--NC
%algorithms and establish the convergence rate guarantees for the two methods
% on random networks.
Randomness in networks may arise when inter-node links fail as with random packet dropouts in wireless
sensor networks, and when communication protocols are random like with the gossip protocol~\cite{BoydGossip}.
We model the network by a sequence of random independent, identically distributed~(i.i.d.) weight matrices~${W(k)}$ drawn from a set of symmetric, stochastic matrices with positive diagonals, and we assume that the network is connected on average (the graph supporting $\mathbb E \left[ W(k)\right]$ is connected). We establish the convergence
  rates of the expected optimality gap in the cost function (at any node~$i$)
%  \footnote{Note that we assume deterministic cost functions $f_i$'s; the randomness is only due to the underlying random networks.}
   of mD-NG and mD-NC, in terms of the number of per node gradient evaluations~$k$ and the number of per-node communications~$\mathcal K$, when the functions $f_i$ are convex and differentiable, with
   Lipschitz continuous and bounded gradients.
 We show that the modified methods achieve \emph{in expectation} the same rates that the methods in~\cite{arxivVersion} achieve on static networks, namely: mD--NG converges at rates~$O(\log k/k)$ and~$O(\log \mathcal K/\mathcal K)$, while mD--NC has rates~$O(1/k^2)$ and~$O(1/\mathcal K^{2-\xi})$, where $\xi$ is an arbitrarily small positive number. We explicitly give the convergence rate constants in terms of the number of nodes~$N$ and the  network statistics, more precisely, in terms of the quantity $\overline{\mu}:=\left(\| \mathbb E[W(k)^2]-J\|\right)^{1/2}$ (See ahead paragraph with heading Notation.)

We contrast D--NG and D--NC in~\cite{arxivVersion} with their modified variants, mD--NG and mD--NC, respectively. Simulations in  Section~\ref{section-simulation-example} show that D--NG may diverge when links fail,  while mD--NG converges, possibly at a slightly lower rate on static networks and requires an additional ($d$-dimensional) vector communication per iteration~$k$.
 Hence, mD--NG compromises slightly speed of convergence for robustness to link failures.

   Algorithm mD--NC has one inner consensus with $2d$-dimensional variables per outer iteration $k$, while D--NC has two consensus algorithms with $d$-dimensional variables. Both D--NC variants converge in our simulations when links fail, showing similar performance.

%Finally, we extend mD--NC to random networks and \emph{constrained minimization} $\min_{x \in \mathcal X} \sum_{i=1}^N f_i(x)$,
%where the constraint set~$\mathcal X$ is convex and compact, and the $f_i$'s are convex, differentiable, with Lipschitz continuous gradients, establishing for the random network model above the same convergence rates~$O(1/k^2)$ and~$O(1/\mathcal K^{2-\xi})$.

The analysis here differs from~\cite{arxivVersion}, since the dynamics of disagreements are different from the dynamics in~\cite{arxivVersion}. This requires novel bounds on certain products of time-varying matrices. By disagreement,
 we mean how different the solution estimates of distinct nodes are,  say $x_i(k)$ and $x_j(k)$ for nodes $i$ and $j$.
  %(see Lemmas~\ref{lemma-B-k-t-modified-D-NG}--\ref{lemma-upper-bound-on-norm-B-k-t}.)
 %Second, the constrained optimization problem we consider requires with respect to~\cite{arxivVersion} a novel investigation of an inexact \emph{projected} Nesterov gradient method.% (see Lemma~\ref{lemma-progress-one-iteration}~(a).)

\textbf{Brief comment on the literature}. There is increased interest
in distributed optimization and learning. Broadly, the literature considers two types of methods,
namely, batch processing,  e.g.,~\cite{nedic_T-AC,nedic-gossip,nedic_T-AC-private,Matei,Sonia-Martinez,bazerque_lasso,bazerque_sensing},
and online adaptive processing, e.g.,~\cite{SayedConf,SayedOptim,Theodoridis}.
With batch processing, data is acquired beforehand, and hence the $f_i$'s are known
before the algorithm runs. In contrast, with adaptive online processing,
nodes acquire new data at each iteration~$k$ of the distributed algorithm.
We consider here batch processing.

Distributed \emph{gradient methods} are, e.g., in \cite{nedic_T-AC,nedic-gossip,nedic_T-AC-private,nedic_novo,asu-new-jbg,asu-random,Karl-Johansson-Subgradient,Matei,duchi,Rabbat,Rabbat-Consensus-Dual-Avg,Sonia-Martinez}. References~\cite{nedic_T-AC,nedic-gossip,asu-new-jbg} proved convergence of their algorithms under deterministically time varying or random networks. Typically, $f_i$'s are convex,  non-differentiable, and with bounded gradients over the constraint set. Reference~\cite{duchi} establishes~$O\left(\log k / \sqrt{k}\right)$
convergence rate (with high probability) of a version of the distributed dual averaging method.
%It asssumes that $f_i$'s that are convex, possibly non-differentiable,
%
 We assume a more restricted class $\mathcal F$ of cost functions--$f_i$'s that are convex and have Lipschitz continuous and
 bounded gradients, but, in contradistinction, we establish strictly faster convergence rates--at least~$O(\log k/k)$ that are not achievable by standard distributed gradient methods~\cite{nedic_T-AC} on the same class~$\mathcal{F}$. Indeed, \cite{arxivVersion} shows that the method in~\cite{nedic_T-AC} cannot achieve a worst-case rate better than~$\Omega\left(1/k^{2/3}\right)$ on the same class~$\mathcal{F}$, even for static networks.
 Reference~\cite{AnnieChen} proposes an accelerated distributed proximal gradient method, which resembles our D--NC method for \emph{deterministically} time varying networks;
in contrast, we deal here with \emph{randomly varying networks}.
For a detailed comparison of D--NC with~\cite{AnnieChen}, we refer to~\cite{arxivVersion}.

 Distributed augmented Lagrangian or ordinary Lagrangian dual methods, are e.g., in~\cite{bazerque_lasso,cooperative-convex,Joao-Mota,Joao-Mota-2,Uday,Terelius,RibeiroADMM1,RibeiroADMM2}.
They have in general more complex iterations than gradient methods, but may have a lower total communication cost, e.g.,~\cite{Joao-Mota-2}. To our best knowledge, the convergence rates of such methods have not been established for random networks.

\textbf{Paper organization}.
 The next paragraph sets notation.
Section~\ref{section-mD-NG}
introduces the network and optimization models and presents mD--NG and its convergence rate, which is proved in Sections~\ref{section-intermediate-results} and~\ref{section-proofs-mD-NG}.
  Section~\ref{section-mD-NC} presents mD--NC and its convergence rate, proved in~Section~\ref{section-proofs-mD-NC}.
  Section~\ref{section-discussion} discusses extensions to our results. Section~\ref{section-simulation-example}
   illustrates  mD--NG and mD--NC  on a Huber loss example.
   We conclude in Section~\ref{section-conclusion-random}. Certain auxiliary arguments are in Appendices~\ref{app:A} and~\ref{subsection-appendix-moments}.
   %
   %
%
% section-algorithms-random}
%presents mD--NG and mD--NC, the modified D--NG and D--NC algorithms, and states the results on their convergence rates.
%Section~\ref{section-proofs-disagreements}
%proves the bounds on disagreements of different nodes' estimates.
% Section~\ref{section-conv-rate-proofs-random} proves
% the convergence rates of the proposed methods.
%%  Section~\ref{section-projected-D-NC} presents the projected D--NC method for constrained optimization and proves its convergence rate.
% Section~\ref{section-simulation-example} provides a simulation example with Huber losses. Finally, Section~\ref{section-conclusion-random} concludes the paper.

\textbf{Notation}. Denote by: ${\mathbb R}^d$ the $d$-dimensional real space:
$A_{lm}$ or $[A]_{lm}$  the $(l,m)$ entry of~$A$; $A^\top$ the transpose of~$A$;
$[a]_{l:m}$ the selection of the $l$-th, $(l+1)$-th, $\cdots$, $m$-th entries of vector $a$;
$I$, $0$, $\mathbf{1}$, and $e_i$, respectively, the identity matrix, the zero matrix, the column vector with unit entries, and the $i$-th column of $I$; $J$ the $N \times N$ ideal consensus matrix $J:=(1/N)\mathbf{1}\mathbf{1}^\top$; $\otimes $ the Kronecker product of matrices; $\| \cdot \|_l$ the vector (matrix) $l$-norm of its argument; $\|\cdot\|=\|\cdot\|_2$ the Euclidean (spectral) norm of its vector (matrix) argument ($\|\cdot\|$ also denotes the modulus of a scalar); $\lambda_i(\cdot)$ the $i$-th smallest \emph{in modulus} eigenvalue; $A \succ 0$ a positive definite Hermitian matrix $A$; $\lfloor a \rfloor$ the integer part of a real scalar $a$; $\nabla \phi(x)$ and $\nabla^2 \phi(x)$ the gradient and Hessian at $x$ of a twice differentiable function $\phi: {\mathbb R}^d \rightarrow {\mathbb R}$, $d \geq 1$;
$\mathbb P(\cdot)$ and $\mathbb E[\cdot]$
the probability and expectation, respectively; and
$\mathcal{I}_{\mathcal A}$ the indicator of event $\mathcal A$.
For two positive sequences $\eta_n$ and $\chi_n$, we have: $\eta_n = O(\chi_n)$ if $\limsup_{n \rightarrow \infty}\frac{\eta_n}{\chi_n}<\infty$; $\eta_n = \Omega (\chi_n)$ if $\liminf_{n \rightarrow \infty}\frac{\eta_n}{\chi_n}>0$; and $\eta_n=\Theta(\chi_n)$ if $\eta_n = O(\chi_n)$; and $\eta_n = \Omega(\chi_n)$.
\section{Algorithm $\mathrm{m}$D--NG}
\label{section-mD-NG}
Subsection~\ref{subsection-model} introduces the network and optimization models,
Subsection~\ref{subsection-m-D-NG} the mD--NG algorithm, and Subsection~\ref{subsection-conv-rate-statement-mD-NG}
 its convergence rate.

\subsection{Problem model}
\label{subsection-model}

\textbf{Random network model}.
%\label{subsection-network-model-random}
%
%
%
%
The network is random, due to link failures or
communication protocol used (e.g., gossip, \cite{BoydGossip,dimakiskarmourarabbatscaglione}.) It is defined by a sequence $\{W(k)\}_{k=1}^{\infty}$ of $N \times N$ random weight matrices.
 \begin{assumption}[Random network] We have:
 \label{assumption-random-weight-matrices}
 \begin{enumerate}[(a)]
 \item The sequence $\{W(k)\}_{k=1}^{\infty}$ is i.i.d.
 \item Almost surely (a.s.),  $W(k)$ are symmetric, stochastic, with strictly positive diagonal entries.
 \item There exists $\underline{w}>0$ such that, for all $i,j=1,\cdots ,N$,
 a.s. $W_{ij}(k) \notin (0,\underline{w})$.
 \end{enumerate}
 \end{assumption}
By Assumptions \ref{assumption-random-weight-matrices}~(b) and~(c), $W_{ii}(k) \geq \underline{w}$ a.s., $\forall i$; also,
  $W_{ij}(k)$, $i\neq j$, may be zero,  but if $W_{ij}(k)>0$ (nodes $i$ and $j$ communicate)
 it is non-negligible (at least $\underline{w}$).

Let $\overline{W}:=\mathbb E\left[ W(k)\right]$, the supergraph $\mathcal G = (\mathcal{N},E)$,
 $\mathcal N$ the set of $N$ nodes, and
$E=\left\{ \{i,j\}:\,i<j,\,\overline{W}_{ij}>0  \right\}$--$\mathcal G$ collects all realizable links, all pairs
$\{i,j\}$ for which $W_{ij}(k)>0$ with positive probability.

Assumption~\ref{assumption-random-weight-matrices} covers link failures. Here, each link $\{i,j\} \in E$ at time $k$ is Bernoulli: when it is one, $\{i,j\}$ is online (communication), and when it is zero, the link fails (no communication). The Bernoulli links are independent over time, but may be correlated in space. Possible weights are: 1) $i\neq j$, $\{i,j\} \in E$: $W_{ij}(k)=w_{ij}=1/N$, when $\{i,j\}$ is online, and $W_{ij}(k)=0$, else; 2) $i\neq j$, $\{i,j\} \notin E$: $W_{ij}(k) \equiv 0$; and 3) $W_{ii}(k)=1-\sum_{j \neq i}W_{ij}(k)$. As an alternative, when the link occurrence probabilities and their correlations are known, set the weights $w_{ij}$, $\{i,j\}\in E$, as the minimizers of $\overline{\mu}$ (See Section~\ref{section-simulation-example} and~\cite{weightOpt} for details.)

We further make the following Assumption.
\begin{assumption}[Network connectedness]
\label{assumption-connected-random}
$\mathcal G$ is connected.
\end{assumption}
Denote by $\widetilde{W}(k)=W(k)-J=-(1/N)\mathbf 1 \mathbf 1^\top$, by
\begin{equation}
\label{eqn-tilde-phi-modified-D-NG}
\widetilde{\Phi}(k,t)=\widetilde{W}(k)\cdots \widetilde{W}(t+2),\:\:t=0,1,\cdots ,k-2,
\end{equation}
and by $\widetilde{\Phi}(k,k-1)=I$.
 One can show that ${\overline{\mu}}:=\left( \| \mathbb E\left[ W^2(k) \right] - J\|\right)^{1/2}$
 is the square root of the second largest eigenvalue of $\mathbb E \left[ W^2(k)\right]$ and that, under Assumptions~\ref{assumption-random-weight-matrices} and~\ref{assumption-connected-random}, ${\overline{\mu}}<1$.
Lemma~\ref{lemma-decay-moments} (proof in Appendix~\ref{app:A}) shows that ${\overline{\mu}}$ characterizes the geometric decay of the first and second moments of $\widetilde{\Phi}(k,t)$.
\begin{lemma}
\label{lemma-decay-moments}
Let Assumptions~\ref{assumption-random-weight-matrices} and~\ref{assumption-connected-random} hold. Then:
\begin{align}
\label{eqn-theorem-phi}
\mathbb E \left[ \left\|\widetilde{\Phi}(k,t)\right\|\right] &\leq N\,{\overline{\mu}}^{k-t-1}\\
\label{eqn-theorem-phi-2}
\mathbb E \left[ \left\|\widetilde{\Phi}(k,t)^\top \widetilde{\Phi}(k,t) \right\|\right] &\leq N^2\,{\overline{\mu}}^{2(k-t-1)}\\
\label{eqn-theorem-phi-3}
\mathbb E \left[ \left\|\widetilde{\Phi}(k,s)^\top \widetilde{\Phi}(k,t) \right\|\right] &\leq N^3\,{\overline{\mu}}^{(k-t-1)+(k-s-1)},
\end{align}
for all $t,s=0,\cdots ,k-1,$ for all $k=1,2,\cdots $
\end{lemma}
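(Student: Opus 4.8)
The plan is to reduce all three estimates to a single second-moment bound in the Frobenius norm, namely $\mathbb E\big[\|\widetilde{\Phi}(k,t)\|_F^2\big]\le N\,{\overline{\mu}}^{\,2(k-t-1)}$, and then read off \eqref{eqn-theorem-phi}--\eqref{eqn-theorem-phi-3} from it. The reason one cannot simply bound $\|\widetilde{\Phi}(k,t)\|$ by the product $\prod_\ell\|\widetilde{W}(\ell)\|$ and average factor by factor is that a single realization of $\widetilde{W}(\ell)=W(\ell)-J$ can have spectral norm $1$ (when the realized graph is disconnected), so there is no per-factor decay; the contraction is a second-order/averaging effect, visible only through $\mathbb E[W^2(k)]$, and passing to the Frobenius (trace) norm is what linearizes the expectation and exposes it.

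First I would record the needed algebra. Since each $W(k)$ is a.s.\ symmetric and stochastic, $W(k)\mathbf 1=\mathbf 1$, hence $W(k)J=JW(k)=J$ and therefore $\widetilde{W}(k)J=J\widetilde{W}(k)=0$ a.s.; squaring, $\widetilde{W}(k)^2=\big(W(k)-J\big)^2=W(k)^2-J$ a.s. Writing $S:=\mathbb E\big[\widetilde{W}(k)^\top\widetilde{W}(k)\big]=\mathbb E\big[\widetilde{W}(k)^2\big]=\mathbb E[W^2(k)]-J$, this $S$ is symmetric positive semidefinite (because $\widetilde{W}(k)^2\succeq 0$ a.s.), and, by the definition of ${\overline{\mu}}$ together with the already-noted fact that ${\overline{\mu}}<1$ under Assumptions~\ref{assumption-random-weight-matrices}--\ref{assumption-connected-random}, we have $\|S\|={\overline{\mu}}^{\,2}<1$.

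Next comes the crux, the recursion. Fix $t$ and set $Y_j:=\widetilde{\Phi}(t+1+j,t)=\widetilde{W}(t+1+j)\cdots\widetilde{W}(t+2)$ for $j\ge 0$, so $Y_0=I$, $\|I\|_F^2=N$, and $Y_j=\widetilde{W}(t+1+j)\,Y_{j-1}$ with $\widetilde{W}(t+1+j)$ independent of $Y_{j-1}$ (i.i.d.\ assumption). Conditioning on $Y_{j-1}$ and using $\widetilde{W}^\top\widetilde{W}=\widetilde{W}^2$,
\[
\mathbb E\big[\|Y_j\|_F^2 \,\big|\, Y_{j-1}\big]=\mathrm{tr}\!\big(Y_{j-1}^\top\,S\,Y_{j-1}\big)\le\|S\|\,\|Y_{j-1}\|_F^2={\overline{\mu}}^{\,2}\,\|Y_{j-1}\|_F^2 .
\]
Taking total expectation and iterating gives $\mathbb E[\|Y_j\|_F^2]\le N\,{\overline{\mu}}^{\,2j}$; the choice $j=k-t-1$ (the case $j=0$ being the convention $\widetilde{\Phi}(k,k-1)=I$) yields the claimed Frobenius bound. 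From it: \eqref{eqn-theorem-phi} follows by $\|\cdot\|\le\|\cdot\|_F$ and Jensen, giving $\mathbb E\|\widetilde{\Phi}(k,t)\|\le\sqrt N\,{\overline{\mu}}^{\,k-t-1}\le N\,{\overline{\mu}}^{\,k-t-1}$; \eqref{eqn-theorem-phi-2} follows from $\|\widetilde{\Phi}^\top\widetilde{\Phi}\|=\|\widetilde{\Phi}\|^2\le\|\widetilde{\Phi}\|_F^2$. For \eqref{eqn-theorem-phi-3}, since $\|A^\top B\|=\|B^\top A\|$ I may assume $t\le s$, factor $\widetilde{\Phi}(k,t)=\widetilde{\Phi}(k,s)\,\widetilde{\Phi}(s+1,t)$ with the two factors independent (disjoint index blocks $\{s+2,\dots,k\}$ and $\{t+2,\dots,s+1\}$), and bound $\mathbb E\big[\|\widetilde{\Phi}(k,s)^\top\widetilde{\Phi}(k,t)\|\big]\le \mathbb E\big[\|\widetilde{\Phi}(k,s)\|^2\big]\,\mathbb E\big[\|\widetilde{\Phi}(s+1,t)\|\big]\le N\,{\overline{\mu}}^{\,2(k-s-1)}\cdot\sqrt N\,{\overline{\mu}}^{\,s-t}=N^{3/2}\,{\overline{\mu}}^{\,(k-s-1)+(k-t-1)}\le N^3\,{\overline{\mu}}^{\,(k-s-1)+(k-t-1)}$.

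The hard part is not any single calculation but spotting the two ideas just used: that the correct object to propagate is $S=\mathbb E[W^2(k)]-J$, whose norm is exactly ${\overline{\mu}}^2<1$ (this is where ``connected on average'' enters, as a contraction on the disagreement subspace $\mathbf 1^\perp$), and that passing to $\|\cdot\|_F^2=\mathrm{tr}(\cdot)$ turns the expectation of a product of independent matrices into the clean geometric recursion above. After that, everything is bookkeeping, and the powers of $N$ in \eqref{eqn-theorem-phi}--\eqref{eqn-theorem-phi-3} are intentionally loose.
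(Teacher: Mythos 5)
Your proof is correct and follows essentially the same route as the paper: your Frobenius-norm recursion $\mathbb E[\|Y_j\|_F^2\mid Y_{j-1}]=\mathrm{tr}(Y_{j-1}^\top S Y_{j-1})\le\overline{\mu}^2\|Y_{j-1}\|_F^2$ is exactly the paper's column-by-column conditioning argument (with $\chi_i(s)=\widetilde{\Phi}(\cdot,t)e_i$) summed over columns, \eqref{eqn-theorem-phi} is again obtained via Jensen, and \eqref{eqn-theorem-phi-3} uses the same factorization-plus-independence trick. The only difference is bookkeeping: you get the slightly sharper constants $\sqrt N$, $N$, $N^{3/2}$ before relaxing them to the stated $N$, $N^2$, $N^3$.
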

The bounds in~\eqref{eqn-theorem-phi}-\eqref{eqn-theorem-phi-3} may be loose, but are enough to prove the results below and simplify the presentation.

For static networks, $W(k) \equiv W$, $W$
doubly stochastic, deterministic, symmetric, and ${\overline{\mu}}:=\|W-J\|$ equals the spectral gap, i.e.,
the modulus of the second largest (in modulus) eigenvalue of~$W$.
For static networks, the constants $N,N^2, N^3$ in~\eqref{eqn-theorem-phi}--\eqref{eqn-theorem-phi-3}
 are reduced to one.

\textbf{Optimization model}.
%\label{subsection-optimization-model-random}
 We now introduce the optimization model.
The nodes solve the unconstrained problem:
\begin{equation}
\label{eqn-opt-prob-original}
%\begin{array}[+]{ll}
\mbox{minimize} \:\: \sum_{i=1}^N f_i(x) =:f(x).
%\end{array}
\end{equation}
The function $f_i: {\mathbb R}^d \rightarrow {\mathbb R}$ is known only to node $i$. We impose the following three Assumptions.
\begin{assumption}[Solvability]
\label{assumption-f-i-s}
 There exists a solution $x^\star \in {\mathbb R}^d$ such that
 $f(x^\star)=f^\star:=\inf_{x \in {\mathbb R}^d} f(x)$.
 \end{assumption}
 \begin{assumption}[Lipschitz continuous gradient]
 \label{assumption-f-i-s-b}
 For all $i$, $f_i$ is convex and has Lipschitz continuous gradient with constant $L \in [0,\infty)$:
\[
\|\nabla f_i(x) - \nabla f_i(y)\| \leq L \|x-y\|,\:\:\:\forall x,y \in {\mathbb R}^d.
\]
\end{assumption}
%
%We denote by ${{x^\star}}$ a solution to~\eqref{eqn-opt-prob-original} and the optimal value $f^\star:=f({{x^\star}})$.
%
%
%In addition to Assumptions \ref{assumption-network} and \ref{assumption-f-i-s},
%we assume exclusively one of the following two assumptions (Assumption \ref{assumption-bdd-gradients} or Assumption \ref{assumption-linear-growth}.)
%
%
%
\begin{assumption}[Bounded gradients]
\label{assumption-bdd-gradients}
There exists a constant $ G \in [0,\infty)$ such that, $\forall i$, $\|\nabla f_i(x)\|\leq G$, $\forall x \in {\mathbb R}^d.$
\end{assumption}
%
%
%We further denote by $f^\star:=\$
%
%
%We comment on the optimization model in Assumptions~\ref{assumption-f-i-s}--\ref{assumption-bdd-gradients}.
Assumptions~\ref{assumption-f-i-s} and~\ref{assumption-f-i-s-b} are standard in gradient methods; in particular, Assumption~\ref{assumption-f-i-s-b} is precisely the Assumption required by the centralized Nesterov gradient method~\cite{Nesterov-Gradient}. Assumption~\ref{assumption-bdd-gradients} is not required in centralized Nesterov.
Reference~\cite{arxivVersion}
  demonstrates that
  (even on) static networks
  and a constant~$W(k) \equiv W$,
  the convergence rates of D--NG or of the standard distributed gradient method in~\cite{nedic_T-AC} become arbitrarily slow if Assumption~\ref{assumption-bdd-gradients} is violated.% (see~\cite{arxivVersion} for a precise statement.)
%   Section~\ref{section-projected-D-NC}
%   considers a different setup
%   with compact constraints.

    Examples of functions $f_i$ that obey Assumptions~\ref{assumption-f-i-s}--\ref{assumption-bdd-gradients} include the logistic loss for classification, \cite{BoydADMoM},
     the Huber loss in robust statistics, \cite{Blatt-Hero-Gauchman}, or
the ``fair'' function in robust statistics, $\phi: {\mathbb R} \mapsto {\mathbb R}$, $\phi(x) =
b_0^2 \left( {\frac{|x|}{b_0}} - \log \left( 1+\frac{|x|}{b_0}\right) \right)$, where $b_0>0$, \cite{Blatt-Hero-Gauchman}.
%We consider a different optimization model in Section~\ref{section-projected-D-NC}.
%odified D--NG algorithm, i.e., we let Assumptions~\ref{assumption-network} and~\ref{assumption-bdd-gradients} hold.

%\section{Fast distributed gradient algorithms}
%\label{section-algorithms-random}
%%
%This section presents our distributed gradient algorithms for random networks.
%Subsection~\ref{subsection-modified-D-NG}
%presents mD--NG, Subsection~\ref{subsection-D-NC-link-failures}
%presents mD--NC, and Subsection~\ref{subsection-rates-statements}
% states our results on the convergence rates of the two algorithms.
%
%
%
\subsection{Algorithm mD--NG for random networks}
\label{subsection-m-D-NG}
We modify  D--NG in~\cite{arxivVersion} to handle random networks.
Node $i$ maintains its solution estimate
$x_i(k)$ and auxiliary
variable $y_i(k)$, $k=0,1,\cdots $ It uses arbitrary initialization $x_i(0)=y_i(0) \in {\mathbb R}^d$
 and, for $k=1,2,\cdots $, performs the updates
\begin{align}
\label{eqn-D-NG-modified}
x_i(k) &= \sum_{j \in O_i(k)} W_{ij}(k)\,y_j(k-1)\\
&
\nonumber
\hspace{1cm}- \alpha_{k-1} \nabla f_i (y_i(k-1)) \\
\label{eqn-D-NG-modified-2}
y_i(k) &= (1+\beta_{k-1}) \,x_i(k) \\
&
\nonumber
\hspace{1cm}
- \beta_{k-1}\,\sum_{j \in O_i(k)} W_{ij}(k)\,x_j(k-1).
\end{align}
In~\eqref{eqn-D-NG-modified}--\eqref{eqn-D-NG-modified-2}, $O_i(k)=\{ j\in\{1,\cdots ,N\}:\,W_{ij}(k)>0\}$
is the (random) neighborhood of node $i$ (including node~$i$) at time~$k$.
 For $k=0,1,2,\cdots ,$ the step-size $\alpha_k$ is:
 \begin{equation}
 \label{eqn-alpha-k-link-failures}
 \alpha_k=c/(k+1),\:\:c \leq 1/(2L).
 \end{equation}
  All nodes know~$L$ (or its upper bound) beforehand to set $\alpha_k$ in~\eqref{eqn-alpha-k-link-failures}. Section~\ref{section-discussion} relaxes this
 requirement.
Let~$\beta_k$ be the sequence from centralized Nesterov, \cite{Nesterov-Gradient}:
 \begin{equation}
 \label{eqn-beta-k-link-failures}
 \beta_k=\frac{k}{k+3}.
 \end{equation}

%The mD--NG algorithm works as follows.
%At iteration $k$, node $i$ receives
%the variables $x_j(k-1)$ and $y_j(k-1)$
% from its current neighbors $j \in O_i(k)-\{i\}$,
%  and updates $x_i(k)$ and $y_i(k)$ via~\eqref{eqn-D-NG-modified} and~\eqref{eqn-D-NG-modified-2}.

% We borrow the choice~$\beta_k$ in~\eqref{eqn-alpha-k-link-failures}
%from the fast centralized Nesterov gradient method~\cite{Nesterov-Gradient}.
% The centralized Nesterov gradient method~\cite{Nesterov-Gradient}
% uses a constant step size $\alpha \leq 1/L$. We adopt the
% diminishing step-size $\alpha_k$

The mD--NG algorithm~\eqref{eqn-D-NG-modified}--\eqref{eqn-D-NG-modified-2}
differs from D--NG in~\cite{arxivVersion}
in step~\eqref{eqn-D-NG-modified-2}. With D--NG, nodes communicate only the variables $y_j(k-1)$'s;
with mD--NG, they also communicate the $x_j(k-1)$'s (see sum term in~\eqref{eqn-D-NG-modified-2}).
This modification allows for the robustness
to link failures. (See Theorems~\ref{theorem-consensus-D-NG-modified} and~\ref{theorem-convergence-rate-D-NG-modified}
and the simulations in Section~\ref{section-simulation-example}.)
Further, mD--NG does not require
the weight matrix to be positive definite,
as D--NG in~\cite{arxivVersion} does.

\textbf{Vector form}. Let $x(k):=(x_1(k)^\top,\cdots ,x_N(k)^\top)^\top$,
 $y(k):=(y_1(k)^\top,\cdots ,y_N(k)^\top)^\top$, and $F:{\mathbb R}^{N d} \rightarrow {\mathbb R}$,
 $F(x_1,\cdots ,x_N):=f_1(x_1)+\cdots +f_N(x_N)$. Then, for $k=1,2,\cdots $, with $x(0)=y(0) \in {\mathbb R}^{N d}$, ${W}(k) \otimes I$
 the Kronecker product of $W(k)$
  with the $d \times d$ identity~$I$, mD--NG in vector form is:
\begin{align}
\label{eqn-modified-D-NG-vector-form}
x(k) &= \left( {W}(k) \otimes I\right)\, y(k-1) \\
&
\nonumber
\hspace{1cm}- \alpha_{k-1} \,\nabla F(y(k-1))  \\
\label{eqn-modified-D-NG-vector-form-2}
y(k) &= (1+\beta_{k-1})\,x(k) \\
&
\nonumber
\hspace{1cm} -\beta_{k-1} \,\left({W}(k)\otimes I\right)\,x(k-1).
\end{align}

\textbf{Initialization}. For notation simplicity, without loss of generality (wlog), we assume, with all proposed methods,
that nodes initialize their estimates to the same values, i.e., $x_i(0)=y_i(0)=x_j(0)=y_j(0)$,  for all~$i,j$; for example, $x_i(0)=y_i(0)=x_j(0)=y_j(0)=0.$
%
%
%\textbf{The inexact oracle framework}.
%We will study the convergence of the modified D--NG algorithm
%  through the framework of the inexact centralized Nesterov gradient method in Section~\ref{section-inexact-oracle}.
%Define, as before, the global average
%$\overline{x}(k):=\frac{1}{N}\sum_{i=1}^N x_i(k)$,
% as well as the disagreements
% $\widetilde{x}_i(k):=x_i(k)-\overline{x}(k)$
%  and $\widetilde{x}(k):=(\widetilde{x}_1(k),\cdots ,\widetilde{x}_N(k))^\top$.
%  (We define analogously the quantities $\overline{y}(k)$, $\widetilde{y}_i(k)$, and $\widetilde{y}(k)$.)
%
%
%It is easy to show
%  that the global averages $\overline{x}(k)$, $\overline{y}(k)$
%   follow the \emph{same} equations in~\eqref{eqn-our-alg-CM}--\eqref{eqn-our-alg-CM-2}
%    as with the original D--NG method.
%Also, Lemmas~\ref{lemma-relate-inexact-oracle-with-our-alg}
%and~\ref{lemma-progress-one-iteration} continue to
%hold. The difference with respect to the original D--NG algorithm is in the
%disagreements $\widetilde{x}(k)$ and $\widetilde{y}(k)$; as we
%will see, they follow a different dynamics than the disagreements
%with the original D--NG. We note that
%the quantities $x(k),y(k),\widetilde{x}(k),\widetilde{y}(k)$
% are now random, due to the assumed random weight matrices $W(k)$.
% For the analysis of the expected optimality
% gap at each node
% $\mathbb E\left[ f(x_i(k)) - f^\star\right]$, we will
%  need to upper bound first and second moments
%  of the disagreements $\|\widetilde{x}(k)\|$ and $\|\widetilde{y}(k)\|$.
%
%
%
%
%
%
%
%
\subsection{Convergence rate of mD--NG}
\label{subsection-conv-rate-statement-mD-NG}
We state our convergence rate for mD--NG operating in random networks.
%
% Subsection~\ref{subsection-D-NG-modified-statements-of-results}
% considers the modified D--NG algorithm,
% while Subsection~\ref{subsection-D-NC-modified-statements-of-results}
% considers the modified D--NC algorithm.
 Proofs are in Section~\ref{section-proofs-mD-NG}.
We estimate the expected optimality gap in the cost
  at each node~$i$ normalized by~$N$, e.g., \cite{duchi,Rabbat-Consensus-Dual-Avg}: $\frac{1}{N}\mathbb E \left[f(x_i)-f^\star\right]$, where $x_i$ is node $i$'s solution at a certain stage of the algorithm. We study how node~$i$'s optimality gap decreases with: 1) the number~$k$ of iterations, or of per-node gradient evaluations;
and 2) the total number~$\mathcal{K}$ of $2\,d$-dimensional vector communications per node.
%With both methods, the number of outer iterations~$k$ equals the number of gradient evaluations per node.
With mD--NG, $k=\mathcal{K}$--at each~$k$, there is one and only one per-node $2\,d$-dimensional communication
and one per-node gradient evaluation. Not so with mD--NC, as we will see.
%We refer to~$\mathcal{K}$ as the number of communication rounds.
%
We establish for both methods the mean square convergence rate on the mean square disagreements of different node estimates in terms of~$k$ and~$\mathcal K$,
 showing that it converges to zero.

%\textbf{Algorithm D--NG}. We establish two types of results:~1)
% the mean square convergence decay rate of nodes' disagreements (how far apart
% are different nodes' estimates);
% and~2) the convergence rate for the expected normalized
% optimality gap~$\frac{1}{N}
%  \mathbb E \left[  f(x_i(k)) - f^\star\right]$
%   at all nodes~$i$.

Denote by: the network-wide global averages of the nodes' estimates be $\overline{x}(k):=\frac{1}{N}\sum_{i=1}^N x_i(k)$ and $\overline{y}(k):=\frac{1}{N}\sum_{i=1}^N y_i(k)$; the disagreements:~$\widetilde{x}_i(k)=x_i(k)-\overline{x}(k)$ and $\widetilde{x}(k)=\left(\widetilde{x}_1(k)^\top,\cdots ,\widetilde{x}_N(k)^\top\right)^\top$,
  and analogously for $\widetilde y_i(k)$ and~$\widetilde y(k)$; and $\widetilde{z}(k):=\left( \widetilde{y}(k)^\top,\,\widetilde{x}(k)^\top\right)^\top$.
We have the following Theorem on $\mathbb E \left[\|\widetilde{z}(k)\|\right]$
 and $\mathbb E \left[\|\widetilde{z}(k)\|^2\right]$.
 Note $\|\widetilde{x}(k)\| \leq \|\widetilde{z}(k)\|$, and so $\mathbb E \left[\|\widetilde{x}(k)\|\right] \leq \mathbb E \left[\|\widetilde{z}(k)\|\right]$
  and $\mathbb E \left[\|\widetilde{x}(k)\|^2\right] \leq \mathbb E \left[\|\widetilde{z}(k)\|^2\right]$.
 (Equivalent inequalities hold for $\widetilde{y}(k)$.)
  Recall also ${\overline{\mu}}$ in Lemma~\ref{lemma-decay-moments}. Theorem~\ref{theorem-consensus-D-NG-modified} states that
the mean square disagreement of different nodes' estimates converges to zero at rate~$1/k^2.$
\begin{theorem}
\label{theorem-consensus-D-NG-modified}
Consider mD--NG \eqref{eqn-D-NG-modified}--\eqref{eqn-beta-k-link-failures} under Assumptions
  \ref{assumption-random-weight-matrices}--\ref{assumption-bdd-gradients}.
Then, for all $k=1,2,\cdots $
\begin{align}
\label{eqn-theorem-norm-z-k}
{\mathbb E}\left[ \|\widetilde{z}(k)\|\right] &\leq \frac{50\,c\,N^{3/2}\,G}{(1-{\overline{\mu}})^2}\, \frac{1}{k}\\
\label{eqn-theorem-norm-z-k-squared}
{\mathbb E}\left[ \|\widetilde{z}(k)\|^2\right] &\leq  \frac{ 50^2\,c^2\,N^4 G^2} {(1-{\overline{\mu}})^4\, k^2}.
\end{align}
%
%
%where ${\overline{\mu}},{\overline{\mu}}$ are given in Lemma~\ref{lemma-decay-moments}.
\end{theorem}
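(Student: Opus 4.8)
The plan is to write the disagreement vector $\widetilde{z}(k)$ as an explicit sum of past gradient injections, each multiplied by a \emph{deterministic} Nesterov-momentum matrix and a \emph{random} product of the matrices $\widetilde{W}(\cdot)$, and then to bound this sum using Lemma~\ref{lemma-decay-moments}. First I would derive a linear recursion for $\widetilde{z}(k)$; then unroll it; then bound the momentum-matrix products and the consensus products separately; and finally sum.

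\emph{Recursion for the disagreement.} Left-multiplying the vector form \eqref{eqn-modified-D-NG-vector-form}--\eqref{eqn-modified-D-NG-vector-form-2} by $(I-J)\otimes I$ and using that each $W(k)$ is a.s.\ doubly stochastic (hence $JW(k)=W(k)J=J$, so $(I-J)W(k)=\widetilde{W}(k)$ and $(\widetilde{W}(k)\otimes I)(J\otimes I)=0$), the global averages cancel and $\widetilde{z}(k)=(\widetilde{y}(k)^\top,\widetilde{x}(k)^\top)^\top$ obeys
\[
\widetilde{z}(k)=\bigl(B(k)\otimes\widetilde{W}(k)\otimes I\bigr)\widetilde{z}(k-1)+b(k-1),\qquad
B(k)=\begin{pmatrix}1+\beta_{k-1}&-\beta_{k-1}\\ 1&0\end{pmatrix},
\]
where $b(k-1)$ collects the projected gradients $((I-J)\otimes I)\nabla F(y(k-1))$, so that Assumption~\ref{assumption-bdd-gradients} gives the a.s.\ bound $\|b(k-1)\|\le\sqrt5\,\alpha_{k-1}\sqrt{N}G$ (using $\beta_{k-1}\in[0,1)$). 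Since the common initialization gives $\widetilde{z}(0)=0$, unrolling and the Kronecker mixed-product rule yield
\[
\widetilde{z}(k)=\sum_{j=0}^{k-1}\bigl(M_{j,k}\otimes\widetilde{\Phi}(k,j)\otimes I\bigr)b(j),\qquad M_{j,k}:=B(k)B(k-1)\cdots B(j+2),
\]
with $\widetilde{\Phi}(k,j)$ as in \eqref{eqn-tilde-phi-modified-D-NG} and empty products equal to $I$.

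\emph{Main obstacle: the momentum-matrix products.} Each $B(t)$ has eigenvalues $1$ and $\beta_{t-1}<1$, with the common eigenvector $(1,1)^\top$ at $1$ but a $t$-dependent second eigenvector, so $M_{j,k}$ is not uniformly bounded; what I need is the \emph{two-sided} estimate $\|M_{j,k}\|\le C_B\min\{j+1,\,k-j\}$ for an absolute constant $C_B$. This is the new matrix-product bound the paper refers to. To get it, look at the homogeneous scalar recursion $p_t-p_{t-1}=\beta_{t-1}(p_{t-1}-p_{t-2})$ behind $B(\cdot)$: the velocity $\delta_t:=p_t-p_{t-1}$ satisfies $\delta_t=\bigl(\prod_{\ell=j+1}^{t-1}\beta_\ell\bigr)\delta_{j+1}$ and $p_k=p_j+\sum_{t=j+1}^{k}\delta_t$. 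Bounding the partial products by $1$ gives at most $k-j$ summands, each $\le|\delta_{j+1}|$; alternatively the telescoping identity $\prod_{\ell=j+1}^{t-1}\beta_\ell=\frac{(j+1)(j+2)(j+3)}{t(t+1)(t+2)}$ and $\sum_{t\ge j+2}\frac1{t(t+1)(t+2)}=\frac1{2(j+2)(j+3)}$ bound the summed tail by $\tfrac{j+1}2$. Applied to a unit initial pair $(p_{j+1},p_j)$, these give the claimed bound.

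\emph{Putting it together.} For the first moment, take norms in the expansion, pull out the a.s.\ bound on $\|b(j)\|$, and use \eqref{eqn-theorem-phi} (with $\widetilde{\Phi}(k,j)$ a product of $k-j-1$ factors): $\mathbb E[\|\widetilde{z}(k)\|]\le\sqrt5\,N^{3/2}G\,T_k$ with $T_k:=\sum_{j=0}^{k-1}\|M_{j,k}\|\,\alpha_j\,\overline{\mu}^{\,k-j-1}$. Since $\alpha_j=c/(j+1)$, $\|M_{j,k}\|\alpha_j\le C_Bc\min\{1,(k-j)/(j+1)\}$; splitting $T_k$ at $j\approx k/2$, the part $j\ge k/2$ uses $\|M_{j,k}\|\alpha_j\le 2C_Bc(k-j)/k$ and $\sum_{m\ge1}m\,\overline{\mu}^{\,m-1}=(1-\overline{\mu})^{-2}$, and the part $j<k/2$ uses $\|M_{j,k}\|\alpha_j\le C_Bc$ with $\overline{\mu}^{\,k-j-1}\le\overline{\mu}^{\,k/2-1}$ and the elementary bound $\overline{\mu}^{\,n}\le\bigl(e(1-\overline{\mu})n\bigr)^{-1}$; both give $O\!\bigl(c/((1-\overline{\mu})^2k)\bigr)$, so $T_k\le C_1c\bigl((1-\overline{\mu})^2k\bigr)^{-1}$ and \eqref{eqn-theorem-norm-z-k} follows after tracking constants (small $k$ are absorbed in the constant). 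For the second moment, expand $\|\widetilde{z}(k)\|^2$ as $\sum_{j,j'}\langle(M_{j,k}\otimes\widetilde{\Phi}(k,j)\otimes I)b(j),(M_{j',k}\otimes\widetilde{\Phi}(k,j')\otimes I)b(j')\rangle$, bound each term by $\|b(j)\|\|b(j')\|\|M_{j,k}\|\|M_{j',k}\|\,\|\widetilde{\Phi}(k,j)^\top\widetilde{\Phi}(k,j')\|$, use $\|b(j)\|\|b(j')\|\le5\alpha_j\alpha_{j'}NG^2$ and \eqref{eqn-theorem-phi-3}, to get $\mathbb E[\|\widetilde{z}(k)\|^2]\le5N^4G^2T_k^2$; the bound on $T_k$ then gives \eqref{eqn-theorem-norm-z-k-squared}.
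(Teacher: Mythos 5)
Your proposal is correct and shares the paper's overall architecture — the disagreement recursion $\widetilde{z}(k)=(B(k)\otimes\widetilde{W}(k))\widetilde{z}(k-1)+u(k-1)$, its unrolling into terms $(B(k)\cdots B(j+2))\otimes\widetilde{\Phi}(k,j)$ acting on the projected gradient injections, the a.s.\ bound on those injections, and the moment bounds of Lemma~\ref{lemma-decay-moments} — but it replaces the paper's key matrix-product lemma with a genuinely different argument. The paper (Lemmas~\ref{lemma-B-k-t-modified-D-NG}--\ref{lemma-upper-bound-on-norm-B-k-t}) writes $\mathcal{B}(k,t)=B_1^{t+1}-\sigma_2(k,t)B_2-\sigma_3(k,t)B_3$ and bounds $\sigma_2,\sigma_3$ by induction to get $\|\mathcal{B}(k,k-t-2)\|\le 8(k-t-1)(t+1)/k+5$; you instead exploit the scalar Nesterov recursion behind $B(\cdot)$, the closed form $\prod_{\ell=j+1}^{t-1}\beta_\ell=\frac{(j+1)(j+2)(j+3)}{t(t+1)(t+2)}$, and the telescoping tail sum to get $\|M_{j,k}\|\le C_B\min\{j+1,k-j\}$. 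The two bounds are equivalent up to a factor of $2$ (since $\tfrac12\min\{t+1,k-t-1\}\le\frac{(k-t-1)(t+1)}{k}\le\min\{t+1,k-t-1\}$), so either feeds the same final summation; your velocity/telescoping derivation is arguably more transparent and avoids the two inductions, while the paper's version hands you an explicit numerical constant that it then pushes through Lemma~\ref{lemma-bounds-on-sums} (your split at $j\approx k/2$ plays the role of the paper's trick $k/(t+1)\le k-t$). Two small points: your a.s.\ bound $\|b(j)\|\le\sqrt{5}\,\alpha_j\sqrt{N}G$ is the correct one (the paper's $\sqrt{3}$ only holds for small $\beta$, a harmless slip on its part); and since the theorem asserts the specific constants $50$ and $50^2$, you do still owe the reader the arithmetic showing that $\sqrt{5}\,C_1\le 50$ with your $C_B$ and $C_1$ — a rough count suggests it goes through with room to spare, but it is the one piece of bookkeeping your outline defers.
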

%
%

%\textbf{D--NG: Convergence rate for the optimality gap}.
 Theorem~\ref{theorem-convergence-rate-D-NG-modified} establishes the convergence rate
 of mD--NG as~$O(\log k/k)$ (and $O(\log \mathcal K/\mathcal K)$).
\begin{theorem}
\label{theorem-convergence-rate-D-NG-modified}
Consider mD--NG \eqref{eqn-D-NG-modified}--\eqref{eqn-beta-k-link-failures} under Assumptions~\ref{assumption-random-weight-matrices}--\ref{assumption-bdd-gradients}.
 Let $\|\overline{x}(0)-{{x^\star}}\| \leq R$, $R \geq 0$. Then, at any node $i$,
 the expected normalized optimality gap
 $\frac{1}{N}\mathbb E \left[  f(x_i(k))-f^\star\right]$ is $O(\log k/k)$; more precisely:
\begin{align}
\label{eqn-theorem-D-NG-rate-1}
&
\hspace{1cm}
\frac{\mathbb E\left[ f(x_i(k))-f^\star \right]}{N}
 \leq \frac{2\, R^2}{c} \,\frac{1}{k}\\
 &
 \nonumber
 + \frac{50^2 \,c^2\, N^3\,L \,G^2}{(1-{\overline{\mu}})^4}\, \frac{1}{k} \sum_{t=1}^{k-1} \frac{(t+2)^2}{(t+1)t^2} +
\, \frac{50\,N^2\,c\, G^2}{(1-{\overline{\mu}})^2}\, \frac{1}{k} .
%  \\
%% \begin{eqnarray}
%% \label{eqn-opt-gap-for-scaling}
%% \frac{1}{N} \left( f(x_i(k))-f^\star\right)
% &\leq&\hspace{-2mm}
% \label{eqn-theorem-D-NG-rate-constant}
%\mathcal{C} \left( \frac{1}{k}\sum_{t=1}^{k} \frac{(t+2)^2}{(t+1)t^2}\right),\:\:\:\:
%\mathcal{C} =  \frac{2R^2}{c} + \frac{46,128 \,c^2 L \,G^2}{(1-{\overline{\mu}})^4} +  \frac{54\,c \,G^2}{(1-{\overline{\mu}})^2}.
%\label{eqn-constant-scaling}
 \end{align}
\end{theorem}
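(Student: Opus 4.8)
The plan is to reduce the distributed iteration to the centralized Nesterov method acting on the global averages $\overline{x}(k)$, plus an error term controlled by the disagreements $\widetilde{z}(k)$, which Theorem~\ref{theorem-consensus-D-NG-modified} already bounds. First I would average \eqref{eqn-modified-D-NG-vector-form}--\eqref{eqn-modified-D-NG-vector-form-2} over the nodes: since each $W(k)$ is stochastic, $(1/N)\mathbf 1^\top (W(k)\otimes I) = (1/N)\mathbf 1^\top$, so the consensus mixing drops out and $\overline{x}(k),\overline{y}(k)$ obey exactly the centralized Nesterov recursion driven by the \emph{perturbed} gradient $g(k-1):=\frac1N\sum_i \nabla f_i(y_i(k-1))$. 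Writing $g(k-1)=\nabla f(\overline{y}(k-1))/N + r(k-1)$, Lipschitz continuity (Assumption~\ref{assumption-f-i-s-b}) gives $\|r(k-1)\|\le \frac{L}{N}\sum_i\|y_i(k-1)-\overline{y}(k-1)\| \le \frac{L}{\sqrt N}\|\widetilde{y}(k-1)\| \le \frac{L}{\sqrt N}\|\widetilde z(k-1)\|$.

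Next I would invoke the standard inexact-Nesterov estimate-sequence analysis (as in \cite{Nesterov-Gradient} and the static-network counterpart in \cite{arxivVersion}): with step-size $\alpha_{k-1}=c/k$, $c\le 1/(2L)$, and momentum $\beta_k=k/(k+3)$, one obtains a bound of the form
\begin{equation}
\nonumber
\frac{f(\overline{x}(k))-f^\star}{N} \le \frac{2R^2}{c}\,\frac1k \;+\; \frac{C}{k}\sum_{t=1}^{k-1} \alpha_{t-1}\,\|r(t-1)\|^2 \;+\; \frac{C'}{k}\sum_{t=1}^{k-1}\|r(t-1)\|\,\big(\text{bounded terms}\big),
\end{equation}
where the error contributions enter either quadratically (paired with the descent/Lipschitz term, hence a factor $\alpha_{t-1}L$) or linearly (paired with the $O(1)$ bound $\|\nabla f\|\le NG$ and the gap between $\overline{y}$ and $\overline{x}$). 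Taking expectations and substituting $\mathbb E\|r(t-1)\|^2 \le \frac{L^2}{N}\mathbb E\|\widetilde z(t-1)\|^2 \le \frac{L^2}{N}\cdot \frac{50^2 c^2 N^4 G^2}{(1-\overline\mu)^4 (t-1)^2}$ from \eqref{eqn-theorem-norm-z-k-squared}, and likewise $\mathbb E\|r(t-1)\|\le \frac{50 c N^{3/2}G L}{(1-\overline\mu)^2 (t-1)}$ from \eqref{eqn-theorem-norm-z-k}, produces exactly the three terms in \eqref{eqn-theorem-D-NG-rate-1}: the $2R^2/(c k)$ initialization term, the $\frac{1}{k}\sum_t \frac{(t+2)^2}{(t+1)t^2}$ sum (the $(t+2)^2/[(t+1)t^2]$ shape comes from $\alpha_{t-1}\cdot\beta_{t-1}^2/(t-1)^2$-type bookkeeping in the momentum weights), and the residual linear-error term $\frac{50 N^2 c G^2}{(1-\overline\mu)^2}\frac1k$. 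Finally, to pass from the average $\overline{x}(k)$ to an arbitrary node, I would write $f(x_i(k))-f^\star = [f(x_i(k))-f(\overline x(k))] + [f(\overline x(k))-f^\star]$ and bound the first bracket using $\|\nabla f\|\le NG$ and $\|x_i(k)-\overline x(k)\|\le \|\widetilde z(k)\| = O(1/k)$, which is absorbed into the stated constants.

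The main obstacle is the inexact-Nesterov bookkeeping: one must carry the gradient-perturbation $r(k-1)$ through the estimate-sequence (or Lyapunov) argument and verify that its contribution splits cleanly into a term that pairs with the descent inequality (gaining the extra $\alpha_{t-1}$, needed so the $\sum 1/t^2$-type series from \eqref{eqn-theorem-norm-z-k-squared} converges and yields $O(1/k)$) and a lower-order linear term. Getting the explicit constant "$50$" and the precise summand $(t+2)^2/[(t+1)t^2]$ requires care with the exact momentum coefficients $\beta_{t-1}$ and with telescoping; but this is the same machinery as in \cite{arxivVersion}, the only new input being that all deterministic disagreement bounds there are now replaced by their \emph{expected} counterparts from Theorem~\ref{theorem-consensus-D-NG-modified}, which is legitimate because the perturbation enters the centralized recursion linearly up to the one quadratic term handled by convexity of the square under expectation.
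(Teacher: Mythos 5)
Your high-level architecture matches the paper's: average the iteration (stochasticity of $W(k)$ kills the mixing), view $\overline{x}(k),\overline{y}(k)$ as an inexact centralized Nesterov recursion, plug in the disagreement bounds of Theorem~\ref{theorem-consensus-D-NG-modified}, and transfer from $\overline{x}(k)$ to $x_i(k)$ via the bounded-gradient assumption. However, there is a genuine gap in how you handle the inexactness, and it sits exactly at what you call ``the main obstacle.'' You anchor the analysis at $\nabla f(\overline{y}(k-1))/N$ and treat $r(k-1):=\widehat{g}_{k-1}/N-\nabla f(\overline{y}(k-1))/N$ as a gradient perturbation, which leaves a \emph{first-order} error term in the estimate-sequence recursion. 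In accelerated schemes that term appears as $a_t\langle r(t-1),\,x^\star-\overline{v}(t-1)\rangle$ with weights $a_t\sim t$; with $\mathbb E\|r(t-1)\|=O(1/t)$ from \eqref{eqn-theorem-norm-z-k} this is only controllable if $\|\overline{v}(t)-x^\star\|$ stays bounded, which is not established (under Assumption~\ref{assumption-bdd-gradients} one only gets $\|\overline{v}(t)\|=O(t)$, so the accumulated linear error is $O(k^2)$ and does not vanish after dividing by $A_k\sim k^2$). Your sketch asserts this term ``splits cleanly'' and yields the third summand of \eqref{eqn-theorem-D-NG-rate-1}, but does not show how.

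The paper's proof avoids this entirely by a different choice of surrogate: it uses the inexact first-order oracle $(\widehat f_{k-1},\widehat g_{k-1})$ of \eqref{eqn-inexact-oracle-info}, where $\widehat f_{k-1}$ is the sum of the linearizations of the $f_i$ at $y_i(k-1)$ evaluated at $\overline{y}(k-1)$. By convexity the lower bound in \eqref{eqn-inexact-oracle-relations} then holds \emph{exactly} (no first-order error at all), and the only error is the second-order term $\delta_{k-1}=L\|\widetilde y(k-1)\|^2$ in the upper bound. Consequently the unwound recursion \eqref{eqn-star-star-proof-random-trian} contains only the initialization term and the quadratic sum $\frac{L}{Nk}\sum_t\frac{(t+1)^2}{t}\|\widetilde y(t-1)\|^2$, which converges after inserting \eqref{eqn-theorem-norm-z-k-squared}; the third term of \eqref{eqn-theorem-D-NG-rate-1} comes solely from the single final-iterate comparison \eqref{eqn-star-star-proof-random-trian-trian}, $\frac1N(f(x_i(k))-f(\overline x(k)))\le \frac{G}{\sqrt N}\|\widetilde x(k)\|$, not from an accumulated linear-error sum as your sketch suggests. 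To repair your argument you would either need to import this oracle construction (at which point you recover the paper's proof) or separately prove boundedness of $\|\overline v(t)-x^\star\|$, which is not available here.
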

For static networks, $W(k) \equiv W$ and
${\overline{\mu}}:=\|W-J\|$, the factors $N$ and $N^3$
 above reduce to one. Theorem~\ref{theorem-convergence-rate-D-NG-modified} is
similar to Theorem~5~(a) in~\cite{arxivVersion}.
For static networks, the rate constant in Theorem~\ref{theorem-convergence-rate-D-NG-modified}
depends on ${\overline{\mu}}$ as $O\left(\frac{1}{(1-{\overline{\mu}})^4}\right)$; with Theorem~5~(a) in~\cite{arxivVersion},
it is~$O(\frac{1}{(1-{\overline{\mu}})^{3+\xi}})$, for~$\xi>$ arbitrarily
small. Hence, mD--NG  has worse theoretical constant,
 but is robust to link failures. (See Section~\ref{section-simulation-example}.)

%We now comment on Theorem~\ref{theorem-convergence-rate-D-NG-modified}.*******

%
\section{Intermediate results}
\label{section-intermediate-results}
We establishe intermediate results on certain scalar sums and the products of time-varying $2 \times 2$
 matrices that arise in the analysis of mD--NG and mD--NC.
%
%
%Subsection~\ref{subsection-intermediate}
%analyzes products of certain time-varying
%$2 \times 2$ matrices.
%These products play an important role in
% the disagreement dynamics of mD--NG and mD--NC.
% Subsections~\ref{subsection-proof-consensus-D-NG} and~\ref{subsection-proof-consensus-D-NC-random-net}
%  use the results in~\ref{subsection-intermediate} to prove
%  Theorems~\ref{theorem-consensus-D-NG-modified} and~\ref{theorem-consensus-D-NC-modified}, respectively.
%
%
%\subsection{Intermediate results}
%\label{subsection-intermediate}
%
%
%We consider certain (deterministic) scalar sums
%and products of~$2 \times 2$
%matrices $B(k)$ indexed by iteration counter~$k$.

\textbf{Scalar sums}. We have the following Lemma.
 \begin{lemma}
 \label{lemma-bounds-on-sums}
 Let $0<r <1$. Then, for all $k=1,2,\cdots $
 \begin{align}
 \label{eqn-prva-suma}
 \sum_{t=1}^{k} r^{t}\,t &\leq \frac{r}{(1-r)^2} \leq \frac{1}{(1-r)^2}\\
 \label{eqn-druga-suma}
 \sum_{t=0}^{k-1}r^{k-t-1}\frac{1}{t+1} &\leq \frac{1}{(1-r)^2\,k}.
 \end{align}
 \end{lemma}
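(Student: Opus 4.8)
The plan is to treat the two bounds separately, reducing each to the elementary geometric-series facts $\sum_{t=0}^{\infty} r^t = 1/(1-r)$ and its term-by-term derivative $\sum_{t=1}^{\infty} t\,r^{t-1} = 1/(1-r)^2$, both valid for $0<r<1$ by nonnegativity and absolute convergence.

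For~\eqref{eqn-prva-suma}, since every summand is nonnegative, the finite sum is dominated by the infinite one:
\[
\sum_{t=1}^{k} t\,r^{t} \;\le\; \sum_{t=1}^{\infty} t\,r^{t} \;=\; r\sum_{t=1}^{\infty} t\,r^{t-1} \;=\; \frac{r}{(1-r)^2}.
\]
The second inequality in~\eqref{eqn-prva-suma} is immediate: $0<r<1$ gives $r/(1-r)^2 \le 1/(1-r)^2$.

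For~\eqref{eqn-druga-suma}, the one nonmechanical step is the pointwise estimate
\[
\frac{1}{t+1} \;\le\; \frac{k-t}{k}, \qquad 0 \le t \le k-1,
\]
which rearranges to $k \le (k-t)(t+1)$, i.e. $0 \le t(k-1-t)$, and this holds on the stated range (with equality at $t=0$). Plugging this into the sum and then re-indexing by $u = k-t$, which sweeps exactly over $u=1,\dots,k$, yields
\[
\sum_{t=0}^{k-1} r^{k-t-1}\frac{1}{t+1} \;\le\; \frac{1}{k}\sum_{t=0}^{k-1} r^{k-t-1}(k-t) \;=\; \frac{1}{k}\sum_{u=1}^{k} u\,r^{u-1} \;\le\; \frac{1}{k}\sum_{u=1}^{\infty} u\,r^{u-1} \;=\; \frac{1}{(1-r)^2\,k},
\]
again using nonnegativity to pass to the infinite sum and then the derivative identity above.

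I do not expect a genuine obstacle here; the only thing requiring a moment's thought is discovering the inequality $1/(t+1)\le (k-t)/k$, after which the rest is bookkeeping. The minor points to verify carefully are that the substitution $u=k-t$ reproduces precisely the index range $1,\dots,k$ and that the endpoint $t=0$ (where the pointwise bound is tight) is included correctly, which it is.
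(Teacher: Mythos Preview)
Your proof is correct and follows essentially the same approach as the paper: for~\eqref{eqn-druga-suma} both arguments hinge on the identical pointwise estimate $k/(t+1)\le k-t$ (equivalently $1/(t+1)\le (k-t)/k$), after which the sum reduces to $\sum u\,r^{u-1}$. The only cosmetic difference is that for~\eqref{eqn-prva-suma} the paper differentiates the finite geometric sum and bounds the resulting closed-form expression, whereas you pass directly to the infinite series; both routes invoke the same derivative identity and yield the same bound.
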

\begin{proof} Let $\frac{d}{dr} h(r)$ be the derivative of $h(r)$. Then~\eqref{eqn-prva-suma} follows from:
{
\allowdisplaybreaks
\begin{align*}
\sum_{t=1}^k r^t\,t &= r\,\sum_{t=1}^k r^{t-1}t = r \, \frac{d}{d r} \left( \sum_{t=1}^k r^t\right) \\
&
= r \, \frac{d}{d r} \left(\, \frac{r-r^{k+1}}{1-r} \right)\\
&
= \frac{r\left(1-(k+1)r^{k}(1-r) -r^{k+1} \right)}{(1-r)^2} \\
&\leq \frac{r}{(1-r)^2},\:\:\:\:\forall k=1,2,\cdots
\end{align*}%
}
%Consider the term $r^t\,t$ for a fixed $t=1,\cdots ,k.$
% There holds:
% \begin{eqnarray*}
% r^t\,t = 2\,r^{t/2} \,r^{t/2} \frac{t}{2}
%        \leq 2\,r^{t/2}\,\left(\sup_{z \geq 0} r^{z} z\right)
%        \leq 2\,r^{t/2}\, \frac{1}{e(-\log r)},
% \end{eqnarray*}
%%
%where the last inequality follows by evaluating
%$\left(\sup_{z \geq 0} r^{z} z\right) = \frac{1}{e(-\log r)}$.
% Further, using the inequality $1/(- \log r) \leq 1/(1-r)$, $r \in [0,1)$,
% we obtain:
% \[
% r^t\,t \leq \frac{1}{1-r}\,r^{t/2},\:\:t=1,\cdots ,k.
% \]
%Applying the latter to~\eqref{eqn-prva-suma}, we get:
%%
%
%%
%%
%\begin{eqnarray*}
%\sum_{t=1}^k r^t\,t \leq \frac{1}{1-r}\,\sum_{t=1}^k (\sqrt{r})^t
%                    \leq \frac{1}{1-r} \frac{1}{1-\sqrt{r}} \leq \frac{2}{(1-r)^2},
%\end{eqnarray*}
%%
%%
%%
%where the last inequality uses $ \frac{1}{1-\sqrt{r}} \leq  \frac{2}{1-r}$.
% Thus, the bound in~\eqref{eqn-prva-suma}.

 To obtain~\eqref{eqn-druga-suma}, use~\eqref{eqn-prva-suma} and $k/(t+1) \leq k-t$, $\forall t=0,1,\cdots ,k-1$:
\begin{align*}
\sum_{t=0}^{k-1}r^{k-t-1}\frac{1}{t+1} &= \frac{1}{k}\,\sum_{t=0}^{k-1}r^{k-t-1}\frac{k}{t+1}\\
&\leq \frac{1}{k\,r}\,\sum_{t=0}^{k-1}r^{k-t}(k-t)
\leq \frac{1}{k}\,\frac{1}{(1-r)^2}.
\end{align*}
%
%Applying the bound in~\eqref{eqn-prva-suma} to the last equation,
%and using $\sum_{t=1}^{k-1} r^t \leq \frac{1}{1-r} \leq \frac{1}{(1-r)^2}$,
%the bound in~\eqref{eqn-druga-suma} follows.
\end{proof}

\textbf{Products of matrices}.
For $k=1,2,\cdots $, let~$B(k)$ be:
\begin{equation}
\label{eqn-B-k-definition}
B(k) := \left[ \begin{array}{cc}
(1+\beta_{k-1}) & -\beta_{k-1}  \\
1 & 0
 \end{array} \right],
 \end{equation}
with~$\beta_{k-1}$ in~\eqref{eqn-beta-k-link-failures}.
 The proofs of Theorems~\ref{theorem-consensus-D-NG-modified} and~\ref{theorem-consensus-D-NC-modified}
rely on products $\mathcal{B}(k,t)$. Let~$\mathcal{B}(k,-1):=I$ and:
\begin{equation}
\label{eqn-B-products}
\mathcal{B}(k,t)\!:= \!B(k)\!\cdots\! B(k-t), t=0,1,\!\cdots\!,k-2,
\end{equation}
%
%
%
%
%
%
%We first give a Lemma that
%explicitly calculates the product $\mathcal{B}(k,t)$, $t=1,2,\cdots ,k-2$, for $k \geq 3$.
%
%
%
%
\begin{lemma}[Products $\mathcal{B}(k,t)$]
\label{lemma-B-k-t-modified-D-NG}
Let $k \geq 3$, $\mathcal{B}(k,t)$ in~\eqref{eqn-B-products}, $a_t:=3/(t+3)$, $t=0,1,\cdots $, and:
{
\small{
\begin{equation}
\label{eqn-sigma-u-delta}
B_1\!\!:=\!\!\left[\! \begin{array}{cc}
2 & -1  \\
1 & \phantom{-}0
 \end{array}\! \right]\!,B_2\!\!:=\!\!\left[\! \begin{array}{cc}
1 & -1  \\
1 & -1
 \end{array}\! \right]\!,B_3\!\!:=\!\!\left[ \!\begin{array}{cc}
1 & -1  \\
0 & \phantom{-}0
 \end{array}\! \right].
\end{equation}
}
}
Then, for~$t=1,2,\cdots ,k-2$:
\[
\mathcal{B}(k,t) = B_1^{t+1} - \sigma_2(k,t) \,B_2 - \sigma_3(k,t)\,B_3,
\]
where
\begin{align}
\label{eqn-sigma-2-sigma-3}
&\sigma_2(k,t) = a_{k-t-1}\,t+a_{k-t}\beta_{k-t-1}\,(t-1)\\
&
\nonumber
\!+\!a_{k-t+1}\beta_{k-t}\beta_{k-t-1}\!(t-2)\!+\!\cdots\!+a_{k-2}\beta_{k-3}\!\cdots\!\beta_{k-t-1}\\
\label{eqn-sigma-2-sigma-3-2}
&\sigma_3(k,t) = a_{k-t-1} +a_{k-t}\beta_{k-t-1} \\
&
\nonumber
+a_{k-t+1}\beta_{k-t}\beta_{k-t-1} +\cdots +a_{k-2}\beta_{k-3}\cdots \beta_{k-t-1}.
\end{align}
\end{lemma}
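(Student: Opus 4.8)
The plan is to reduce this matrix identity to an explicit $2\times2$ computation by exploiting the algebraic structure of the factors $B(k)$. The starting point is the reparametrization: since $\beta_{k-1}=\frac{k-1}{k+2}=1-\frac{3}{k+2}=1-a_{k-1}$, we have $1+\beta_{k-1}=2-a_{k-1}$ and $-\beta_{k-1}=a_{k-1}-1$, so
\[
B(k)=B_1-a_{k-1}B_3,
\]
i.e.\ every factor of $\mathcal B(k,t)$ is the fixed matrix $B_1$ perturbed by a multiple of the fixed matrix $B_3$. I would then record, by direct multiplication, the elementary relations $B_1=I+B_2$ with $B_2^2=0$ (hence $B_1^n=I+nB_2$ and $B_2B_1=B_2$), $B_3^2=B_3$, $B_3B_2=0$, $B_2B_3=B_2$, and consequently $B_3B_1^n=B_3$ and $B_1^nB_3=B_3+nB_2$. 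The crucial point is that $B_3$ absorbs every power of $B_1$ applied on its right, so that products containing two or more $B_3$'s collapse to a single $B_3$-term.

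With these relations in hand I would argue by induction on $t$, peeling the rightmost factor: $\mathcal B(k,t)=\mathcal B(k,t-1)\,B(k-t)$. Substituting the inductive hypothesis $\mathcal B(k,t-1)=B_1^{t}-\sigma_2(k,t-1)B_2-\sigma_3(k,t-1)B_3$ together with $B(k-t)=B_1-a_{k-t-1}B_3$ and simplifying with the relations above (every cross term lands in $\mathrm{span}\{B_2,B_3\}$, and $B_1^{t}B_3=B_3+tB_2$), the $B_1^{t+1}$ term is left untouched and the coefficients obey the first-order recursions
\[
\sigma_3(k,t)=\beta_{k-t-1}\,\sigma_3(k,t-1)+a_{k-t-1},\qquad
\sigma_2(k,t)=\beta_{k-t-1}\,\sigma_2(k,t-1)+t\,a_{k-t-1},
\]
with base values $\sigma_2(k,0)=0$ and $\sigma_3(k,0)=a_{k-1}$ (equivalently, one checks $\mathcal B(k,1)=B(k)B(k-1)$ by hand). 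It then remains only to verify that the closed forms \eqref{eqn-sigma-2-sigma-3}--\eqref{eqn-sigma-2-sigma-3-2} solve these recursions. Unrolling the $\sigma_3$-recursion gives the compact identity $\sigma_3(k,t)=1-\prod_{j=0}^{t}\beta_{k-1-j}$, and its stated expansion is exactly the telescoping identity $1-\prod_j x_j=\sum_l(1-x_l)\prod_{j>l}x_j$ applied with $x_j=\beta_{k-1-j}$ and $1-x_l=a_{k-1-l}$; the $\sigma_2$ expansion follows by applying the same telescoping once inside $(t+1)-\sum_{j=0}^{t}\prod_{i=j+1}^{t}\beta_{k-1-i}$ and interchanging the order of the resulting double sum, which is precisely what produces the decreasing coefficients $t,t-1,\dots,1$.

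A slicker derivation of the two compact forms, if one prefers, is to conjugate by $P=\left[\begin{smallmatrix}1&1\\1&0\end{smallmatrix}\right]$, which sends $B_1\mapsto\left[\begin{smallmatrix}1&1\\0&1\end{smallmatrix}\right]$, $B_2\mapsto\left[\begin{smallmatrix}0&1\\0&0\end{smallmatrix}\right]$, $B_3\mapsto\left[\begin{smallmatrix}0&0\\0&1\end{smallmatrix}\right]$, so that $P^{-1}B(k)P=\left[\begin{smallmatrix}1&1\\0&\beta_{k-1}\end{smallmatrix}\right]$ is upper triangular; the product of such triangular matrices and of $B_1^{t+1}=P\left[\begin{smallmatrix}1&t+1\\0&1\end{smallmatrix}\right]P^{-1}$ can be read off by inspection, and comparing the $(1,2)$ and $(2,2)$ entries recovers the coefficients of $B_2$ and $B_3$ at once. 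I do not anticipate a genuine obstacle here: the statement is a finite, explicit computation, and the only point requiring care is keeping the running indices of the $a$'s and the $\beta$'s aligned when the recursions are unrolled into the displayed sums.
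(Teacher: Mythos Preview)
The paper does not actually supply a proof of this lemma; it is stated and then immediately used, with the recursions
\[
\sigma_2(k,t+1)=a_{k-t-2}(t+1)+\beta_{k-t-2}\sigma_2(k,t),\qquad
\sigma_3(k,t+1)=a_{k-t-2}+\beta_{k-t-2}\sigma_3(k,t)
\]
invoked (without derivation) in the proof of Lemma~\ref{lemma-bounds-on-sigma-2-sigma-3}. Your proposal therefore fills a genuine gap rather than competing with an existing argument.

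Your approach is correct. The decomposition $B(k)=B_1-a_{k-1}B_3$ is right, the multiplication table you record ($B_1=I+B_2$, $B_2^2=0$, $B_3^2=B_3$, $B_3B_2=0$, $B_2B_3=B_2$, hence $B_3B_1^n=B_3$ and $B_1^nB_3=B_3+nB_2$) is correct, and peeling the rightmost factor $\mathcal B(k,t)=\mathcal B(k,t-1)\,(B_1-a_{k-t-1}B_3)$ yields exactly the two recursions above after collecting the $B_2$- and $B_3$-coefficients. These are the same recursions the paper relies on in Lemma~\ref{lemma-bounds-on-sigma-2-sigma-3}, so your derivation is fully consistent with the paper's intended argument. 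The compact identity $\sigma_3(k,t)=1-\prod_{j=0}^{t}\beta_{k-1-j}$ and the telescoping reading of the displayed sums are a nice bonus; the upper-triangular conjugation is optional but gives a clean alternate verification.

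One cosmetic point: the base case you use, $\sigma_2(k,0)=0$ and $\sigma_3(k,0)=a_{k-1}$, corresponds to $\mathcal B(k,0)=B(k)$, which lies just outside the lemma's stated range $t\ge1$; it is cleanest to start the induction at $t=1$ by checking $\mathcal B(k,1)=B(k)B(k-1)$ directly (as you indicate parenthetically), so that everything matches the explicit values $\sigma_2(k,1)=a_{k-2}$ and $\sigma_3(k,1)=a_{k-2}+a_{k-1}\beta_{k-2}$ used in the paper.
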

%
%
%
%As it is typical with the results like Lemma~\ref{lemma-B-k-t-modified-D-NG}, it is not easy to ``guess'' the closed form expression of $\mathcal{B}(k,t)$. However, once we obtain the closed form expression, Lemma~\ref{lemma-B-k-t-modified-D-NG} can be easily proved using mathematical induction on $t=1,2,\cdots ,k-2$.
%
%
%
%
We establish bounds on the sums $\sigma_2(k,t)$ and $\sigma_3(k,t)$.
\begin{lemma}
\label{lemma-bounds-on-sigma-2-sigma-3}
Let $\sigma_2(k,t)$ and $\sigma_3(k,t)$ in~\eqref{eqn-sigma-2-sigma-3}-\eqref{eqn-sigma-2-sigma-3-2}, $t=1,\cdots ,k-2$, $k \geq 3.$
Then:
\begin{eqnarray}
\frac{t^2}{k+2} \leq \sigma_2(k,t) \leq t+1,\:\:\:\:\:\: 0 \leq \sigma_3(k,t) \leq 1.
\end{eqnarray}
\end{lemma}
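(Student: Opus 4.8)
The idea is to exploit the identity $\beta_m + a_m = 1$, which holds because $a_m = 3/(m+3)$ and $\beta_m = m/(m+3)$ share the denominator $m+3$; this collapses both $\sigma_2$ and $\sigma_3$ into telescoping sums. Introduce the partial products $P_j := \prod_{\ell=0}^{j-1}\beta_{k-t-1+\ell}$ with $P_0 := 1$, so that the $j$-th summand of $\sigma_3(k,t)$ is $a_{k-t-1+j}\,P_j$ and that of $\sigma_2(k,t)$ is $(t-j)\,a_{k-t-1+j}\,P_j$, for $j=0,\dots,t-1$. Since $a_{k-t-1+j} = 1-\beta_{k-t-1+j}$ and $P_{j+1} = \beta_{k-t-1+j}\,P_j$, we get $a_{k-t-1+j}\,P_j = P_j - P_{j+1}$. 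Hence $\sigma_3(k,t) = \sum_{j=0}^{t-1}(P_j-P_{j+1}) = 1 - P_t$, and an Abel (summation-by-parts) rearrangement of $\sigma_2(k,t) = \sum_{j=0}^{t-1}(t-j)(P_j-P_{j+1})$ gives $\sigma_2(k,t) = t - \sum_{j=1}^{t}P_j$.

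Because $t \le k-2$, every index occurring in the products is at least $k-t-1 \ge 1$, so each factor $\beta_m \in (0,1)$, whence $0 < P_j \le 1$ and $0 < P_t < 1$. The bounds for $\sigma_3$ are then immediate, $0 \le \sigma_3(k,t) = 1-P_t \le 1$, and the upper bound for $\sigma_2$ follows by dropping the positive tail: $\sigma_2(k,t) = t - \sum_{j=1}^t P_j < t \le t+1$.

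The substantive part is the lower bound $\sigma_2(k,t) \ge t^2/(k+2)$, and for this I would compute $\sum_{j=1}^t P_j$ exactly. Writing $P_j$ as a product of ratios spaced by $3$ and telescoping, $P_j = \prod_{m=k-t-1}^{k-t+j-2}\tfrac{m}{m+3} = \tfrac{(k-t-1)(k-t)(k-t+1)}{(k-t+j-1)(k-t+j)(k-t+j+1)}$. Put $n := k-t \ge 2$; using the partial fraction $\tfrac{1}{(m-1)m(m+1)} = \tfrac12\big(\tfrac{1}{(m-1)m}-\tfrac{1}{m(m+1)}\big)$, the sum $\sum_{j=1}^t P_j = (n-1)n(n+1)\sum_{j=1}^t \tfrac{1}{(n+j-1)(n+j)(n+j+1)}$ telescopes to $\tfrac{n-1}{2} - \tfrac{(n-1)n(n+1)}{2k(k+1)}$ (since $n+t = k$). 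Therefore
\[
\sigma_2(k,t) \;=\; t - \frac{n-1}{2} + \frac{(n-1)n(n+1)}{2k(k+1)},\qquad n:=k-t.
\]
Multiplying the claimed inequality by $2k(k+1)(k+2)>0$, substituting $t = k-n$, and using $(k+2)-t = n+2$ together with $k(k+1)-n(n+1) = t(k+n+1)$, one cancels a common factor $t$ and reduces everything to the polynomial inequality $2k(k+1)(n+2) \ge (k+2)(n-1)(k+n+1)$. Expanding the difference of the two sides gives $knt + 5k^2 - 2n^2 + 7k + 2$, which is strictly positive since $k \ge n \ge 2$ forces $5k^2 - 2n^2 \ge 3n^2 > 0$; this completes the proof.

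I expect the exact evaluation of $\sum_{j=1}^t P_j$ — and the realization that it must be kept exactly — to be the main obstacle. Crude estimates (discarding the positive term $\tfrac{(n-1)n(n+1)}{2k(k+1)}$, or bounding $1-P_j$ term by term) are too weak precisely in the regime $t \approx k/3$, where $\sigma_2(k,t)$ and $t^2/(k+2)$ are of the same order; the telescoping identities for both $P_j$ and $\sum_j P_j$ are what make the bound go through.
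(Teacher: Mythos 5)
Your proof is correct; I checked the telescoping identities, the closed form $\sigma_2(k,t)=t-\frac{n-1}{2}+\frac{(n-1)n(n+1)}{2k(k+1)}$ with $n=k-t$ (it reproduces, e.g., $\sigma_2(k,1)=3/(k+1)$ and $\sigma_2(4,2)=1.65$), and the final expansion $2k(k+1)(n+2)-(k+2)(n-1)(k+n+1)=knt+5k^2-2n^2+7k+2>0$. However, your route is genuinely different from the paper's. The paper proves all four inequalities by induction on $t$, using the one-step recursions $\sigma_2(k,t+1)=a_{k-t-2}(t+1)+\beta_{k-t-2}\,\sigma_2(k,t)$ and $\sigma_3(k,t+1)=a_{k-t-2}+\beta_{k-t-2}\,\sigma_3(k,t)$ together with the same key identity $a_m+\beta_m=1$; the lower bound on $\sigma_2$ comes from verifying $a_{k-t-2}(t+1)+\beta_{k-t-2}\frac{t^2}{k+2}\geq\frac{(t+1)^2}{k+2}$ at each step. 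You instead solve the recursion in closed form: the identity $a_{m}P=P-\beta_m P$ turns $\sigma_3$ into the telescope $1-P_t$ and, after Abel summation, $\sigma_2$ into $t-\sum_{j=1}^t P_j$, which you then evaluate exactly via the three-term telescoping product and partial fractions. The paper's induction is shorter and avoids any exact evaluation, but gives only the stated bounds; your argument costs more algebra up front yet yields exact expressions for $\sigma_2$ and $\sigma_3$ (and the slightly sharper $\sigma_2<t$), making transparent exactly how much slack the bound $t^2/(k+2)$ has — your remark that crude term-by-term estimates fail near $t\approx k/3$ is a fair diagnosis of why the exact sum (or, in the paper's case, a carefully chosen induction hypothesis) is needed.
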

\begin{proof} We prove each of the four inequalities above.

\textbf{Proof of the right inequality on $\sigma_2(k,t)$}.
By induction on $t=1,\cdots ,k-2$. The claim holds for $t=1$, since
$\sigma_2(k,1)=a_{k-2}=3/(k+1) \leq 1+1$, $\forall k$. Let it be true for some $t \geq 1$.
For $t=1,\cdots ,k-3$, write $\sigma_2(k,t)$ as:
 \begin{equation}
 \label{eqn-sigma2-recursively}
 \sigma_2(k,t+1) =a_{k-t-2}(t+1) + \beta_{k-t-2}\sigma_2(k,t).
 \end{equation}
 Using~\eqref{eqn-sigma2-recursively} and the induction hypothesis:
 $\sigma_2(k,t+1) \leq (t+1)a_{k-t-2}+\beta_{k-t-2}(t+1) = (a_{k-t-2}+\beta_{k-t-2})(t+1)=t+1 \leq t+2$.
 Thus, the right inequality on $\sigma_2(k,t)$.

\textbf{Proof of the left inequality on $\sigma_2(k,t)$}. Again, by
  induction on $t$. The claim holds for $t=1$, since:
  \[
  \sigma_2(k,1) = a_{k-2}=\frac{3}{k+1} \geq \frac{1^2}{k+2}.
  \]
   Let the claim be true for some $t \in \{1,2,\cdots ,k-3\}$, i.e.:
   \begin{equation}
   \label{eqn-induction-hypothesis-sigma-2}
   \sigma_2(k,t) \geq \frac{t^2}{k+2}.
   \end{equation}
   We show that $\sigma_2(k,t+1) \geq \frac{(t+1)^2}{k+2}.$
   Using~\eqref{eqn-sigma2-recursively}:
   \begin{align*}
   &\sigma_2(k,t+1) \geq a_{k-t-2}(t+1) + \beta_{k-t-2} \frac{t^2}{k+2}\\
   &
   = \frac{(t+1)^2}{k+2} + \frac{t(k-t)+(2 k+5 t+5)}{(k+2)(k-t+1)} \geq \frac{(t+1)^2}{k+2},
   \end{align*}
   where the last equality follows after algebraic manipulations.
   By induction, the last inequality completes the proof
   of the lower bound on $\sigma_2(k,t)$.

%   \textbf{Proof of the lower bound on $\sigma_3(k,t)$} is trivial.

   \textbf{Proof of bounds on $\sigma_3(k,t)$}. The lower bound is trivial. The upper bound follows by induction.
   For $t=1$:
   \[
   \sigma_3(k,1)=a_{k-2}+a_{k-1}\beta_{k-2} \leq a_{k-2}+\beta_{k-2}=1.
   \]
   Let the claim hold for some $t \in \{1,\cdots ,k-3\}$, i.e.:
   \[
   \sigma_3(k,t) \leq 1.
   \]
   From~\eqref{eqn-sigma-2-sigma-3}:
   \[
   \sigma_3(k,t+1) = \beta_{k-t-2} \,\sigma_3(k,t) + a_{k-t-2}.
   \]
   Thus, by the induction hypothesis:
   \[
   \sigma_3(k,t+1) \leq \beta_{k-t-2}+a_{k-t-2} \leq 1,
   \]
   completing the proof of the upper bound on $\sigma_3(k,t)$.
\end{proof}
We upper bound $\|\mathcal{B}(k,k-t-2)\|$ of direct use in Theorem~\ref{theorem-consensus-D-NG-modified}.
\begin{lemma}
\label{lemma-upper-bound-on-norm-B-k-t}
Consider $\mathcal{B}(k,t)$ in~\eqref{eqn-B-products}. Then,
for all $t=0,\cdots ,k-1$, for all $k=1,2,\cdots $
\begin{equation}
\label{eqn-lemma-inequality-norm-b-k-t}
\| \mathcal{B}(k,k-t-2)\| \leq 8 \frac{(k-t-1)(t+1)}{k} + 5.
\end{equation}
\end{lemma}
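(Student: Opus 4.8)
The plan is to combine the explicit decomposition of $\mathcal{B}(k,t)$ from Lemma~\ref{lemma-B-k-t-modified-D-NG} with the unipotent structure of $B_1$. Write $s:=k-t-2$, so the quantity of interest is $\|\mathcal{B}(k,s)\|$. I first dispose of the two boundary values $t\in\{k-2,k-1\}$ (which also covers all $k\le 2$): here $\mathcal{B}(k,-1)=I$ has norm $1$, while $\mathcal{B}(k,0)=B(k)$, for which elementary norm bounds together with $\beta_{k-1}=(k-1)/(k+2)\in[0,1)$ give $\|B(k)\|<3$; in both cases the left-hand side of~\eqref{eqn-lemma-inequality-norm-b-k-t} is below $5$ and hence below its right-hand side.

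For the main range $1\le s\le k-2$ (which forces $k\ge 3$ and $0\le t\le k-3$), Lemma~\ref{lemma-B-k-t-modified-D-NG} gives
$$\mathcal{B}(k,s)=B_1^{\,s+1}-\sigma_2(k,s)\,B_2-\sigma_3(k,s)\,B_3 .$$
The key step is the observation that $B_1=I+B_2$ with $B_2^2=0$, so that $B_1^{\,n}=(I+B_2)^n=I+n\,B_2$ for every $n\ge 0$; in particular $B_1^{\,s+1}=I+(k-t-1)B_2$. Substituting and collecting the $B_2$ term,
$$\mathcal{B}(k,s)=I+\bigl[(k-t-1)-\sigma_2(k,s)\bigr]B_2-\sigma_3(k,s)\,B_3 .$$

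Next I apply the triangle inequality with the elementary values $\|I\|=1$, $\|B_2\|=2$, $\|B_3\|=\sqrt{2}$, and invoke Lemma~\ref{lemma-bounds-on-sigma-2-sigma-3} with its index equal to $s=k-t-2$: it yields $0\le\sigma_3(k,s)\le 1$ and $(k-t-2)^2/(k+2)\le\sigma_2(k,s)\le k-t-1$, so the coefficient of $B_2$ is nonnegative and at most $(k-t-1)-(k-t-2)^2/(k+2)$. This gives
$$\|\mathcal{B}(k,s)\|\le 1+\sqrt{2}+2\Bigl[(k-t-1)-\tfrac{(k-t-2)^2}{k+2}\Bigr] .$$
Since $5-1-\sqrt{2}>2$, the proof reduces to the scalar inequality
$$(k-t-1)-\frac{(k-t-2)^2}{k+2}\;\le\;\frac{4(k-t-1)(t+1)}{k}+1 ,$$
which I would settle by setting $m:=k-t-1$ (so $t+1=k-m$, $k-t-2=m-1$), clearing the denominator $k(k+2)$, and checking that the resulting quadratic $h(m):=-(3k+8)m^2+k(3k+4)m+k(k+3)$ satisfies $h(m)\ge 0$ on $[0,k-1]$; since $h$ is concave with $h(0)=k(k+3)>0$ and $h(k-1)=12k-8>0$, concavity gives $h(m)\ge\min\{h(0),h(k-1)\}>0$ for all $m\in[0,k-1]$.

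There is no deep single step; the one structural ingredient is recognizing $B_1$ as unipotent with $B_1-I=B_2$ nilpotent, which collapses $B_1^{\,s+1}$ to $I+(k-t-1)B_2$ and makes the estimate affine in $k-t-1$. The main nuisance is bookkeeping: aligning the shifted index $s=k-t-2$ with the admissible ranges in Lemmas~\ref{lemma-B-k-t-modified-D-NG} and~\ref{lemma-bounds-on-sigma-2-sigma-3}, and treating the boundary values of $t$ (and small $k$) separately; after that only a triangle-inequality estimate and one concave-quadratic check remain.
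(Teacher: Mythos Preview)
Your proof is correct and follows essentially the same route as the paper: both use the decomposition of Lemma~\ref{lemma-B-k-t-modified-D-NG}, the identity $B_1^{\,n}=I+nB_2$ (the paper states it as $B_1^t=tB_2+I$; you derive it from $B_1=I+B_2$ and $B_2^2=0$), the bounds of Lemma~\ref{lemma-bounds-on-sigma-2-sigma-3}, and a triangle-inequality estimate, with the boundary cases $t\in\{k-2,k-1\}$ handled separately. The only cosmetic difference is in the final scalar step: the paper factors $2\bigl((k-t-2)-\tfrac{(k-t-2)^2}{k+2}\bigr)=2(k-t-2)\tfrac{t+4}{k+2}$ and then uses the one-line inequality $\tfrac{t+4}{k+2}\le\tfrac{4(t+1)}{k}$, whereas you run a concave-quadratic endpoint check; both are valid and yield the same bound.
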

\begin{proof}
Fix some $t \in \{1,\cdots ,k-2\}$, $k \geq 3$, and consider $\mathcal{B}(k,t)$ in Lemma~\ref{lemma-B-k-t-modified-D-NG}. It follows
 $B_1^t = t\,B_2+I$. Thus,
\begin{equation}
\label{eqn-B-K-t-proof}
\mathcal{B}(k,t) = \left(  t+1 - \sigma_2(k,t) \right)\,B_2 + I -\sigma_3(k,t) B_3.
\end{equation}
By Lemma~\ref{lemma-bounds-on-sigma-2-sigma-3},
the term:
\[
0 \leq   t+1 - \sigma_2(k,t) \leq t+1-t^2/(k+2).
\]
Using in~\eqref{eqn-B-K-t-proof} this equation, $\sigma_3(k,t) \leq 1$ (by Lemma~\ref{lemma-bounds-on-sigma-2-sigma-3}),
$\|B_2\| =2$, and $\|B_3\|=\sqrt{2} < 2$, get:
\begin{align}
\label{eqn-b-k-proof-2}
\|\mathcal{B}(k,t)\| \!\!\leq \!\!2\!\!\left(\!\!t\!\!+\!\!1\!\!-\!\!\frac{t^2}{k+2}\!\!\right)\!\!+\!\!3 &\!\!=\!\! 2\left(\!\!t-\frac{t^2}{k+2}\!\!\right)\!\! +\!\!5,
\end{align}
for all $t=1,2,\cdots ,k-2$, $k \geq 3$.
Next, from~\eqref{eqn-b-k-proof-2}, for $t=0,\cdots ,k-3$, $k \geq 3$, get:
\begin{align*}
&\|\mathcal{B}(k,k-t-2)\| \leq 2 \left( k-t-2-\frac{(k-t-2)^2}{k+2}  \right) + 5\\
&
= 2(k-t-2)\frac{t+4}{k+2}+5
\leq 8 (k-t-1)\frac{t+1}{k} +5,
\end{align*}
We used $(t+4)/(k+2) \leq 4(t+1)/k$ and proved~\eqref{eqn-lemma-inequality-norm-b-k-t}
for $t=0,\cdots ,k-3$, for $k \geq 3$. To complete the proof,
we show that~\eqref{eqn-lemma-inequality-norm-b-k-t}
 holds also for: 1) $t=k-2$, $k \geq 2$;
 2) $t=k-1$, $k \geq 1$.
Consider first case~1 and $\mathcal{B}(k,k-2)=B(k-1)=B_1-a_{k-1}B_3$, $k \geq 2$.
 We have $\|\mathcal{B}(k,k-2)\|\leq \|B_1\|+\|B_3\|<5$, and so~\eqref{eqn-lemma-inequality-norm-b-k-t}
 holds for $t=k-2$, $k \geq 2.$ Next, consider case~2
 and $\mathcal{B}(k,k-1)=I$, $k \geq 1$. We have that $\|\mathcal{B}(k,k-1)\|=1<5$,
 and so~\eqref{eqn-lemma-inequality-norm-b-k-t}
  also holds for $t=k-1$, $k \geq 1.$ This proves the Lemma.
\end{proof}
%
%
%
%
%
%
%
%

 %
 %
 %
 %
 %
% %
% %
%\section{Proofs of the main results}
%\label{section-proofs-modified-D-NG-D-NC}
%%
%%
%Current section uses the results in Sections~\ref{section-intermediate} and~\ref{section-inexact-oracle}
%to prove our main results, Theorems~\ref{theorem-consensus-D-NG-modified}--\ref{theorem-conv-rate-D-NC-modified}.
%Subsection~\ref{subsection-proofs-D-NG-modif} studies the modified D--NG algorithm,
%while Subsection~\ref{subsection-proofs-D-NC-random-net}
% studies the modified D--NC algorithm.
%
%
%

\section{Proofs of Theorems~\ref{theorem-consensus-D-NG-modified} and~\ref{theorem-convergence-rate-D-NG-modified}}
\label{section-proofs-mD-NG}
%\label{subsection-disagreement-modified-D-NG}
%
%
Subsection~\ref{subsection-proof-disagreement-mD-NG} proves Theorem~\ref{theorem-consensus-D-NG-modified},
while Subsection~\ref{subsection-proof-conv-rate-moD-NG} proves Theorem~\ref{theorem-convergence-rate-D-NG-modified}.

\subsection{Proof of Theorem~\ref{theorem-consensus-D-NG-modified}}
\label{subsection-proof-disagreement-mD-NG}
%
%
%We proceed by proving Theorem~\ref{theorem-consensus-D-NG-modified}.
Through this proof and the rest of the paper,
we establish certain equalities and inequalities on random
quantities of interest.
These equalities and inequalities further ahead
hold either: 1) surely, for any random
realization, or: 2) in expectation. From the notation, it is clear which
of the two cases is in force. For notational simplicity, we perform the proof of
Theorem~\ref{theorem-consensus-D-NG-modified} for the case $d=1$,
 but the proof extends for generic $d>1.$
The proof has three steps. In Step~1,
we derive the dynamic equation for the disagreement
$\widetilde{z}(k) = \left( \widetilde{y}(k)^\top,\,\widetilde{x}(k)^\top  \right)^\top$.
 In Step~2, we
 unwind the dynamic equation, expressing
 $\widetilde{z}(k)$
  in terms of the products $\widetilde{\Phi}(k,t)$ in~\eqref{eqn-tilde-phi-modified-D-NG}
   and $\mathcal{B}(k,t)$ in~\eqref{eqn-B-products}.
  Finally, in Step~3,
  we apply the already
  established bounds on the norms of the latter products.

\textbf{Step~1. Disagreement dynamics}.
Let $\widetilde{z}(k):=\left(\widetilde{y}(k)^\top,\widetilde{x}(k)^\top\right)^\top$.
Multiplying~\eqref{eqn-D-NG-modified}--\eqref{eqn-D-NG-modified-2} from the left by $(I-J)$, using
$(I-J)W(k)=\widetilde{W}(k)-J$, obtain:
\begin{eqnarray}
\label{eqn-recursion-D-NG-modified}
\widetilde{z}(k) = \left( B(k) \otimes \widetilde{W}(k) \right)\,\widetilde{z}(k-1) +u(k-1),
\end{eqnarray}
for $k=1,2,\cdots$ and $\widetilde{z}(0)=0$,
where
{\small{
\begin{align}
\label{eqn-u-k-modified}
u(k-1) \!\!=\!\! -\!\!\left[\!\!\!\! \begin{array}{cc} \alpha_{k-1}(1+\beta_{k-1})(I-J)\nabla F(y(k-1))\\
\alpha_{k-1} (I-J)\nabla F(y(k-1)) \end{array}\!\!\!\! \right]\!\!.
\end{align}
}
}

\textbf{Step~2. Unwinding recursion~\eqref{eqn-recursion-D-NG-modified}}.
Recall $\widetilde{\Phi}(k,t)$ in~\eqref{eqn-tilde-phi-modified-D-NG}, and $\mathcal{B}(k,t)$ in~\eqref{eqn-B-products}.
Then, unwinding~\eqref{eqn-recursion-D-NG-modified},
and using the Kronecker product property $(A \otimes B)(C \otimes D)
=(A B) \otimes (C D)$, we obtain for all $k=1,2,\cdots$
\begin{equation}
\label{eqn-unwinded-D-NG-modified}
\widetilde{z}(k) = \sum_{t=0}^{k-1} \left( \mathcal{B}(k,k-t-2) \otimes \widetilde{\Phi}(k,t) \right)\,u(t),
\end{equation}
The quantities
$u(t)$ and $\widetilde{\Phi}(k,t)$ in~\eqref{eqn-unwinded-D-NG-modified}
are random, while the $\mathcal{B}(k,k-t-2)$'s are deterministic.

\textbf{Step~3. Finalizing the proof}.
%Recall the unwinded expression for $\widetilde{z}(k)$ in~\eqref{eqn-unwinded-D-NG-modified}.
 Consider $u(t)$ in~\eqref{eqn-u-k-modified}.
By Assumption~\ref{assumption-bdd-gradients}, we have
$\|\nabla F(y(t))\| \leq \sqrt{N} G$.
Using this, the step-size $\alpha_t=c/(t+1)$,
and $\|I-J\|=1$, get $\|u(t)\| \leq \frac{\sqrt{3}\,c\,\sqrt{N}\,G}{t+1}$, for any
random realization of~$u(t)$.
With this bound, Lemma~\ref{lemma-upper-bound-on-norm-B-k-t},
and the sub-multiplicative and sub-additive properties of norms, obtain from~\eqref{eqn-unwinded-D-NG-modified}:
\begin{align*}
\|\widetilde{z}(k)\| &\leq \left( 8 \sqrt{3} \,c\, \sqrt{N}\,G \right)\,
\frac{1}{k}\,\sum_{t=0}^{k-1} \|\widetilde{\Phi}(k,t)\| \, (k-t-1) \\
&
+
\left( 5 \sqrt{3} \,c\, \sqrt{N}\,G \right)\,
\sum_{t=0}^{k-1} \|\widetilde{\Phi}(k,t)\| \, \frac{1}{t+1}. %\nonumber
\end{align*}
Taking expectation, and using Lemma~\ref{lemma-decay-moments}:
\begin{align*}
\mathbb E \left[ \|\widetilde{z}(k)\| \right] &\leq \left( 8 \sqrt{3} \,c\, N^{3/2}G \right)
\frac{1}{k}\sum_{t=0}^{k-1} {\overline{\mu}}^{k-t-1} (k-t-1)\\
&
+
\left( 5 \sqrt{3} \,c\, N^{3/2}\,G \right)\,
\sum_{t=0}^{k-1} {\overline{\mu}}^{k-t-1} \, \frac{1}{t+1}. %\nonumber
\end{align*}
Finally, applying Lemma~\ref{lemma-bounds-on-sums} to the last equation with $r=\overline{\mu}$,
the result in~\eqref{eqn-theorem-norm-z-k} follows.

Now prove~\eqref{eqn-theorem-norm-z-k-squared}.
Consider $\|\widetilde{z}(k)\|^2.$
From~\eqref{eqn-unwinded-D-NG-modified}:
\begin{align*}
&\|\widetilde{z}(k)\|^2
 \!\!= \!\!
 \sum_{t=0}^{k-1}\sum_{s=0}^{k-1}
 u(t)^\top  \!\!\left( \mathcal{B}(k,k-t-2)^\top \otimes \widetilde{\Phi}(k,t)^\top \right)\\
 &
 \left(
 \mathcal{B}(k,k-t-2) \otimes \widetilde{\Phi}(k,s)\right)\,u(s) \\
 &=
 \sum_{t=0}^{k-1}\sum_{s=0}^{k-1}
 u(t)^\top \,\left( \mathcal{B}(k,k-t-2)^\top \, \mathcal{B}(k,k-s-2) \right)\\
 &
 \otimes\,\left(
 \widetilde{\Phi}(k,t)^\top \, \widetilde{\Phi}(k,s)\right)\,u(s),
\end{align*}
where the last equality again uses the property $(A \otimes B)(C \otimes D)
= (A C) \otimes (B D)$. By the sub-additive and sub-multiplicative properties of norms, obtain:
{\small
{
\begin{align}
\label{eqn-recall-for-appendix}
&\|\widetilde{z}(k)\|^2
 \!\!\leq \!\!
\sum_{t=0}^{k-1}\sum_{s=0}^{k-1}\!\!
 \left\| \mathcal{B}(k,k-t-2)\right\| \!\! \left\|\mathcal{B}(k,k-s-2) \right\|\!\!\\
 &
 \nonumber
\hspace{1cm}
 \left\| \widetilde{\Phi}(k,t)^\top \, \widetilde{\Phi}(k,s)\right\|
 \,\|u(t)\|\,\|u(s)\| \\
 &
 \nonumber
 \leq\!\!
 \sum_{t=0}^{k-1}\sum_{s=0}^{k-1}\!\!
 \left(\!\! \frac{8(k-t-1)(t+1)}{k} \!\!+\!\! 5\!\!\right)\!\!\!\left(\!\! \frac{8(k-s-1)(s+1)}{k} + 5 \!\!\right)\\
 &
 \hspace{1cm}
 \left\| \widetilde{\Phi}(k,t)^\top \, \widetilde{\Phi}(k,s)    \right\|\,\frac{3\,c^2\,N G^2}{(t+1)(s+1)}. \nonumber
\end{align}
}
}
The last inequality uses Lemma~\ref{lemma-upper-bound-on-norm-B-k-t}
 and $\|u(\sqrt{}t)\| \leq \left(\sqrt{3} c\sqrt{N} G\right)/(t+1).$
 Taking expectation and applying Lemma~\ref{lemma-decay-moments}, obtain:
 {
 \allowdisplaybreaks
\begin{align*}
 &\mathbb E\! \left[ \|\widetilde{z}(k)\|^2   \right]\! \leq \!  \left(3c^2N^4 G^2\right)
 \!\!\sum_{t=0}^{k-1}\!\sum_{s=0}^{k-1}\!\!
 \left(\!\! \frac{8(k-t-1)(t+1)}{k}\!\! + \!\!5\!\!\right)\\
 &
 \left( \frac{8(k-s-1)(s+1)}{k} + 5 \right)
 \frac{\overline{\mu}^{k-t-1+k-s-1}}{(t+1)(s+1)}\\
 &=
 \left(\!3c^2N^4 G^2\!\right)\!\!\!
 \left(\!\sum_{t=0}^{k-1} \!\! \left(\! \frac{8(k-t-1)(t+1)}{k}\! + \!5\!\right) \!\!\frac{ {\overline{\mu}}^{k-t-1} }{t+1}   \!  \right)^2 \\
 &\leq
 \frac{ 50^2\,c^2\,N^4 G^2} {(1-{\overline{\mu}})^4\, k^2}.
 \end{align*}
 }
The last inequality applies Lemma~\ref{lemma-bounds-on-sums}.
 Thus, the bound in~\eqref{eqn-theorem-norm-z-k-squared}. The proof of Theorem~\ref{theorem-consensus-D-NG-modified} is complete.
\subsection{Proof of Theorem~\ref{theorem-convergence-rate-D-NG-modified}}
\label{subsection-proof-conv-rate-moD-NG}
%
%
%
%\textbf{Proof outline for Theorem~\ref{theorem-convergence-rate-D-NG-modified}}.
The proof parallels that of Theorem~5~(a) in~\cite{arxivVersion}. We outline it and refer to~(\cite{arxivVersion}, Lemma~2, Lemma~3, Theorem~5~(a), and their proofs.) It is based
 on the evolution of the global averages $ \overline{x}(k)=\frac{1}{N}\sum_{i=1}^N x_i(k)$, and $\overline{y}(k)=\frac{1}{N}\sum_{i=1}^N y_i(k)$. Let:
 \begin{align}
 \label{eqn-inexact-oracle-info}
 \widehat{f}_{k-1} &:= \sum_{i=1}^N \left( f_i(y_i(k-1))\phantom{\nabla f_i(y_i(k-1))^\top}\right.\\
 &
 \nonumber
 \left.+\nabla f_i(y_i(k-1))^\top (\overline{y}(k-1)-y_i(k-1))  \right) \\
 \widehat{g}_{k-1} &:= \sum_{i=1}^N \nabla f_i \left( y_i(k-1) \right) \nonumber \\
 L_{k-1} &:= \frac{N}{\alpha_{k-1}} = \frac{N k}{c} \geq 2 \,N\, L\, k \nonumber \\
 \delta_{k-1} &:= L \,\|\widetilde{y}(k-1)\|^2. \nonumber
 \end{align}
Then, it is easy to show that $\overline{x}(k),\,\overline{y}(k)$ evolve as:
\begin{eqnarray}
\label{eqn-CM-evolution-1}
\overline{x}(k) &=& \overline{y}(k-1) - \frac{\widehat{g}_{k-1}}{L_{k-1}}\\
\label{eqn-CM-evolution-2}
\overline{y}(k) &=& \left( 1+\beta_{k-1}\right) \overline{x}(k) - \beta_{k-1} \overline{x}(k-1),
\end{eqnarray}
$k=1,2,\cdots,$ with $\overline{x}(0)=\overline{y}(0)$. As shown in~\cite{arxivVersion},
$\left( \widehat{f}_{k-1},\widehat{g}_{k-1}\right)$
is a $(L_{k-1},\delta_{k-1})$ inexact oracle, i.e., it holds that for all points $x \in {\mathbb R}^d$:
\begin{align}
\label{eqn-inexact-oracle-relations}
&f(x)+\widehat{g}_{k-1}^\top(x-\overline{y}(k-1))
\leq f(x) \leq \widehat{f}_{k-1}\\
&
\nonumber
 + \widehat{g}_{k-1}^\top \left( x - \overline{y}(k-1)\right)
+ \frac{L_{k-1}}{2}\|x-\overline{y}(k-1)\|^2+\delta_{k-1},
\end{align}
%
%
%From~\eqref{eqn-modified-D-NG-vector-form}--\eqref{eqn-modified-D-NG-vector-form-2},
%it can be shown that $\overline{x}(k)$ and $\overline{y}(k)$
% evolve according to \eqref{eqn-lemma-updates-2} and \eqref{eqn-lemma-updates-3},
% with $L_{k-1}:=N/\alpha_{k-1}$ and $\widehat{g}_{k-1}
% := \sum_{i=1}^N \nabla f_i(y_i(k-1))$.
% Further, denote by
%\[
% \widehat{f}_{k-1}:=\sum_{i=1}^N \left( f_i(y_i(k-1))+\nabla f_i(y_i(k-1))^\top (\overline{y}(k-1)-y_i(k-1))  \right).
% \]
% it can be shown that
% $\left(  \widehat{f}_{k-1}, \widehat{g}_{k-1}\right)$
% is a $(L_{k-1},\delta_{k-1})$ inexact oracle,
% where $\delta_{k-1}:=L\|\widetilde{y}(k-1)\|^2.$
%
From~\eqref{eqn-inexact-oracle-info},
 $\widehat{f}_{k-1},\,\widehat{g}_{k-1},$ and $\widehat{\delta}_{k-1}$
 are functions (solely) of~$y(k-1)$.
 Inequalities~\eqref{eqn-inexact-oracle-relations} hold for
 any random realization of $y(k-1)$ and any $x \in {\mathbb R}^d$.
We apply now Lemma~2 in~\cite{arxivVersion}, with $\delta_{k-1}$ as in~\eqref{eqn-inexact-oracle-info}. Get:
 %
 %
% {
%\allowdisplaybreaks
\begin{align}
 \label{eqn-star-star-proof-random}
 &(k+1)^2 \left( f(\overline{x}(k))-f^\star\right)
 +\frac{2 N k}{c}\|\overline{v}(k)-x^\star\|^2 \\
 &
 \nonumber
 \leq\!
 (k^2-1)\!\left( f(\overline{x}(k-1))-f^\star\right)
 \!+\!\frac{2 N k}{c}\|\overline{v}(k-1)-x^\star\|^2\\
 &
 \nonumber
  + (k+1)^2 L\|\widetilde{y}(k-1)\|^2, \nonumber
 \end{align}
%}
%\underline{\underline{\underline{}}}%
%
\noindent
where $\overline{v}(k)=\left( \overline{y}(k) - (1-\theta_{k})\overline{x}(k)\right)/ {\theta_{k}}.$
Dividing~\eqref{eqn-star-star-proof-random} by $k$ and unwinding the resulting inequality, get:
{\allowdisplaybreaks
 \begin{align}
 \label{eqn-star-star-proof-random-trian}
 \frac{1}{N}\left( f(\overline{x}(k))-f^\star\right)
 &\leq
 \frac{2}{k\,c}\|\overline{x}(0)-x^\star\|^2 \\
 &
 \nonumber
 +
 \frac{L}{N\,k}\sum_{t=1}^k \frac{(t+1)^2}{t} \|\widetilde{y}(t-1)\|^2,
 \end{align}
 }
Next, using Assumption~\ref{assumption-bdd-gradients}, obtain, $\forall i$:
\begin{equation}
 \label{eqn-star-star-proof-random-trian-trian}
 %\hspace{-.3cm}
 \frac{1}{N}\!\left( f(x_i(k))\!\!-\!\!f^\star\right) \!\! \leq \!\!
 \frac{1}{N}\!\left( f(\overline{x}(k))\!\!-\!\!f^\star\!\right)\!\! + \!\!\frac{G} {\sqrt{N}} \|\widetilde{x}(k)\|\!.
 \hspace{-.3cm}
\end{equation}
The proof is completed after combining~\eqref{eqn-star-star-proof-random-trian} and~\eqref{eqn-star-star-proof-random-trian-trian},
taking expectation, and using in Theorem~\ref{theorem-consensus-D-NG-modified} the bounds
$\mathbb E \left[ \|\widetilde{x}(k)\| \right] \!\!\leq\!\! \mathbb E \left[ \|\widetilde{z}(k)\| \right]$
 and $\mathbb E \left[ \|\widetilde{y}(k)\|^2 \right]\!\! \leq \!\! \mathbb E \left[ \|\widetilde{z}(k)\|^2 \right]$.

\section{Algorithm $\mathrm m$D--NC}
\label{section-mD-NC}
We present mD--NC. Subsection~\ref{subsection-D-NC-model-algorithm}
 defines additional random matrices needed for representation of mD--NC
 and presents mD--NC. Subsection~\ref{subsection-conv-rate-statement-mD-NC}
 states our result on its convergence rate.
\subsection{Model and algorithm}
\label{subsection-D-NC-model-algorithm}
%
%
%
%

%\subsection{Model and algorithm}
%\label{subsection-model-algorithm}
%
%
%
%This subsection presents our mD--NC algorithm.
We consider a sequence of i.i.d. random matrices that obey Assumptions~\ref{assumption-random-weight-matrices} and~\ref{assumption-connected-random}. We index these matrices with two-indices  since D--NC operates in two time scales--an inner loop,
 indexed by~$s$ with~$\tau_k$ iterations, and an outer loop indexed by~$k$, where:
  \begin{equation}
  \label{eqn-tau-k-modified-D-NC}
  \tau_k = \left\lceil  \frac{3\,\log k + \log N}{-\log {\overline{\mu}}}  \right\rceil.
  \end{equation}
  For static networks, the term $\log N$ can be dropped.
  At each inner iteration,
  nodes utilize one communication round--each node broadcasts a $2 d \times 1$
   vector to all its neighbors.
 We denote by $W(k,s)$ the random weight
 matrix that corresponds to the communication round at the $s$-th inner
 iteration and $k$-th outer iteration.
 The matrices $W(k,s)$ are ordered lexicographically
 as $W(k=1,s=1),W(k=1,s=2),\cdots,W(k=1,s=\tau_1),\cdots,W(k=2,s=1),\cdots$ This sequence obeys Assumptions~\ref{assumption-random-weight-matrices} and~\ref{assumption-connected-random}.

It will be useful to define the products
of the weight matrices $W(k,s)$ over each outer iteration $k$:
\begin{equation}
\label{eqn-cal-W-k-D-NC-modified}
\mathcal{W}(k):=\Pi_{s=0}^{\tau_k-1} W(k,\tau_k-s).
\end{equation}
Clearly, $\{\mathcal{W}(k)\}_{k=1}^{\infty}$
 is a sequence of independent (but not identically distributed) matrices.
 We also define~$\widetilde{\mathcal W}(k):=\mathcal W(k)-J,$ and, for $t=0,1,\cdots,k-1$:
\begin{align}
\label{eqn-psi-k-t}
\widetilde{\Psi}(k,t):=\mathcal W(k) \mathcal W(k-1)\cdots\mathcal{W}(t+1).
\end{align}
%
%for $t=0,1,\cdots ,k-1$.

The Lemma below, proved in Appendix~\ref{app:A}, follows from Assumptions~\ref{assumption-random-weight-matrices} and~\ref{assumption-connected-random},
the independence of the matrices $\mathcal W(k)$,
the value of $\tau_k$ in~\eqref{eqn-tau-k-modified-D-NC}, and Lemma~\ref{lemma-decay-moments}.
\begin{lemma}
\label{lemma-decay-moments-D-NC-modified}
Let Assumptions~\ref{assumption-random-weight-matrices} and~\ref{assumption-connected-random} hold. Then,
for all $k=1,2,\cdots $, for all $s,t \in \{0,1,\cdots ,k-1\}$:
{
\allowdisplaybreaks
\begin{align}
\label{eqn-lemma-psi-norm-1}
%\hspace{-.4cm}
\mathbb E \left[ \left\| \widetilde{\mathcal W}(k)  \right\|^2 \right]\!\! &\leq\!\! \frac{1}{k^6}\\
\label{eqn-lemma-psi-norm-2}
%\hspace{-.4cm}
\mathbb E \left[ \|\widetilde{\Psi}(k,t)\|\right] \!\! &\leq\!\! \frac{1}{k^3(k-1)^3\cdots (t+1)^3}\\
\label{eqn-lemma-psi-norm-3}
%\hspace{-.4cm}
\mathbb E\! \left[ \!\|\widetilde{\Psi}(k,t)^\top \!\widetilde{\Psi}(k,t) \|\!\right] \!\! &\leq\!\!\left(\!\!\frac{1}{k^3(k-1)^3\cdots (t+1)^3}\!\!\right)^2\\
\label{eqn-lemma-psi-norm-4}
%
%\hspace{-.4cm}
\mathbb E\! \left[ \!\|\widetilde{\Psi}(k,s)^\top\! \widetilde{\Psi}(k,t) \|\!\right]
 \!\! &\leq\!\! \left(\!\!\frac{1}{k^3(k-1)^3\cdots (t+1)^3}
 \!\!\right)\\
% \hspace{-.4cm}
&
 \nonumber
 \phantom{\!\! \leq\!\!\ \ }
 \left(\!\!\frac{1}{k^3(k-1)^3\cdots (s+1)^3} \!\!\right).
\end{align}
}
%
%where ${\overline{\mu}}$ is defined in Lemma~\ref{lemma-decay-moments}.
\end{lemma}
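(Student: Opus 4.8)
The plan is to reduce each of the four bounds to Lemma~\ref{lemma-decay-moments} applied to a single outer iteration, followed by a product/independence argument. First I would handle~\eqref{eqn-lemma-psi-norm-1}. By definition $\widetilde{\mathcal W}(k) = \mathcal W(k) - J$, and since each $W(k,s)$ is stochastic and symmetric, so is their product $\mathcal W(k)$; hence $\mathcal W(k) - J = \widetilde{W}(k,1)\widetilde{W}(k,2)\cdots\widetilde{W}(k,\tau_k)$ where $\widetilde{W}(k,s) := W(k,s) - J$ (using $WJ=JW=J$ for stochastic symmetric $W$, so the cross terms telescope). Thus $\widetilde{\mathcal W}(k) = \widetilde{\Phi}$-type product of length $\tau_k$ built from the $\tau_k$ i.i.d.\ matrices of outer stage $k$. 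Applying~\eqref{eqn-theorem-phi-2} from Lemma~\ref{lemma-decay-moments} with "$k-t-1 = \tau_k$" gives $\mathbb E[\|\widetilde{\mathcal W}(k)\|^2] \le N^2 \overline{\mu}^{2\tau_k}$. By the choice $\tau_k = \lceil (3\log k + \log N)/(-\log\overline{\mu})\rceil$, we have $\overline{\mu}^{\tau_k} \le \overline{\mu}^{(3\log k + \log N)/(-\log\overline{\mu})} = e^{-(3\log k + \log N)} = 1/(N k^3)$, so $N^2\overline{\mu}^{2\tau_k} \le N^2/(N^2 k^6) = 1/k^6$. This is~\eqref{eqn-lemma-psi-norm-1}.

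Next, for~\eqref{eqn-lemma-psi-norm-2}, write $\widetilde{\Psi}(k,t) = \widetilde{\mathcal W}(k)\widetilde{\mathcal W}(k-1)\cdots\widetilde{\mathcal W}(t+1)$ — again the $J$'s telescope because each $\mathcal W(j)$ is symmetric stochastic. Since the outer-stage matrices $\{\mathcal W(j)\}$ are \emph{independent} (the inner i.i.d.\ sequence is partitioned into disjoint blocks), $\mathbb E[\|\widetilde{\Psi}(k,t)\|] \le \prod_{j=t+1}^{k} \mathbb E[\|\widetilde{\mathcal W}(j)\|]$ by submultiplicativity of $\|\cdot\|$ and independence. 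From~\eqref{eqn-lemma-psi-norm-1}, Jensen gives $\mathbb E[\|\widetilde{\mathcal W}(j)\|] \le (\mathbb E[\|\widetilde{\mathcal W}(j)\|^2])^{1/2} \le 1/j^3$, so the product is $\le \prod_{j=t+1}^k 1/j^3 = 1/(k^3(k-1)^3\cdots(t+1)^3)$, which is~\eqref{eqn-lemma-psi-norm-2}. For~\eqref{eqn-lemma-psi-norm-3}, note $\|\widetilde{\Psi}(k,t)^\top\widetilde{\Psi}(k,t)\| = \|\widetilde{\Psi}(k,t)\|^2$, and $\|\widetilde{\Psi}(k,t)\|^2 \le \prod_{j=t+1}^k \|\widetilde{\mathcal W}(j)\|^2$; taking expectations and using independence plus~\eqref{eqn-lemma-psi-norm-1} gives $\prod_{j=t+1}^k 1/j^6 = (1/(k^3\cdots(t+1)^3))^2$. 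For~\eqref{eqn-lemma-psi-norm-4} with $s \le t$ (wlog), split $\widetilde{\Psi}(k,s) = \widetilde{\Psi}(k,t)\,\widetilde{\Psi}(t,s)$ — note $\widetilde{\Psi}(t,s)$ and $\widetilde{\Psi}(k,t)$ use disjoint blocks of outer stages, hence are independent — so $\|\widetilde{\Psi}(k,s)^\top\widetilde{\Psi}(k,t)\| \le \|\widetilde{\Psi}(t,s)\|\,\|\widetilde{\Psi}(k,t)\|^2$; taking expectations and applying independence, then~\eqref{eqn-lemma-psi-norm-2} and~\eqref{eqn-lemma-psi-norm-3}, yields the claimed bound (the $\|\widetilde{\Psi}(k,t)\|^2$ factor supplies the square and the extra stages from $s+1$ to $t$ in one of the two product factors).

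The main obstacle — really the only nontrivial point — is the telescoping identity $\mathcal W(j) - J = \prod_{s}(W(j,s) - J)$ and, more importantly, justifying the \emph{independence} of the outer-stage products $\mathcal W(j)$ and the resulting factorization of the expectation of a norm of a product. The independence follows because the lexicographic i.i.d.\ sequence $\{W(k,s)\}$ is partitioned into consecutive disjoint blocks, one per outer iteration $k$, and functions of disjoint sets of independent random variables are independent; the expectation factorization then uses $\mathbb E[\prod_j \|\widetilde{\mathcal W}(j)\|] = \prod_j \mathbb E[\|\widetilde{\mathcal W}(j)\|]$ for independent nonnegative random variables (after the submultiplicative bound $\|\prod_j \widetilde{\mathcal W}(j)\| \le \prod_j \|\widetilde{\mathcal W}(j)\|$). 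Everything else is the arithmetic of plugging $\tau_k$ into $\overline{\mu}^{\tau_k}$, which is exactly the computation done above for~\eqref{eqn-lemma-psi-norm-1}.
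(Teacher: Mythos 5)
Your proof is correct and follows essentially the same route as the paper: apply Lemma~\ref{lemma-decay-moments} (specifically~\eqref{eqn-theorem-phi-2}) to the length-$\tau_k$ block of i.i.d.\ matrices to get $\mathbb E[\|\widetilde{\mathcal W}(k)\|^2]\le N^2\overline{\mu}^{2\tau_k}\le 1/k^6$, then use submultiplicativity of the spectral norm together with independence of the disjoint outer-stage blocks to factor the expectations for~\eqref{eqn-lemma-psi-norm-2}--\eqref{eqn-lemma-psi-norm-4}. You are in fact slightly more explicit than the paper on two points it glosses over — the telescoping identity $\mathcal W(k)-J=\prod_s(W(k,s)-J)$ needed to invoke~\eqref{eqn-theorem-phi-2}, and the split $\widetilde{\Psi}(k,s)=\widetilde{\Psi}(k,t)\widetilde{\Psi}(t,s)$ behind~\eqref{eqn-lemma-psi-norm-4} — but the argument is the same.
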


\textbf{The mD--NC algorithm}.
mD--NC, in Algorithm~1, uses constant step-size $\alpha \leq 1/(2L)$.
 Each node $i$ maintains over (outer iterations) $k$ the solution estimate
  $x_i(k)$ and an auxiliary variable~$y_i(k)$.
   Recall ${\overline{\mu}}$ in Lemma~\ref{lemma-decay-moments}.
\begin{algorithm}
\label{alg-D-NC-modified}
\caption{mD--NC}
\begin{algorithmic}[1]
{\small{
    \STATE Initialization: Node $i$ sets $x_i(0)\!=\!y_i(0)\! \in {\mathbb R}^d$; and $k=1.$
    %\STATE
        %\STATE Node $i$ calculates~$\nabla f_i(y_i(k-1))$.
        %\STATE (First consensus) Nodes run~\eqref{eqn-consensus}
%        for $s=1,2,\cdots \tau_{k-1}$, with $z_i(s=0,k-1)=\nabla f_i(y_i(k-1))$, so
%        that node $i$ obtains~$g_i(k-1):=z_i(s=\tau_{k-1},k-1)$.
        \STATE Node $i$ calculates
        $
        x_i^{(a)}(k) = y_i(k-1) - \alpha \nabla f_i(y_i(k-1)).
        $
        \STATE (Consensus) Nodes run average consensus on
         $\chi_i(s,k)$, initialized by~$\chi_i(s=0,k)=\left(x_i^{(a)}(k)^\top,x_i(k-1)^\top\right)^\top$:
        \begin{align*}
        %\label{eqn-consensus-D-NC-modified}
        \chi_i(s,k)\!\! = \!\!\sum_{j \in O_i(k)}\!\! W_{ij}(k,s)\chi_j(s-1,k),s=1,2,\cdots,\tau_k,
        \end{align*}
        with $\tau_k$ in~\eqref{eqn-tau-k-modified-D-NC},
        and set~$x_i(k):=\left[\chi_i(s=\tau_k,k)\right]_{1:d}$
         and $x_i^{(b)}(k-1):=\left[\chi_i(s=\tau_k,k)\right]_{d+1:2\,d}$.
         (Here $[a]_{l:m}$ is a selection of $l$-th, $l+1$-th,
         $\cdots$ , $m$-th entries of vector $a$.)
        \STATE Node $i$ calculates
        $
        y_i(k) \!\!=\!\! (1+\beta_{k-1}) x_i(k)\!\! - \!\!\beta_{k-1}\,x_i^{(b)} \!(k-1).
        $
        %\STATE (Second consensus) Nodes run average consensus initialized by~$y_i^{(c)}(s=0,k)=y_i^{(a)}(k)$:
%        \begin{eqnarray}
%        \label{eqn-consensus-2}
%        y_i^{(c)}(s,k) = \sum_{j \in O_i} W_{ij} y_j^{(c)}(s-1,k),\:\:s=1,2,\cdots ,\tau_y(k),\:\:\:\:\:\:\:
%        %\label{eqn-consensus}
%        \tau_y(k) = \left\lceil \frac{\log 3}{-\log {\overline{\mu}}(W)} + \frac{2 \log k }{-\log {\overline{\mu}}(W)} \right\rceil,
%        \end{eqnarray}
%        and set~$y_i(k):=y_i^{(c)}(s=\tau_y(k),k)$.
        \STATE Set $k \mapsto k+1$ and go to step 2.
        }
        }
\end{algorithmic}
%\label{algorithm_SNG}
\end{algorithm}
Step~3 has $\tau_k$ communication rounds at
 outer iteration~$k$.
Nodes know $L$, $\overline{\mu}$, and $N$.
 Section~\ref{section-discussion} relaxes this.

\textbf{mD-NC in vector form}. Let the matrices~$\mathcal{W}(k)$ in~\eqref{eqn-cal-W-k-D-NC-modified}.
Use the compact notation in mD--NG for
 $x(k)$,
 $y(k)$, and $F:{\mathbb R}^{N d} \rightarrow {\mathbb R}^N$. Then for $k=1,2,\cdots$
\begin{align}
\label{eqn-modified-D-NC-vector-form}
x(k) &\!=\! \left(\!\mathcal{W}(k) \!\otimes \!I\!\right)\! \left[\! y(k-1)\! -\! \alpha \nabla F(y(k-1)\! \right] \\
\label{eqn-modified-D-NC-vector-form-2}
y(k) &\!=\! (1\!+\!\beta_{k-1})x(k) \!-\!\beta_{k-1} \!\left(\mathcal{W}(k)\! \otimes\! I\right)\!x(k\!-\!1),
\end{align}
  with $x(0)=y(0) \in {\mathbb R}^{N d}.$   Note the formal similarity
  with mD--NG in~\eqref{eqn-modified-D-NG-vector-form}--\eqref{eqn-modified-D-NG-vector-form-2}, except that
   $W(k)$ is replaced
   by $\mathcal{W}(k)$, and the diminishing
    step-size $\alpha_k=c/(k+1)$
     is replaced with the constant step-size $\alpha_k=\alpha.$

%We assumed that nodes know beforehand the global parameter $L$,
% with m--NG, and $L,\overline{\mu}$, and $N$, with mD--NC.
%There may be scenarios where such knowledge is not available;
% the two methods in such scenarios can be adapted to operate without
% the knowledge of global parameters; see the end of Subsection~\ref{subsection-rates-statements}.
%
%
%
%
%
%
%

\subsection{Convergence rate}
\label{subsection-conv-rate-statement-mD-NC}
 Define, like for mD--NG,
the disagreements $\widetilde{x}_i(k)$, $\widetilde{y}_i(k)$,
$\widetilde{x}(k)$, and $\widetilde{y}(k),$
 and  $\widetilde{z}(k):=\left(  \widetilde{y}(k)^\top,\widetilde{x}(k)^\top \right)^\top.$
% We have Theorem~\ref{theorem-consensus-D-NC-modified}  on the disagreement bounds.
 %
 %
 %
\begin{theorem}
\label{theorem-consensus-D-NC-modified}
Consider mD--NC in Algorithm~1 under Assumptions~\ref{assumption-random-weight-matrices}--\ref{assumption-bdd-gradients}.
Then, for all $k=1,2,\cdots $
{
\allowdisplaybreaks
\begin{align}
\label{eqn-theorem-norm-z-k-D-NC-mod}
{\mathbb E}\left[ \|\widetilde{z}(k)\|  \right] &\leq \frac{50\,\alpha\,N^{1/2}\,G}{k^2}\\
\label{eqn-theorem-norm-z-k-squared-D-NC-mod}
{\mathbb E}\left[ \|\widetilde{z}(k)\|^2  \right] &\leq  \frac{50^2\,\alpha^2\,N G^2} {k^4}.
\end{align}
}
%
%
%where ${\overline{\mu}}$
\end{theorem}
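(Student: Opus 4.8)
The plan is to follow the three-step template of the proof of Theorem~\ref{theorem-consensus-D-NG-modified}, with the one-step weight matrix $W(k)$ replaced by the inner-loop product $\mathcal{W}(k)$ of~\eqref{eqn-cal-W-k-D-NC-modified}, and the geometric decay of $\widetilde{\Phi}(k,t)$ replaced by the super-geometric decay of $\widetilde{\Psi}(k,t)$ supplied by Lemma~\ref{lemma-decay-moments-D-NC-modified}. As for mD--NG, the argument is carried out for $d=1$; for $d>1$ every Kronecker product merely acquires a trailing $\otimes I_d$ that affects no norm. Recall also that the common initialization forces $\widetilde{z}(0)=0$.

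\emph{Step~1: disagreement dynamics.} Left-multiplying~\eqref{eqn-modified-D-NC-vector-form}--\eqref{eqn-modified-D-NC-vector-form-2} by $I-J$ and using $(I-J)\mathcal{W}(k)=\widetilde{\mathcal{W}}(k)$ and $\widetilde{\mathcal{W}}(k)J=0$, one obtains $\widetilde{x}(k)=\widetilde{\mathcal{W}}(k)\widetilde{y}(k-1)-\alpha\,\widetilde{\mathcal{W}}(k)\nabla F(y(k-1))$ and $\widetilde{y}(k)=(1+\beta_{k-1})\widetilde{x}(k)-\beta_{k-1}\widetilde{\mathcal{W}}(k)\widetilde{x}(k-1)$; hence, writing $\widetilde{z}(k)=(\widetilde{y}(k)^\top,\widetilde{x}(k)^\top)^\top$, $b(j):=(1+\beta_j,\,1)^\top$, and $B(k)$ as in~\eqref{eqn-B-k-definition},
\[
\widetilde{z}(k)=\bigl(B(k)\otimes\widetilde{\mathcal{W}}(k)\bigr)\,\widetilde{z}(k-1)+v(k-1),\qquad v(k-1):=-\alpha\,b(k-1)\otimes\bigl(\widetilde{\mathcal{W}}(k)\,\nabla F(y(k-1))\bigr).
\]
The one structural difference from mD--NG is that here the gradient step passes through the consensus block, so the forcing term $v(k-1)$ already carries a factor $\widetilde{\mathcal{W}}(k)$ (in mD--NG the forcing carried only $(I-J)\nabla F$); this built-in contraction is precisely what upgrades the rate from $1/k$ to $1/k^2$.

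\emph{Steps~2--3: unwinding and estimates.} Iterating the recursion from $\widetilde{z}(0)=0$ and using $(A\otimes B)(C\otimes D)=(AC)\otimes(BD)$, the $B$-factors collapse to the $\mathcal{B}(k,k-t-2)$ of Section~\ref{section-intermediate-results} while the products $\widetilde{\mathcal{W}}(k)\cdots\widetilde{\mathcal{W}}(t+2)$ absorb the factor $\widetilde{\mathcal{W}}(t+1)$ hidden inside $v(t)$ and line up as $\widetilde{\Psi}(k,t)=\widetilde{\mathcal{W}}(k)\cdots\widetilde{\mathcal{W}}(t+1)$, giving
\[
\widetilde{z}(k)=-\alpha\sum_{t=0}^{k-1}\bigl(\mathcal{B}(k,k-t-2)\,b(t)\bigr)\otimes\bigl(\widetilde{\Psi}(k,t)\,\nabla F(y(t))\bigr),
\]
with the convention $\mathcal{B}(k,-1)=I$. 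Taking norms with $\|b(t)\|<3$ and $\|\nabla F(y(t))\|\le\sqrt{N}\,G$ (Assumption~\ref{assumption-bdd-gradients}, surely), bounding $\|\mathcal{B}(k,k-t-2)\|\le 8(k-t-1)(t+1)/k+5$ via Lemma~\ref{lemma-upper-bound-on-norm-B-k-t}, then taking expectations and applying Lemma~\ref{lemma-decay-moments-D-NC-modified}---inequality~\eqref{eqn-lemma-psi-norm-2} for $\mathbb{E}[\|\widetilde{z}(k)\|]$, and inequality~\eqref{eqn-lemma-psi-norm-4} for $\mathbb{E}[\|\widetilde{z}(k)\|^2]$ after squaring $\widetilde{z}(k)$ and expanding into a double sum over $t,s$---everything reduces to the scalar sum $S_k:=\sum_{t=0}^{k-1}\bigl(8(k-t-1)(t+1)/k+5\bigr)\big/\bigl(k^3(k-1)^3\cdots(t+1)^3\bigr)$. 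Since the denominators grow super-geometrically, $S_k$ is dominated by its $t=k-1$ term $5/k^3$; isolating that term and bounding $(k-t-1)(t+1)\le k^2/4$ and the remaining denominators below by $k^3(k-1)^3$ yields $S_k\le 10/k^2$ for every $k\ge1$. Consequently $\mathbb{E}[\|\widetilde{z}(k)\|]\le 3\alpha\sqrt{N}\,G\,S_k$ and $\mathbb{E}[\|\widetilde{z}(k)\|^2]\le 9\alpha^2 N G^2\,S_k^2$, and the constant $50$ in~\eqref{eqn-theorem-norm-z-k-D-NC-mod}--\eqref{eqn-theorem-norm-z-k-squared-D-NC-mod} comfortably absorbs the numerical factors.

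The only part calling for care is Step~1 together with the index bookkeeping in Step~2: one must correctly track the extra $\widetilde{\mathcal{W}}(k)$ that the mD--NC consensus attaches to the gradient, so that the unwound products align with $\widetilde{\Psi}(k,t)$ across the full range $t+1,\dots,k$. Once that is in place the estimates are mechanical, and---in contrast with mD--NG, where the geometric-series bound of Lemma~\ref{lemma-bounds-on-sums} is essential---here the super-geometric bounds of Lemma~\ref{lemma-decay-moments-D-NC-modified} make the scalar sums trivially summable, so the remaining work is merely to verify that the numerical constant stays below $50$.
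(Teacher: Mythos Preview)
Your proposal is correct and follows essentially the same three-step approach as the paper's proof: deriving the disagreement recursion $\widetilde{z}(k)=(B(k)\otimes\widetilde{\mathcal W}(k))(\widetilde{z}(k-1)+u'(k-1))$, unwinding it into a sum over $\mathcal{B}(k,k-t-2)$ and $\widetilde{\Psi}(k,t)$, and then applying Lemma~\ref{lemma-upper-bound-on-norm-B-k-t} and Lemma~\ref{lemma-decay-moments-D-NC-modified}. The only cosmetic differences are that you absorb $B(t+1)e_1$ into the vector $b(t)$ (the paper keeps the factor $B(t+1)$ explicit and bounds $\|B(t+1)\|\le 3$), and that you estimate the final scalar sum by isolating the dominant $t=k-1$ term, whereas the paper uses the cruder $(k-t-1)(t+1)/k\le t+1$ before summing; both routes give the stated constant~$50$.
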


\begin{theorem}
\label{theorem-conv-rate-D-NC-modified}
Consider mD--NC in Algorithm~1 under Assumptions~\ref{assumption-random-weight-matrices}--\ref{assumption-bdd-gradients}.
Let $\|\overline{x}(0)-{{x^\star}}\| \leq R$, $R\geq0$. Then,
 after~$\mathcal{K}$ communication rounds (after~$k$ outer iterations)
 \begin{align*}
 %\label{eqn-comunication-rounds-modif}
 \mathcal{K} = \sum_{t=1}^{k}\tau_t \leq \frac{3}{-\log {\overline{\mu}}} \left[ \,(k+1)\log (N(k+1)) \,\right],
 \end{align*}
  i.e., $\mathcal{K}\!\sim\! O\left(k \log k\right)$, we have, at any node $i$, $k=1,2,\cdots$
\begin{align*}
%\label{eqn-theorem-D-NC-modif}
 \frac{\mathbb E \left[ f(x_i(k)) - f^\star \right]}{N} \leq
\frac{1}{k^2} \left( \frac{2}{\alpha}  R^2
+ 11 \,\alpha^2 L G^2 + \alpha G^2 \right),
\end{align*}
\end{theorem}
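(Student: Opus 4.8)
I would run the same reduction as for Theorem~\ref{theorem-convergence-rate-D-NG-modified}: analyze the network averages $\overline{x}(k),\overline{y}(k)$, which obey a \emph{centralized} accelerated-gradient recursion fed by an inexact first-order oracle whose error is the internode disagreement, and then invoke Theorem~\ref{theorem-consensus-D-NC-modified} to control that error. The two features that remove the logarithmic factor present in mD--NG are that the step size $\alpha$ is now constant, so the oracle modulus $L_{k-1}$ does not grow with $k$, and that, by Theorem~\ref{theorem-consensus-D-NC-modified}, $\mathbb{E}[\|\widetilde{z}(k)\|^2]$ decays like $1/k^4$ rather than $1/k^2$.

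\textbf{Averaged dynamics and inexact oracle.} Multiplying~\eqref{eqn-modified-D-NC-vector-form}--\eqref{eqn-modified-D-NC-vector-form-2} on the left by $\tfrac1N(\mathbf{1}^{\top}\otimes I)$ and using that each $\mathcal{W}(k)$, being a product of stochastic matrices, satisfies $\mathbf{1}^{\top}\mathcal{W}(k)=\mathbf{1}^{\top}$, one obtains $\overline{x}(k)=\overline{y}(k-1)-\tfrac{\alpha}{N}\widehat{g}_{k-1}$ and $\overline{y}(k)=(1+\beta_{k-1})\overline{x}(k)-\beta_{k-1}\overline{x}(k-1)$ with $\widehat{g}_{k-1}=\sum_{i}\nabla f_i(y_i(k-1))$, i.e.\ exactly~\eqref{eqn-CM-evolution-1}--\eqref{eqn-CM-evolution-2} but with the \emph{constant} modulus $L_{k-1}\equiv N/\alpha$ and with $\delta_{k-1}=L\|\widetilde{y}(k-1)\|^2$. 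As in~\cite{arxivVersion}, $(\widehat{f}_{k-1},\widehat{g}_{k-1})$ of~\eqref{eqn-inexact-oracle-info} is then a $(L_{k-1},\delta_{k-1})$ inexact oracle, the inequalities~\eqref{eqn-inexact-oracle-relations} holding for \emph{every} realization of $y(k-1)$.

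\textbf{Accumulating the recursion.} I would then apply Lemma~2 of~\cite{arxivVersion} with this constant modulus, which (just as it gave~\eqref{eqn-star-star-proof-random}) yields, surely, a per-step inequality of the form $(k+1)^2(f(\overline{x}(k))-f^\star)+\tfrac{2N}{\alpha}\|\overline{v}(k)-x^\star\|^2\le(k^2-1)(f(\overline{x}(k-1))-f^\star)+\tfrac{2N}{\alpha}\|\overline{v}(k-1)-x^\star\|^2+(k+1)^2L\|\widetilde{y}(k-1)\|^2$, with $\overline{v}(k)$ the usual extrapolation sequence. Since the coefficient $\tfrac{2N}{\alpha}$ is now independent of $k$, no division by $k$ is needed: telescoping from $k$ down to $0$, dropping the nonnegative left-over optimality gaps, using $\overline{v}(0)=\overline{x}(0)$ and $\|\overline{x}(0)-x^\star\|\le R$, and dividing by $N(k+1)^2$ gives
\[
\frac1N\bigl(f(\overline{x}(k))-f^\star\bigr)\;\le\;\frac{2R^2}{\alpha(k+1)^2}\;+\;\frac{L}{N(k+1)^2}\sum_{t=1}^{k}(t+1)^2\,\|\widetilde{y}(t-1)\|^2 .
\]
Taking expectations and inserting $\mathbb{E}[\|\widetilde{y}(t-1)\|^2]\le\mathbb{E}[\|\widetilde{z}(t-1)\|^2]=O(1/(t-1)^4)$ from~\eqref{eqn-theorem-norm-z-k-squared-D-NC-mod} (the indices $t\le2$, where $\widetilde{z}=0$ or $(t-1)^{-4}$ is replaced by $1$, handled separately), the crucial observation is that $\sum_{t\ge2}(t+1)^2/(t-1)^4<\infty$, so the sum is $O(\alpha^2NG^2)$ uniformly in $k$; after the $1/(N(k+1)^2)$ factor this is the $11\,\alpha^2LG^2/k^2$ term, and this is precisely where the logarithm of mD--NG disappears.

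\textbf{From the average to a node, and the communication count.} Finally, convexity together with Assumption~\ref{assumption-bdd-gradients} gives, as in~\eqref{eqn-star-star-proof-random-trian-trian}, $\tfrac1N(f(x_i(k))-f^\star)\le\tfrac1N(f(\overline{x}(k))-f^\star)+\tfrac{G}{\sqrt N}\|\widetilde{x}(k)\|$; taking expectations and using $\mathbb{E}[\|\widetilde{x}(k)\|]\le\mathbb{E}[\|\widetilde{z}(k)\|]=O(\alpha N^{1/2}G/k^2)$ from~\eqref{eqn-theorem-norm-z-k-D-NC-mod} produces the remaining $\alpha G^2/k^2$ term and finishes the displayed bound, while the communication count $\mathcal{K}=\sum_{t=1}^{k}\tau_t$ follows by summing~\eqref{eqn-tau-k-modified-D-NC} and using $\log t\le\log(k+1)$. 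The one point that needs care is keeping~\eqref{eqn-inexact-oracle-relations} and the telescoped inequality \emph{pathwise}, since $\delta_{k-1}$ is a random function of $y(k-1)$, and passing to expectation only at the very end so that the second-moment bound of Theorem~\ref{theorem-consensus-D-NC-modified} can be substituted; the genuinely heavy estimates (the fast decay of the matrix products $\widetilde{\Psi}(k,t)$ underlying that theorem) are already supplied by Lemma~\ref{lemma-decay-moments-D-NC-modified} and Section~\ref{section-intermediate-results}, so beyond that the argument is bookkeeping of summable, geometric-type sums.
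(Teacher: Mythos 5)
Your proposal follows essentially the same route as the paper: average the iterates to obtain the centralized recursion \eqref{eqn-CM-evolution-1}--\eqref{eqn-CM-evolution-2} with constant modulus $L_{k-1}=N/\alpha$, apply the inexact-oracle Lemma~2 of \cite{arxivVersion} and telescope to get $\frac{1}{N}(f(\overline{x}(k))-f^\star)\le \frac{2R^2}{\alpha k^2}+\frac{L}{Nk^2}\sum_{t=1}^k (t+1)^2\|\widetilde y(t-1)\|^2$ pathwise, then take expectations, use the $O(1/k^4)$ disagreement bound of Theorem~\ref{theorem-consensus-D-NC-modified} to see the sum is uniformly bounded, and convert to a per-node bound via Assumption~\ref{assumption-bdd-gradients}. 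The argument and the summability observation that removes the logarithm are exactly the paper's, so the proposal is correct and takes the same approach.
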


\textbf{Remark}. Theorem~\ref{theorem-conv-rate-D-NC-modified} implies $ \frac{\mathbb E \left[ f(x_i(k)) - f^\star \right]}{N}$ with mD--NC converges at rate $O(1/\mathcal K^{2-\xi})$ in the number of communications $\mathcal K$.
%This is proved analogously to Theorem~9 in~\cite{arxivVersion}.

\section{Proofs of Theorems~\ref{theorem-consensus-D-NC-modified} and~\ref{theorem-conv-rate-D-NC-modified}}
\label{section-proofs-mD-NC}
We now prove the convergence rate results for mD--NC. Subsection~\ref{subsection-proof-consensus-mD-NC-random-net}
proves Theorem~\ref{theorem-consensus-D-NC-modified} and Subsection~\ref{subsection-proof-mD-NC-rate-convergence}
 proves Theorem~\ref{theorem-conv-rate-D-NC-modified}.
 %Finally, Subsection~\ref{subsection-discussion}
% discusses some implications of our results and extends the convergence rate results (of both mD--NG and mD--NC) to scenarios when the global parameters ($L$, $N$, and $\overline{\mu}$) are not available beforehand.
%%
%
\subsection{Proof of Theorem~\ref{theorem-consensus-D-NC-modified}}
\label{subsection-proof-consensus-mD-NC-random-net}
For simplicity, we prove for $d=1$, but the proof extends to generic~$d>1$.
Similarly to Theorem~\ref{theorem-consensus-D-NG-modified},
we proceed in three steps. In Step~1,
we derive the dynamics for the
disagreement~$\widetilde{z}(k)=\left( \widetilde{y}(k)^\top,\,\widetilde{x}(k)^\top \right)$.
 In Step~2, we unwind the disagreement equation and express
 $\widetilde{z}(k)$ in terms of the $\widetilde{\Psi}(k,t)$'s in~\eqref{eqn-psi-k-t}
  and $\mathcal{B}(k,t)$ in~\eqref{eqn-B-products}.
  Step~3 finalizes the proof using bounds previously established
  on the norms of $\widetilde{\Psi}(k,t)$ and~$\mathcal{B}(k,t).$

\textbf{Step~1. Disagreement dynamics}.
We write the dynamic equation for $\widetilde{z}(k)$.
Recall $B(k)$ in~\eqref{eqn-B-k-definition}.
Multiplying~\eqref{eqn-modified-D-NC-vector-form}--\eqref{eqn-modified-D-NC-vector-form-2} from the left by $(I-J)$, and using
$(I-J){\mathcal W}(k)=\widetilde{\mathcal W}(k)(I-J)$, obtain for $k=1,2,\cdots$
\begin{align}
\label{eqn-recursion-D-NC-modified}
\widetilde{z}(k) = \left( B(k) \otimes \widetilde{\mathcal W}(k)\right) \,\left( \widetilde{z}(k-1) + u^\prime(k-1)\right) ,
\end{align}
and $\widetilde{z}(0)=0$,
where
{\small{
\begin{eqnarray}
\label{eqn-u-k-prime}
u^\prime(k-1) = -\left[ \begin{array}{cc} \alpha_{k-1}\,\nabla F(y(k-1))\\
0 \end{array} \right].
\end{eqnarray}
}}

\textbf{Step~2: Unwinding the recursion~\eqref{eqn-recursion-D-NC-modified}}.
 Recall~$\mathcal{B}(k,t)$ in~\eqref{eqn-B-products}.
Unwinding~\eqref{eqn-recursion-D-NC-modified} and using
$(A \otimes B)(C \otimes D)=(AC) \otimes (BD)$, obtain for $k=1,2,\cdots$
\begin{align}
\label{eqn-unwinded-D-NC-modified}
\widetilde{z}(k)\!\! = \!\!\sum_{t=0}^{k-1}\!\! \left(\! \mathcal{B}(k,k-t-2)\!B(t+1) \!\!\otimes \!\!\widetilde{\Psi}(k,t) \!\right)\!\!u^\prime(t),
\end{align}
 The quantities
$u^\prime(t)$ and $\widetilde{\Psi}(k,t)$ in~\eqref{eqn-unwinded-D-NC-modified}
are random, while the $\mathcal{B}(k,k-t-2)$'s are deterministic.

\textbf{Step~3: Finalizing the proof}. Consider $u^\prime(t)$ in~\eqref{eqn-u-k-prime}.
By Assumption~\ref{assumption-bdd-gradients},
$\|\nabla F(y(t))\| \leq \sqrt{N} G$.
From this, obtain $\|u^\prime(t)\| \leq \alpha\,\sqrt{N}\,G$, for any
random realization of~$u^\prime(t)$.
Using this, Lemma~\ref{lemma-upper-bound-on-norm-B-k-t},
the sub-multiplicative and sub-additive properties of norms,
and $\|B(t+1)\|\leq 3$, $\forall t$, get from~\eqref{eqn-unwinded-D-NC-modified}:
\begin{align*}
\|\widetilde{z}(k)\|& \leq 3\left( 8\alpha \sqrt{N}G \right)
\sum_{t=0}^{k-1} \|\widetilde{\Psi}(k,t)\|  \frac{(k-t-1)(t+1)}{k}\\
&
+
3\left( 5  \alpha  \sqrt{N}\,G \right)
\sum_{t=0}^{k-1} \|\widetilde{\Psi}(k,t)\|. %\nonumber
\end{align*}
Taking expectation, using
$\frac{(k-t-1)(t+1)}{k} \leq t+1$ and Lemma~\ref{lemma-decay-moments-D-NC-modified}, \eqref{eqn-theorem-norm-z-k-D-NC-mod} follows from
\begin{align*}
&\mathbb E\! \left[\! \|\widetilde{z}(k)\| \!\right]\!\! \leq\!\! 3\!\!\left( \!8 \alpha\sqrt{N}G \!\right)\!\!
\sum_{t=0}^{k-1}\!\! \frac{1}{k^3(k\!-\!1)^3\cdots (t\!+\!2)^3(t\!+\!1)^2} \\
&
+
3\left( 5 \,\alpha\, \sqrt{N}\,G \right)\,
\sum_{t=0}^{k-1} \frac{1}{k^3(k-1)^3\cdots (t+2)^3(t+1)^3} \\
&\leq
 3\left( 8 \,\alpha\, \sqrt{N}\,G \right)\,
\frac{1}{k^2} \\
&+
3\left( 5 \,\alpha\, \sqrt{N}\,G \right)\,
 \frac{1}{k^2} \leq \frac{50\,\alpha \,\sqrt{N}\,G}{k^2}.
\end{align*}

We prove~\eqref{eqn-theorem-norm-z-k-squared-D-NC-mod}.
Consider $\|\widetilde{z}(k)\|^2.$
Get from~\eqref{eqn-unwinded-D-NC-modified}:
\begin{align*}
&\|\widetilde{z}(k)\|^2
\!\! =\!\!\!
 \sum_{t=0}^{k-1}\!\sum_{s=0}^{k-1}\!\!
 u^\prime(t)^\top\!\!\!\left( \!\!B(t\!+\!1)^\top\!\!\mathcal{B}(k,k\!-\!t\!-\!2)^\top\!\! \!\otimes \!\!\widetilde{\Psi}(k,t)^\top\!\! \right)\\
 &
 \times\,\left(
 \mathcal{B}(k,k-t-2)B(t+1) \otimes \widetilde{\Psi}(k,s)\right)\,u^\prime(s) \\
 &=
 \sum_{t=0}^{k-1}\sum_{s=0}^{k-1}
 u^\prime(t)^\top \,( B(t+1)^\top\mathcal{B}(k,k-t-2)^\top \\
 &\times \mathcal{B}(k,k\!-\!s\!-\!2)B(t\!+\!1) )\otimes\left(
 \widetilde{\Psi}(k,t)^\top  \widetilde{\Psi}(k,s)\right)\,u^\prime(s),
\end{align*}
where the last inequality uses $(A \otimes B)(C \otimes D)
= (A C) \otimes (B D)$.
 By the sub-additive
 and sub-multiplicative properties of norms, and
 $\|B(t+1)\|\leq 3$, $\forall t$, obtain:
\begin{align*}
\nonumber
\|\widetilde{z}(k)\|^2
 &\leq\!
9\!\sum_{t=0}^{k-1}\sum_{s=0}^{k-1}\!
 \left\| \mathcal{B}(k,k-t-2)\right\|\! \left\|\mathcal{B}(k,k-s-2) \right\|\\
 &
%\hspace{1.25cm}
 \left\| \widetilde{\Psi}(k,t)^\top \, \widetilde{\Psi}(k,s)\right\|
 \,\|u^\prime(t)\|\,\|u^\prime(s)\| \\
 &\leq
9 \sum_{t=0}^{k-1}\sum_{s=0}^{k-1}
 \left( 8 (t+1) + 5\right)\left( 8(s+1) + 5 \right)  \\
 &
% \hspace{1.25cm}
 \left\|\widetilde{\Psi}(k,t)^\top \widetilde{\Psi}(k,s)\right\|
 {{\alpha }^2\,N G^2}, \nonumber
\end{align*}
where the last inequality uses $(k-s-1)(s+1)/k \leq s+1$, Lemma~\ref{lemma-upper-bound-on-norm-B-k-t}
 and $\|u(t)\| \leq \left(\alpha \,\sqrt{N} G\right)/(t+1).$
 Taking expectation and applying Lemma~\ref{lemma-decay-moments-D-NC-modified}, we obtain:
 {
 \allowdisplaybreaks
\begin{align*}
 &\mathbb E \left[ \|\widetilde{z}(k)\|^2    \right] \leq  9\left({\alpha}^2 N G^2\right)
 \sum_{t=0}^{k-1}\sum_{s=0}^{k-1}
 \left( 8(t+1) + 5\right)\\
 &
 \left( 8(s+1)+ 5 \right)
 \left( \frac{1}{k^3\cdots (t+1)^3} \right) \,\left(  \frac{1}{k^3\cdots (s+1)^3}  \right) \\
 &
 =
 9\left( {\alpha}^2\,N G^2\right)
 \left( \sum_{t=0}^{k-1}  \left( 8(t+1) + 5\right) \frac{1}{k^3\cdots (t+1)^3}  \right)^2 \\
 &
 \leq
 \frac{ 50^2\,{\alpha}^2\,N G^2} {k^4}.
 \end{align*}
 }
%
%
%where the last inequality applies Lemma~\ref{lemma-bounds-on-sums}.
Thus, the bound in~\eqref{eqn-theorem-norm-z-k-squared-D-NC-mod} and Theorem~\ref{theorem-consensus-D-NC-modified} is proved.

\subsection{Proof outline of Theorem~\ref{theorem-conv-rate-D-NC-modified}}
\label{subsection-proof-mD-NC-rate-convergence}
 We outline the proof since similar to Theorem~8 in~\cite{arxivVersion}~(version v2).
 Consider the global averages $\overline{x}(k)$
  and $\overline{y}(k)$ as in mD--NG.
  Then, $\overline{x}(k)$ and $\overline{y}(k)$
  follow~\eqref{eqn-CM-evolution-1}--\eqref{eqn-CM-evolution-2},
   with $L_{k-1}:=N/\alpha$ and $\widehat{g}_{k-1}$
    as in~\eqref{eqn-inexact-oracle-info}.
    Inequalities~\eqref{eqn-inexact-oracle-relations} hold
    with $L_{k-1}:=N/\alpha$ and $\widehat{f}_{k-1}$ and $\widehat{g}_{k-1}$
     as in~\eqref{eqn-inexact-oracle-info}.
     Applying Lemma~2 in~\cite{arxivVersion} gives:
    \begin{align*}
    \frac{1}{N}\left( f(\overline{x}(k)) -f^\star \right)
    &
    \leq
    \frac{2}{\alpha\,k^2}\|\overline{x}(0)-x^\star\|^2\\
    &
    +\frac{L}{N\,k^2}\sum_{t=1}^k \|\widetilde{y}(t-1)\|^2(t+1)^2.
    \end{align*}
    (Compare the last equation with~\eqref{eqn-star-star-proof-random-trian}.)
    The remainder of the proof proceeds analogously
    to that of Theorem~\ref{theorem-convergence-rate-D-NG-modified}.
\section{Discussion and extensions}
\label{section-discussion}
We discuss extensions and corollaries: 1) relax the prior knowledge on $L,\overline{\mu}$, and $N$ for both mD--NG and mD--NC; 2) establish
rates in the convergence in probability of mD--NG and mD--NC; 3) show
almost sure convergence with mD--NC; and 4) establish a convergence rate in the second moment
with both methods.

\textbf{Relaxing knowledge of $L,\overline{\mu}$, and $N$}. %With mD--NG and mD--NC, we required
%a priori knowledge of the parameters $L,\overline{\mu}$, and $N$.
% Actually, the convergence rates (the same with D--NG and a deteriorated one with D--NC)
%are guaranteed if this knowledge is not available, as we explain next.
%
 mD--NG requires only knowledge of $L$ to set the step-size $\alpha_k=c/(k+1)$, $c \leq 1/(2 L)$.
  We demonstrate
 that the rate $O(\log k/k)$ (with a deteriorated constant) still holds if nodes use arbitrary $c>0$.
   Initialize all nodes to $x_i(0)=y_i(0)=0$, suppose that $c>1/(2 L)$, and let $k^\prime = 2 \,c \,L$.
  Applying Lemma 2 in~\cite{arxivVersion}, as in the proof of Theorem~5~(b) in~\cite{arxivVersion},
  for all $k > k^\prime$, surely:
 \begin{align}
 \label{eqn-star-star-proof-random-2}
 &\frac{(k+1)^2-1}{k+1} \left( f(\overline{x}(k))-f^\star\right) \\
 &
 \nonumber
 \leq
 k^\prime \!\left(\! f(\overline{x}(k^\prime\!-\!1))\!-\!f^\star\!\right)
 \!+\!\frac{2 N }{c}\!\left(2\|\overline{v}(k^\prime-1)\|^2\!+\! 2\|x^\star\|^2 \right)\\
 &
 \nonumber
 +
 \sum_{t=1}^k \frac{(t+1)^2}{t} L\|\widetilde{y}(t-1)\|^2.
 \end{align}
  Further, from the proof of Theorem~5~(b) in~\cite{arxivVersion}, surely:
  \begin{equation}
  \label{eqn-modificat}
  \|\overline{v}(k^\prime-1)\|^2 \leq (2k^\prime+1)^2 \left(3^{k^\prime}\right)^2\,2\,c\,G.
  \end{equation}
  Finally, Theorem~\ref{theorem-consensus-D-NG-modified}
  holds unchanged for $c>1/(2L)$. Thus
  $\sum_{t=1}^k \frac{(t+1)^2}{t} L \mathbb E\left[\|\widetilde{y}(t-1)\|^2\right]=O(\log k)$.
 Multiplying~\eqref{eqn-star-star-proof-random-2} by $\frac{k+1}{(k+1)^2-1}$, taking expectation
 on the resulting inequality, and
 applying Theorem~\ref{theorem-consensus-D-NG-modified}, obtain desired
 $O(\log k/k)$ rate.
%  The algorithm's progress per iteration (see~\eqref{eqn-star-star-proof-random})
%   holds whenever $\frac{N}{\alpha_{k-1}}=\frac{N k}{c} \geq 2 N L$.
%   Thus, the progress equation~\eqref{eqn-star-star-proof-random} holds for any $k \geq 2\,c\,L$,
%   and the rate $O(\log k/k)$ is achieved, with a deteriorated constant.
%   This can be seen from the proof of Theorem~5~(b) in~\cite{arxivVersion}.
%   Namely, equation~(73) in~\cite{arxivVersion} continues
%   to hold here, surely. Then, we obtain the $O(\log k/k)$ rate
%   by taking expectation in equation~(69) of~\cite{arxivVersion}.

 mD--NC uses the constant step-size $\alpha \leq 1/(2 L)$
 and $\tau_k$ in~\eqref{eqn-tau-k-modified-D-NC}.
 To avoid the use of $L,\overline{\mu}$, and $N$, we set in mD-NC:
1) a diminishing step-size $\alpha_k=1/k^p$, $p\in (0,1]$;
and 2) $\tau_k=k$ (as suggested in~\cite{AnnieChen}).
 We show the adapted mD--NC achieves rate $O(1/k^{2-p})$.
Let $k^{\prime \prime} = (2 L)^{1/p}$. Then,
 by Lemma~2 in~\cite{arxivVersion}, $\forall k \geq k^{\prime \prime}$, surely:
 \begin{align}
 \label{eqn-star-star-proof-random-3}
 &
 \frac{(k+1)^2-1}{(k+1)^p} \left( f(\overline{x}(k))-f^\star\right) \\
 &
 \nonumber
\hspace{2cm}
  \leq
 (k^\prime)^{2-p} \left( f(\overline{x}(k^\prime-1))-f^\star\right)\\
 &
 +\!\! 2 N\!\!\left(2\|\overline{v}(k^\prime\!-\!1)\|^2\!\!+\! 2\|x^\star\|^2\right)\!\!+\!\!
 \sum_{t=1}^k \!\!\frac{(t+1)^2}{t^p}\! L\|\widetilde{y}(t\!-\!1)\|^2\!. \nonumber
 \end{align}
Further, \eqref{eqn-modificat} holds here as well (surely.)
 Modify the argument on the sum in~\eqref{eqn-star-star-proof-random-3}.
By Lemma~\ref{lemma-decay-moments-D-NC-modified} and $\tau_k=k$, we have:
$\mathbb E \left[\| \widetilde{\mathcal W}(k)\|^2 \right] \leq N^2 \overline{\mu}^{2\,k}$.
From this, $\forall k \geq k^{\prime \prime \prime}:=
\left( \frac{6(\log N+1)}{-\log \overline{\mu}}\right)^2  $: $\mathbb E \left[\| \widetilde{\mathcal W}(k)\|^2 \right] \leq \frac{1}{k^4}$.
Next, consider
\begin{align*}
\widetilde{\Psi}(k,s)^{\!\!\top}\!\widetilde{\Psi}(k,t)
\!\!=\!\!
\left(\!\widetilde{\mathcal{W}}(k) \!\cdots\! \widetilde{\mathcal{W}}(s+1)\!\!\right)^{\!\!\!\top}\!\!\!\!
\left(\!\widetilde{\mathcal{W}}(k) \!\cdots\! \widetilde{\mathcal{W}}(t+1)\!\!\right)\!\!,
\end{align*}
 for arbitrary $k \geq k^{\prime \prime \prime}$,
 and arbitrary $s,t\! \in \!\{0,1,\!\cdots\!,k\!-\!1\}$. Clearly,
 %\[
 $
 \left\|  \widetilde{\Psi}(k,s)^\top \widetilde{\Psi}(k,t) \right\|
 \!\!\leq\!\! \left\| \widetilde{\mathcal{W}}(k) \right\|^2\!\!\!,
% \]
$
and hence:
\[
\mathbb E \!\left[ \left\|  \widetilde{\Psi}(k,s)^{\!\!\top}\!\! \widetilde{\Psi}(k,t) \right\|\right]
\!\!\leq\!\!
\frac{1}{k^4},\forall s,t\!\! \in\!\! \{0,1,\!\!\cdots\!\! ,k-1\},\!k\!\! \geq \!\!k^{\prime \prime \prime}\!\!.
\]
Now, from step~3 of the proof of Theorem~\ref{theorem-consensus-D-NC-modified}, the above implies: $
\mathbb E \left[ \|\widetilde{y}(k)\|^2 \right]
\leq
\mathbb E \left[ \|\widetilde{z}(k)\|^2 \right]\leq \frac{C}{k^4}$,
for all $k \geq k^{\prime \prime \prime}$, where $C>0$ is independent of $k$.
Hence, we obtain the desired bound on the sum:
\[
 \sum_{t=1}^{\infty} \frac{(t+1)^2}{t^p} L \mathbb E\left[\|\widetilde{y}(t-1)\|^2\right] = O(1).
\]
Using this, \eqref{eqn-modificat}, multiplying \eqref{eqn-star-star-proof-random-3} by
 $\frac{(k+1)^p}{(k+1)^2-1}$, and
taking expectation in~\eqref{eqn-star-star-proof-random-3}, obtains the rate $O(1/k^{2-p})$.
%
% the progress per iteration (similar to~\eqref{eqn-star-star-proof-random})
%  holds whenever $\frac{N}{\alpha_{k-1}} \geq 2 NL$,
%  i.e., for all $k \geq (2 L)^{1/p}$.
%Further, if nodes do not know $\overline{\mu}$ nor $N$ to set
%$\tau_k$ in~\eqref{eqn-tau-k-modified-D-NC}, they can set
%$\tau_k=k,$ as explained in~\cite{AnnieChen}. It can be shown
%that the adapted method (without knowledge of $L,\overline{\mu}$ and $N$)
% achieves the rate $O(1/k^{2-p})$ and $O(1/{\mathcal{\mathcal K}}^{1-p/2}).$

\textbf{Convergence in probability and almost sure convergence}. Through the Markov inequality,
Theorems~\ref{theorem-convergence-rate-D-NG-modified} and~\ref{theorem-conv-rate-D-NC-modified}
 imply, for any $\epsilon >0$, $k \rightarrow \infty,\forall i$:
  \begin{align*}
   & \mathrm{mD-NG}:\,\,\mathbb P \left( k^{1-\xi}\,\left(f(x_i(k))-f^\star \right) > \epsilon \right) \rightarrow 0\:\:
    \\
   & \mathrm{mD-NC}:\,\,
   \mathbb P \left( k^{2-\xi}\,\left(f(x_i(k))-f^\star \right) > \epsilon \right) \rightarrow 0,
     \end{align*}
   where $\xi>0$ is arbitrarily small.
       Furthermore, by the arguments in, e.g., (\cite{jadbabaie_on_consensus}, Subsection~{IV--A}),
    with mD--NC, we have that, $\forall i$,
    $f(x_i(k))-f^\star \rightarrow 0$, almost surely.

\textbf{Convergence rates in the second moment}. Consider a special case of
the random network model $G(k)$ in Assumptions~\ref{assumption-random-weight-matrices} and~\ref{assumption-connected-random}
 that supports a random instantiation of $W(k)$:
 $G(k)=(\mathcal{N},E)$, with $E=\left\{\{i,j\}:\,W_{ij}(k)>0,\:i<j \right\}$.
 We assume $G(k)$ is connected with positive probability.
 This holds with spatio-temporally independent
 link failures, but not with pairwise gossip, where
 one edge occurs at a time,
 hence all realizations of $G(k)$ are disconnected. We establish the bounds on the second moment
of the optimality gaps:
   \begin{align}
   \label{eqn-second-moment-mD-NG}
   & \mathrm{mD-NG}\!:\!\mathbb E\!\! \left[\!\left(f(x_i(k))\!-\!f^\star\! \right)^2\!\right]\!\! =\!\! O\left(\!\!
   \frac{\log^2 k}{k^2}\!\!\right)\!\!,\forall i\!, \\
   \label{eqn-second-moment-mD-NC}
   & \mathrm{mD-NG}\!:\!\mathbb E\!\! \left[\!\left(f(x_i(k))\!-\!f^\star\! \right)^2\!\right]\!\! = \!\!O\left(\frac{1}{k^4}\right),\forall i,
   \end{align}
where \eqref{eqn-second-moment-mD-NC} holds for mD--NC with a modified value of $\tau_k$ (see Appendix~\ref{subsection-appendix-moments}.)
We interpret~\eqref{eqn-second-moment-mD-NG}, while~\eqref{eqn-second-moment-mD-NC} is similar.
 Result~\eqref{eqn-second-moment-mD-NG} shows that, not only the mean of the optimality gap decays
as $O(\log k/k)$ (by Theorem~\ref{theorem-convergence-rate-D-NG-modified}),
but also the standard deviation is $O(\log k /k)$.

\section{Simulation example}
\label{section-simulation-example}
%
%
%A simulation corroborates convergence rates
%of mD--NG and mD--NC with link failures,
%as well as rates faster than the rates of the method in~\cite{nedic_T-AC}.
%We also compare D--NG and mD--NC with the original variants in~\cite{arxivVersion}. Finally,
%we show that the weight optimization in~\cite{weightOpt}
%reduces the convergence constant of mD--NG.
%
 We compare  mD--NG
       and mD--NC, D--NG and D--NC in~\cite{arxivVersion}, and the methods in~\cite{nedic_T-AC,nedic_novo}. We initialize
       all to $x_i(0)=y_i(0)=0$, $\forall i$. We
       generate one sample path (simulation run),
       and estimate the average normalized optimality gap
  $\mathbf{err_f}=\frac{1}{N}\sum_{i=1}^N \frac{f(x_i)-f^\star}{f(0)-f^\star}$
   versus the total number $\mathcal K^\prime$
   of scalar transmissions, across all nodes. We count both the successful and failed
     transmissions. All our plots are in $\log_{10}-\log_{10}$ scales.

\textbf{Setup}. Consider a connected geometric supergraph $\mathcal G=(\mathcal N,E)$
   generated
   by placing $10$~nodes at random on a unit 2D
   square and connecting the nodes whose distance is less than a prescribed radius ($26$ links). We consider random and static networks. With the random graph, nodes fail with probability~$.9$
             For online links $\{i,j\}\in E$,
       the weights
       $W_{ij}(k)=W_{ji}(k)=1/N=1/10$  and $W_{ii}(k)=1-\sum_{j \in O_i(k)-\{i\}}W_{ij}(k)$, $\forall i$.
       The static network has
       the same supergraph $\mathcal G$,
       and, $\forall \{i,j\} \in E$, we set
       $W_{ij}=W_{ji}=1/N$.
        With D--NG and mD--NG,
        the step-size is $\alpha_k=1/(k+1)$,
        while with D--NC and mD--NC,
         $\alpha=1/2$ and with~\cite{nedic_T-AC},
         we use $\alpha_k=1/\sqrt{k}$.
         With random networks,
         for both variants of D--NC, we set
         $\tau_k$ as in~\eqref{eqn-tau-k-modified-D-NC};
         with static networks, we
         use $\tau_k=\left\lceil  \frac{3\,\log k }{-\log {\overline{\mu}}}  \right\rceil$.
          (As indicated in Section~\ref{section-mD-NC}, the $\log N$ term is not needed with static networks.)

We use Huber loss cost functions
arising, e.g., in distributed robust estimation in sensor networks~\cite{Rabbat};
$f_i(x) = \|x-\theta_i\|-1/2$, else,
 $\theta_i \in {\mathbb R}$. The $f_i$'s
 obey Assumptions~\ref{assumption-f-i-s} and~\ref{assumption-bdd-gradients}.
 We set $\theta_i=\pm 4(1+\nu_i)$
 and $\nu_i$   is generated randomly from the uniform distribution on $[-0.1,0.1]$. For $i=1,2,3$, we use the $+$ sign and
 for $i=4,\cdots ,10$ the $-$ sign.

    \textbf{Results: Link failures}. Figure~\ref{figure-1-link-failures}~(top) shows that the convergence rates (slopes) of mD-NG, mD-NC, and D-NC, are better than that of the method in~\cite{nedic_T-AC}. All methods converge, even with severe link failures, while D-NG diverges, see Figure~\ref{figure-1-link-failures}~(second from top plot).

\textbf{Results: Static network}. Figure~\ref{figure-1-link-failures} (second from bottom) shows mD-NG, mD-NC, D-NG, D-NC, and the method in~\cite{nedic_novo} on a static network.  As expected with a static network, D-NG performs slightly better than mD-NG, and both converge faster than~\cite{nedic_novo}. D-NC and mD-NC perform similarly on both static and random networks. The bottom plot in Figure~\ref{figure-1-link-failures} shows mD-NG and D-NG when mD-NG is run with Metropolis weights~$W$, \cite{BoydFusion}, while D-NG, because it requires positive definite weights, is run with positive definite $W^\prime =\frac{1.01}{2}\,I+\frac{0.99}{2} W$. The D-NG performs only marginally better than the mD-NG, which has a larger (worse) $\overline{\mu}$.
    \begin{figure}[thpb]
      \centering
       \includegraphics[height=1.25 in,width=3.18 in]{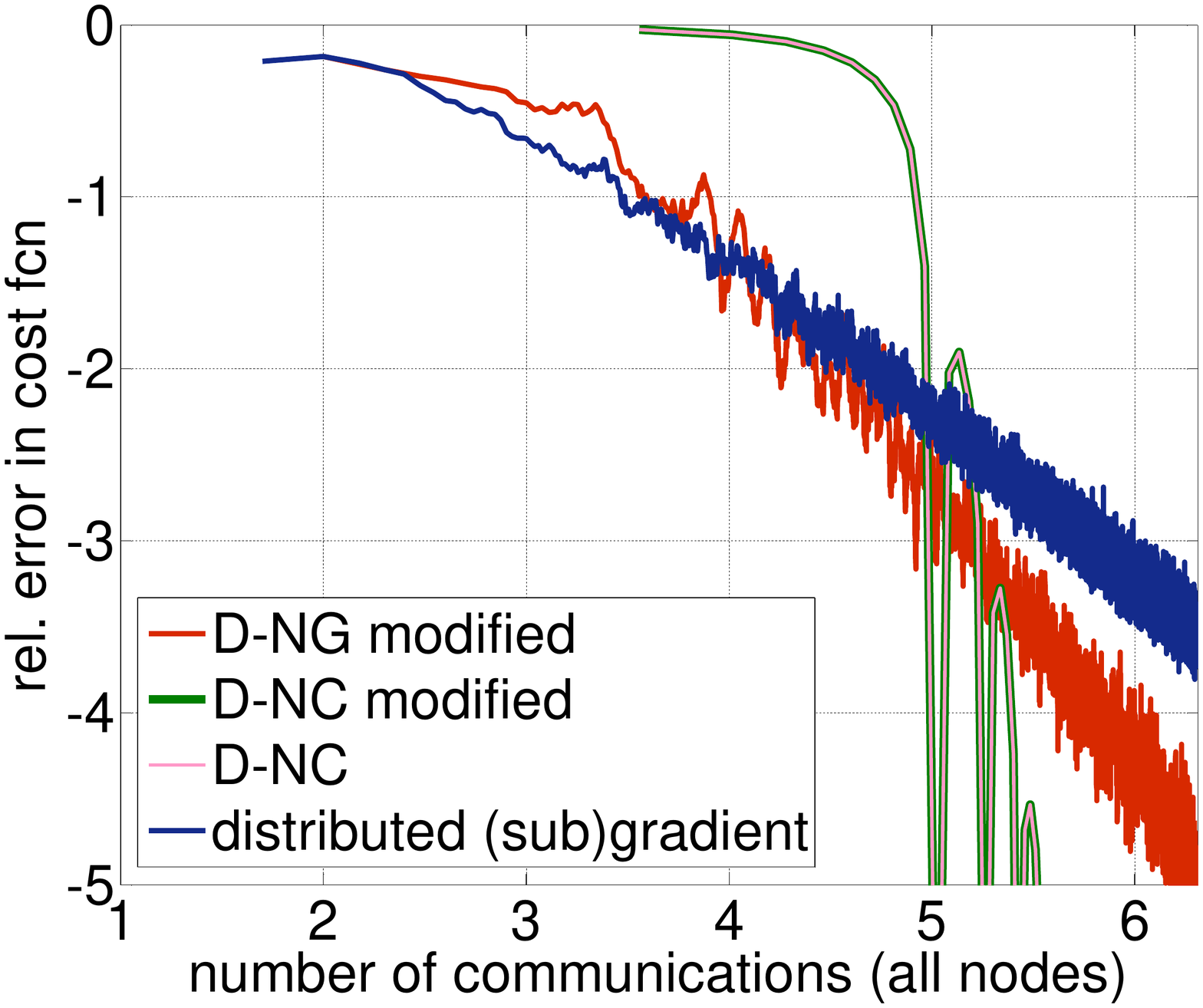}
       \includegraphics[height=1.25 in,width=3.18 in]{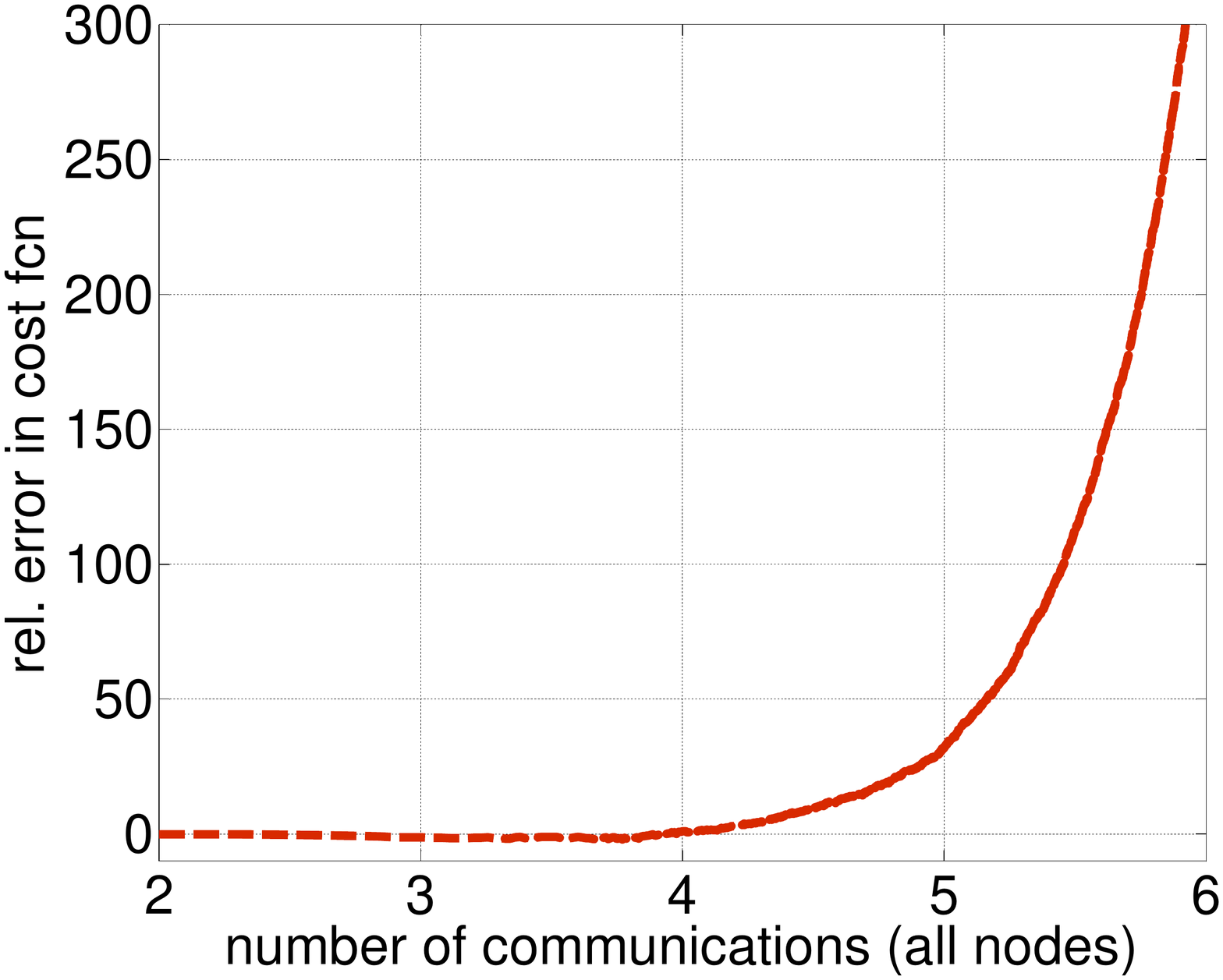}
       \includegraphics[height=1.25 in,width=3.18 in]{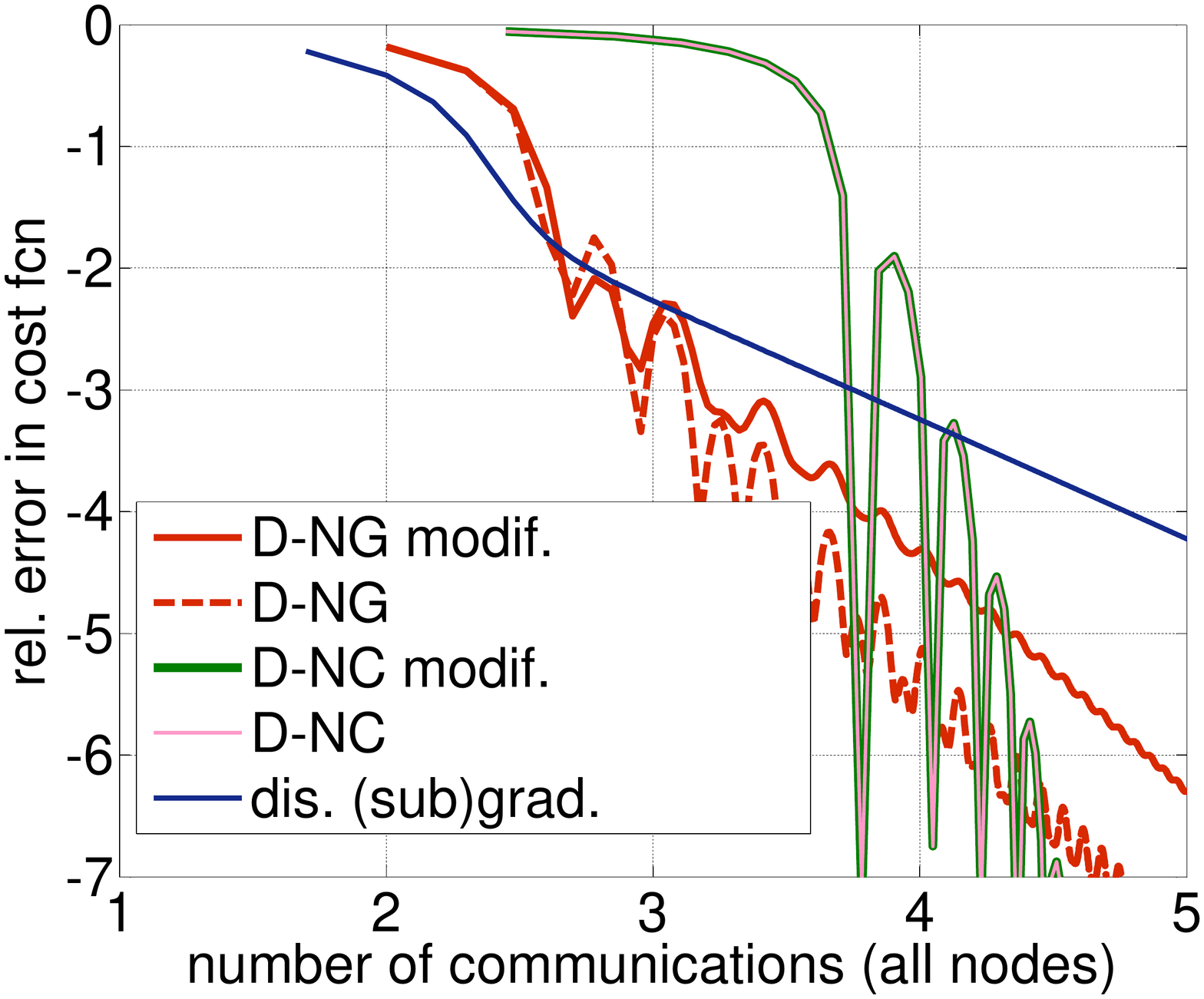}
       \includegraphics[height=1.25 in,width=3.18 in]{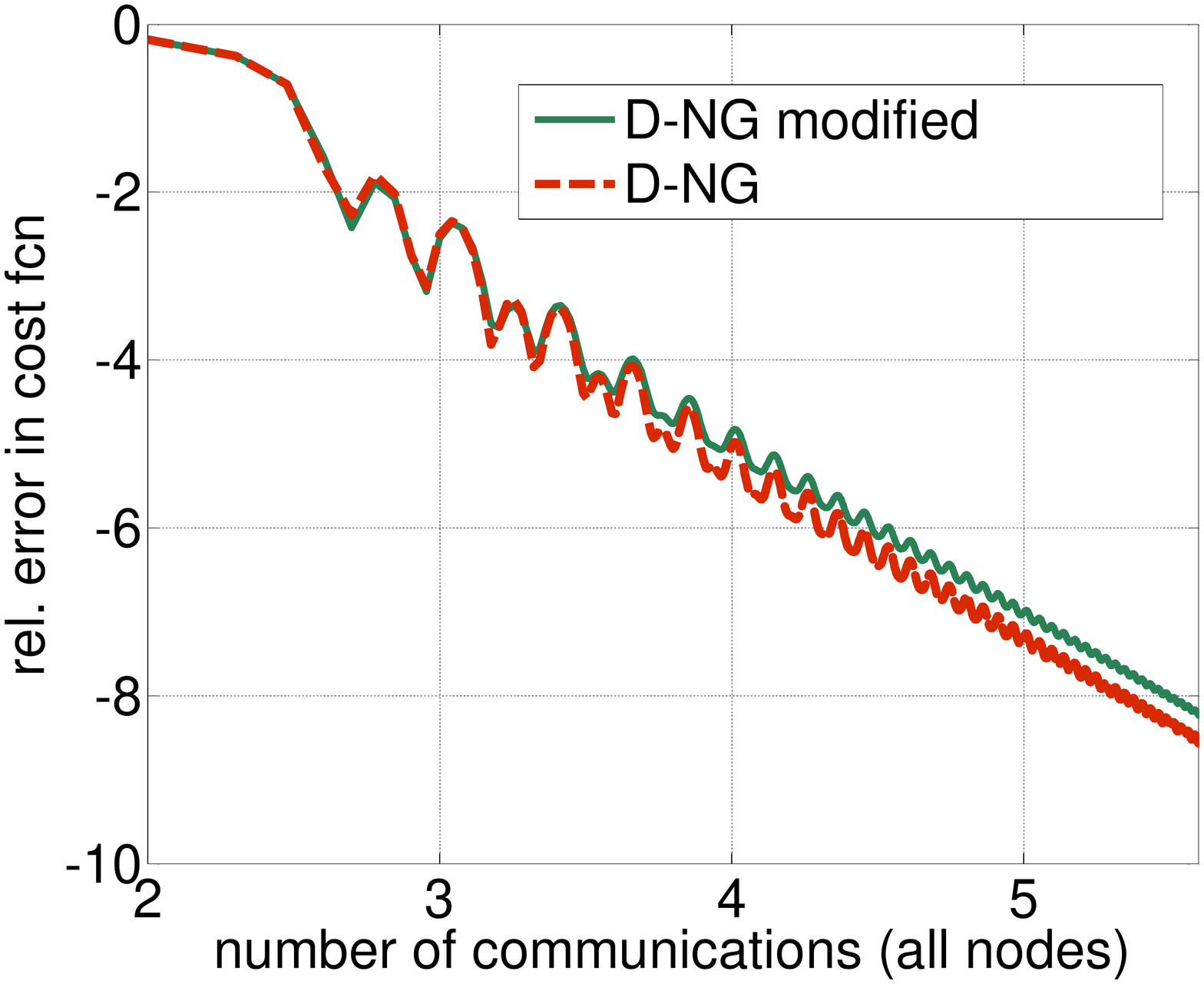}
       \caption{Average normalized optimality gap
  $\mathbf{err_f}$ %=\frac{1}{N}\sum_{i=1}^N \frac{f(x_i)-f^\star}{f(0)-f^\star}$
   vs.~total number $\mathcal K^\prime$
   of scalar transmissions, across all nodes ($\log_{10}\!-\!\log_{10}$ scale.)
    Two top plots: link failures;
    Two bottom plots: Static network, with bottom comparing D--NG and mD--NG
     when $W$ is not positive definite.
    }
      \label{figure-1-link-failures}
\end{figure}
\begin{figure}[thpb]
      \centering
       \includegraphics[height=1.25 in,width=3.18 in]{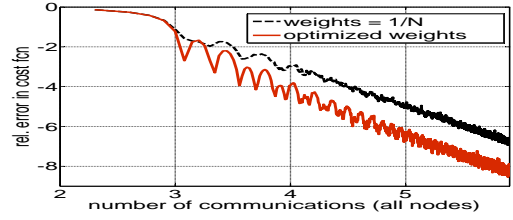}
       \caption{Average normalized optimality gap $\mathbf{err_f}$ vs.~$\mathcal K^\prime$ ($\log_{10}-\log_{10}$ scale)
    for $N=20$-node network and mD--NG method with different weight
    assignments; Red, solid line: optimized weights according to~\cite{weightOpt};
    black, dotted line: $\widehat{W}_{ij}=1/N$, $\forall \{i,j\} \in E.$\label{fig:weightOpt}
    }
      \label{figure-2-link-failures}
\end{figure}

\textbf{Weight optimization}.
Figure~\ref{fig:weightOpt} shows $\mathbf{err_f}$ versus $\mathcal{K}^\prime$ for uniform weights $1/N$ and the optimized weights in~\cite{weightOpt} on a $20$-node, $91$-edge geometric graph (radius $\delta_0=.55$). Links fail independently in time and space with probabilities $P_{ij}=0.5 \times \frac{\delta_{ij}^2}{\delta_0^2}$. The losses are Huber $\theta_i=\pm4(1+\nu_i)$: $+$ for nodes $i=1\cdots7$ and $-$ for $i=8\cdots20$. The $\nu_i$'s are as before. The two plots have the same rates (slopes). The optimized weights lead to better convergence constant  (agreeing with Theorem~\ref{theorem-convergence-rate-D-NG-modified}), reducing the communication cost for the same accuracy.

\section{Conclusion}
\label{section-conclusion-random}
We considered distributed optimization
over random networks where $N$
nodes minimize the sum $\sum_{i=1}^N f_i(x)$
 of their individual convex costs.
 We model the random network
 by a sequence $\{W(k)\}$
  of independent, identically distributed
  random matrices that take values in
  the set of symmetric, stochastic matrices with positive diagonals.
  The $f_i$'s are convex and have Lipschitz continuous
  and bounded gradients.
  We present mD--NG and mD--NC that are resilient to link failures.
   We establish their convergence in terms of the expected optimality gap of the cost function at arbitrary node~$i$:
 mD--NG achieves rates~$O\left(\log k/k\right)$
and~$O\left( \log \mathcal K/\mathcal K \right)$,
where~$k$ is the number of per-node
gradient evaluations and~$\mathcal K$
 is the number of per-node communications; and
 mD--NC has rates $O(1/k^2)$ and $O(1/\mathcal K^{2-\xi})$,
 with $\xi>0$ arbitrarily small.
 % Finally, we extend mD--NC
%  to constrained optimization with compact constraints
%  and establish the same rates.
   Simulation examples
 with link failures and Huber loss functions illustrate our findings.
\section*{Appendix}
\subsection{Proofs of Lemmas~\ref{lemma-decay-moments} and~\ref{lemma-decay-moments-D-NC-modified}}
\label{app:A}
\begin{proof}[Proof of Lemma~\ref{lemma-decay-moments}]
We prove~\eqref{eqn-theorem-phi-2}. For $t=k-1$,
 $\widetilde{\Phi}(k,t)=I$ and~\eqref{eqn-theorem-phi-2} holds.
 Fix $t$, $0 \leq t \leq k-2$.
 For $N \times N$ matrix $A$:
  $\|\widetilde{\Phi}(k,t)\|^2 \leq N \sum_{i=1}^N \left\|  \widetilde{\Phi}(k,t)\,e_i \right\|^2$,
  $e_i$ is the $i$-th canonical vector. Using this, taking expectation:
  \begin{align}
  \label{eqn-A-squared}
  \mathbb E \left[\left\|\widetilde{\Phi}(k,t)^\top \widetilde{\Phi}(k,t)\right\|\right] &= \mathbb E \left[  \left\|  \widetilde{\Phi}(k,t) \right\|^2 \right]\\
  &
  \nonumber
  \leq N \sum_{i=1}^N \mathbb E \left[  \left\| \widetilde{\Phi}(k,t)\,e_i  \right\|^2 \right].
  \end{align}
 Let $\chi_i(s+1):=\widetilde{\Phi}(s,t)e_i=
\widetilde{W}(s+t+2)\cdots \widetilde{W}(t+2) e_i$, $s=0,\cdots ,k-t-2$, $\chi_i(0)=e_i$. Get the recursion:
 \begin{align*}
 \chi_i(s+1)&=\widetilde{W}(s+t+2)\chi_i(s),s=0,\cdots ,k-t-2.
 \end{align*}
 Independence of the $\widetilde{W}(k)$'s
  and nesting expectations:
  \begin{align*}
  &\mathbb E\!\! \left[ \!\|\chi_i(k\!-\!t\!-\!1)\|^2 \! \right]\!\! =\!\!
  \mathbb E \!\left[\!\mathbb E\! \left[\! \left. \chi_i(k\!-\!t\!-\!2)^{\!\!\top}\!\! \widetilde{W}(k)^2\! \chi_i(k\!-\!t\!-\!2)\!\right|\right.\right.\\
  &
  \left.\left.\left| \widetilde{W}(k-1),\cdots ,\widetilde{W}(t+2)\right.\right]\,   \right] \\
  &=
  \mathbb E \left[ \chi_i(k-t-2)^\top \,\mathbb E \left[ \widetilde{W}(k)^2\right]   \chi_i(k-t-2)\right] \\
  &\leq \mathbb E \left[ \left\|\mathbb E \left[ \widetilde{W}(k)^2\right] \right\|\,\left\|  \chi_i(k-t-2)\right\|^2\right] \\
  &\leq
  {\overline{\mu}}^2\,\mathbb E \left[ \left\| \chi_i(k-t-2)\right\|^2\right].
  \end{align*}
  Repeating for $E \left[ \left\| \chi_i(k-t-2)\right\|^2\right]$,
  obtain for all~$i$:
  \begin{align*}
  \mathbb E \left[ \|\widetilde{\Phi}(k,t)\,e_i\|^2  \right] &=
  \mathbb E \left[ \|\chi_i(k-t-1)\|^2  \right]\\
  & \leq ({\overline{\mu}}^2)^{k-t-1}\,\|e_i\|^2 = ({\overline{\mu}}^2)^{k-t-1}.
  \end{align*}
  Plugging this in~\eqref{eqn-A-squared}, \eqref{eqn-theorem-phi-2} follows.
  Next,~\eqref{eqn-theorem-phi} follows from~\eqref{eqn-theorem-phi-2}
  and Jensen's inequality.
  To prove~\eqref{eqn-theorem-phi-3}, consider $0\! \leq\! s\!<\!t\!\leq\! k-2$ ($t\!<\!s$ by symmetry).
    By the independence of the $\widetilde{W}(k)$'s,
    the sub-multiplicative property of norms,
    and taking expectation, obtain:
    \begin{align}
    &\mathbb E\! \left[ \left\| \widetilde{\Phi}(k,s)^\top \widetilde{\Phi}(k,t) \right\|  \right]
    \!\!\leq\!\!
    \mathbb E\! \left[  \|\widetilde{W}(t+1)\cdots \widetilde{W}(s+2)\| \right] \nonumber
    \\
    &
    \nonumber
    \hspace{5.1cm}
    \times \mathbb E \left[ \left\| \widetilde{\Phi}(k,t) \right\|^2  \right] \\
    \label{eqn-apply-Previous}
    &\leq\!\!
    \left(\!N{\overline{\mu}}^{t\!-\!s} \! \right)\!\! \left(\!N^2\!({\overline{\mu}}^2)^{k-t-2}\!  \right)\! =\! N^3\!{\overline{\mu}}^{(k-t-1)+(k-s-1)}.
    \end{align}
    We applied~\eqref{eqn-theorem-phi} and~\eqref{eqn-theorem-phi-2} to get~\eqref{eqn-apply-Previous};
    thus, \eqref{eqn-theorem-phi-3} for $s,t \in \{0,\cdots,k-2\}$.
    If $s=k-1$, $t<k-1$, $\widetilde{\Phi}(k,s)^\top \widetilde{\Phi}(k,t)=\widetilde{\Phi}(k,t)$
     and the result reduces to~\eqref{eqn-theorem-phi-2}. The case $s<k-1$, $t=k-1$
      is symmetric. Finally, if $s=k-1,t=k-1$, the result is trivial. The proof is complete.
\end{proof}
\begin{proof}[Proof of Lemma~\ref{lemma-decay-moments-D-NC-modified}]
We prove~\eqref{eqn-lemma-psi-norm-1}. By~\eqref{eqn-psi-k-t}, $\mathcal W(k)$ is the product of $\tau_k$ i.i.d.~matrices $W(t)$ that obey Assumptions~\ref{assumption-random-weight-matrices} and~\ref{assumption-connected-random}. Hence, by \eqref{eqn-theorem-phi-2}, obtain~\eqref{eqn-lemma-psi-norm-1}:
\begin{align*}
\mathbb E \left[ \left\|\mathcal W(k)\right\|^2 \right] \leq (\overline{\mu}^2)^{\tau_k}
&=N^2\,e^{2\tau_k \log (\overline{\mu})}\\
&
 \leq N^2e^{-2(3\log k+\log N)} = \frac{1}{k^6},
\end{align*}
We \!\!prove\!\! \eqref{eqn-lemma-psi-norm-3}. \!\!Let \!\!$\widetilde{\Psi}(k,t)\!\!\!\!\!:=\!\!\!\widetilde{\mathcal W}\!(k),\!\cdots\!,\widetilde{\mathcal W}(t\!+\!1), \!k \!\!\!\geq \!\!\! t\!\!\!+\!\!\!1$.
For square matrices $A,B$:
$\|\!B^\top\!\! A^\top\!\! A B\|\!\!\! \leq\!\!\! \|A^\top \!\!A\|\|B^\top\!\! B\| = \|B\|^2\,\|A\|^2$.
Applying it $k-t$ times, obtain:
\[
\left\| \widetilde{\Psi}(k,t)^\top \widetilde{\Psi}(k,t)\right\|
\leq \left\| \widetilde{\mathcal W}(k)\right\|^2\,\cdots \,\left\|\widetilde{\mathcal W}(t+1)\right\|^2.
\]
Using independence, taking expectation,
and applying~\eqref{eqn-lemma-psi-norm-1}, obtain~\eqref{eqn-lemma-psi-norm-3}.
 By Jensen's inequality, \eqref{eqn-lemma-psi-norm-2} follows from~\eqref{eqn-lemma-psi-norm-3};
  relation~\eqref{eqn-lemma-psi-norm-4} is proved similarly.
\end{proof}

\subsection{Proof of~\eqref{eqn-second-moment-mD-NG}--\eqref{eqn-second-moment-mD-NC}}
\label{subsection-appendix-moments}
Recall the random graph $G(k)$. For a certain connected graph $G_0$,  $G(k)=G_0$ with probability $p_G>0$.
  This, together with
 $\underline{w}>$ by Assumption~\ref{assumption-random-weight-matrices}, implies that there exists $\mu_4 \in [0,1)$, such that
 $
 \mathbb E \left[ \left\|  \widetilde{W}(k)\right\|^4\right] \leq (\mu_4)^4.
 $
In particular, $\mu_4$ can be taken as:
\[
\mu_4 = (1-p_G) + p_G \left( 1-\underline{w}^2\,\lambda_{\mathrm{F}}\left(G_0\right)  \right)^2<1,
\]
where $\lambda_{\mathrm{F}}\left(G_0\right)$ is the second largest eigenvalue  of
the unweighted Laplacian of $G_0$. Now, consider~\eqref{eqn-star-star-proof-random-trian}.
 Let $\widetilde{f}_k:=\frac{1}{N}(f(\overline{x}(k))-f^\star)$.
 Squaring~\eqref{eqn-star-star-proof-random-trian}, taking expectation, and by the Cauchi-Schwarz
 inequality, get:
 \begin{align}
\nonumber
 &\mathbb E \!\left[\! (\widetilde{f}_k)^2\!\right]\!\!
 \leq \!\!
 \frac{4 R^2}{c^2\,k^2}
 \!+\!
 \frac{4 R L}{N \,c}
 \!\frac{1}{k^2}\!\!\sum_{t=1}^k\!\! \frac{(t+1)^2}{t}\mathbb E\! \left[\! \|\widetilde{y}(t-1)\|^2\! \right] \\
 &
 \nonumber
 +
 \frac{L^2}{N^2\,k^2}\,\sum_{t=1}^k \sum_{s=1}^k \frac{(t+1)^2}{t}
 \frac{(s+1)^2}{s}
\sqrt{ \mathbb E \left[ \, \|\widetilde{y}(t-1)\|^4\,  \right]}\\
&
 \label{eqn-cauchi-square}
\hspace{3cm}
\times \sqrt{\mathbb E \left[ \, \|\widetilde{y}(s-1)\|^4\,  \right]},
 \end{align}
where recall $R:=\|\overline{x}(0)-x^\star\|.$ The first term in~\eqref{eqn-cauchi-square}
is $O(1/k^2)$; by Theorem~\ref{theorem-consensus-D-NG-modified}, the second is $O(\log k/k^2)$.
 We upper bound the third term. Recall~\eqref{eqn-recall-for-appendix},
 let by $\mathcal{U}_k:=\|\widetilde{W}(k)\|$. Fix $s<t$, $s,t \in \{0,1,\cdots ,k-1\}$. By the sub-multiplicative
 property of norms:
 \begin{align*}
 \left\|  \widetilde{\Phi}(k,t)^\top \widetilde{\Phi}(k,s) \right\|
 \!\!\leq \!\!\left(\mathcal{U}_k^2  \mathcal{U}_{k-1}^2 \cdots \mathcal{U}_{t+2}^2\right)
 \!\!
 \left(\mathcal{U}_{t+1}\cdots \mathcal{U}_{s+2}\right)\!.
 \end{align*}
 For $t=s$,
  $
 \left\|  \widetilde{\Phi}(k,t)^\top \widetilde{\Phi}(k,t) \right\|
 \leq \mathcal{U}_k^2  \mathcal{U}_{k-1}^2\cdots \mathcal{U}_{t+2}^2
 $.
Let
 $
 \widehat{\mathcal U}(t,s):=\mathcal{U}_t  \mathcal{U}_{t-1}  \cdots   \mathcal{U}_{s+1}
 $,
 for $t>s$, and $\widehat{\mathcal U}(t,t)=I$.
 Further, let $\widehat{b}(k,t):=\frac{8(k-t-1)(t+1)}{k}+5$. Squaring \eqref{eqn-recall-for-appendix}:
\begin{align}
\nonumber
&\|\widetilde{z}(k)\|^4
\leq   (9 c^4 N^2 G^4)
%
%\sum_{t_1=0}^{k-1}\sum_{t_2=0}^{k-1}
%\sum_{t_3=0}^{k-1}\sum_{t_4=0}^{k-1}
\mathop{\sum\sum\sum\sum}_{\scriptsize t_1,t_2,t_3,t_4=0}^{k-1}
\widehat{b}(k,t_1)\widehat{b}(k,t_2)\\
&
\label{eqn-z-fourth-moment}
\hspace{4cm}
\times
\widehat{b}(k,t_3)\widehat{b}(k,t_4) \\
&\left(\widehat{\mathcal{U}}(k,t_1+1)\right)^{\!\!4}\!\!\! \left(\widehat{\mathcal{U}}(t_1+1,t_2+1)\right)^{\!\!3}\!\!\!
\left(\widehat{\mathcal{U}}(t_2+1,t_3+1)\right)^{\!\!2} \nonumber \\
&
\left(\widehat{\mathcal{U}}(t_3+1,t_4+1)\right)^4\frac{1}{(t_1+1)(t_2+1)(t_3+1)(t_4+1)}. \nonumber
\end{align}
Fix $t_i\!\! \in\!\! \{0,\cdots ,k-1\}, i=1,\cdots,4$,
$t_1\geq t_2 \geq t_3 \geq t_4$.
By independence of $\mathcal{U}_k$'s,
 $\left(\!\mathbb E\!\!\left[\!\mathcal{U}_k^j\!\right]\!\right)^{4/j}
\!\!\! \!\leq\!\mathbb E\!\!\left[\! \mathcal{U}_k^4\! \right]$, $j=1,\!2,\!3$:
\begin{align}
\label{eqn-app-apply}
&\mathbb E \left[ (\widehat{\mathcal{U}}(k,t_1+1))^4(\widehat{\mathcal{U}}(t_1+1,t_2+1))^3\right.\\
&
\nonumber
\hspace{2cm}
\left.
(\widehat{\mathcal{U}}(t_2+1,t_3+1))^2 (\widehat{\mathcal{U}}(t_3+1,t_4+2))\right]
\\
&\leq\!\!\!
(\mu_4^4)^{k\!-\!t_1\!-\!1}\!\! (\mu_4^3)^{t_1\!-\!t_2}\!\! (\mu_4^2)^{t_2\!-\!t_3}\!\! (\mu_4)^{t_3\!-\!t_4}\!\!
=\!\!
(\mu_4)^{\sum_{i=1}^4 (k-t_i)-1}. \nonumber
\end{align}
Taking expectation in~\eqref{eqn-z-fourth-moment}, and applying~\eqref{eqn-app-apply}, we get:
\begin{align}
\nonumber
\mathbb E \left[  \|\widetilde{z}(k)\|^4 \right]
&\leq
(9 c^4 N^2 G^4)\left( \sum_{t=0}^{k-1}\widehat{b}(k,t)\mu_4^{k-t}(t+1)^{-1}  \right)^4 \\
&
\label{eqn-apply-this-now}
= O(1/k^4),
\end{align}
where the last equality uses Lemma~\ref{lemma-bounds-on-sums}. Applying~\eqref{eqn-apply-this-now} to~\eqref{eqn-cauchi-square},
the third term in~\eqref{eqn-cauchi-square} is $O(\log^2 k / k^2)$.
Thus, $\mathbb E \left[ (\widetilde{f}_k)^2\right]=O(\log^2 k/k^2)$.
  Express
  $f(x_i(k))-f^\star = N \widetilde{f}_k + (f(x_i(k))-f(\overline{x}(k)))$,
  and use $(f(x_i(k))-f^\star)^2 \leq 2 (N \widetilde f_k)^2 + 2(f(x_i(k))-f^\star)^2
   \leq 2 (N \widetilde f_k)^2 + 2 G N \|\widetilde{x}(k)\|^2 $, where the
   last inequality follows by Assumption~\ref{assumption-bdd-gradients}.
   Taking expectation,
   applying Theorem~\ref{theorem-consensus-D-NG-modified},
   and using \eqref{eqn-apply-this-now}, the result \eqref{eqn-second-moment-mD-NG} follows.
   For mD--NC, prove~\eqref{eqn-second-moment-mD-NC} like~\eqref{eqn-second-moment-mD-NG} by letting $\tau_k=\lceil  \frac{3 \log k}{- \log \mu_4}\rceil$.

\bibliographystyle{IEEEtran}
\bibliography{IEEEabrv,bibliographyNovoApr27}

% Generated by IEEEtran.bst, version: 1.13 (2008/09/30)
\begin{thebibliography}{10}
\providecommand{\url}[1]{#1}
\csname url@samestyle\endcsname
\providecommand{\newblock}{\relax}
\providecommand{\bibinfo}[2]{#2}
\providecommand{\BIBentrySTDinterwordspacing}{\spaceskip=0pt\relax}
\providecommand{\BIBentryALTinterwordstretchfactor}{4}
\providecommand{\BIBentryALTinterwordspacing}{\spaceskip=\fontdimen2\font plus
\BIBentryALTinterwordstretchfactor\fontdimen3\font minus
  \fontdimen4\font\relax}
\providecommand{\BIBforeignlanguage}[2]{{%
\expandafter\ifx\csname l@#1\endcsname\relax
\typeout{** WARNING: IEEEtran.bst: No hyphenation pattern has been}%
\typeout{** loaded for the language `#1'. Using the pattern for}%
\typeout{** the default language instead.}%
\else
\language=\csname l@#1\endcsname
\fi
#2}}
\providecommand{\BIBdecl}{\relax}
\BIBdecl

\bibitem{RibeiroADMM1}
I.~D. Schizas, A.~Ribeiro, and G.~B. Giannakis, ``Consensus in ad hoc {WSN}s
  with noisy links -- {P}art~{I}: {D}istributed estimation of deterministic
  signals,'' \emph{IEEE Trans. Sig. Process.}, vol.~56, no.~1, pp. 350--364,
  Jan. 2009.

\bibitem{RibeiroADMM2}
------, ``Consensus in ad hoc {WSN}s with noisy links -- {P}art~{I}:
  {D}istributed estimation and smoothing of random signals,'' \emph{IEEE Trans.
  Sig. Process.}, vol.~56, no.~4, pp. 1650--1666, Jan. 2009.

\bibitem{Rabbat}
M.~Rabbat and R.~Nowak, ``Distributed optimization in sensor networks,'' in
  \emph{IPSN 2004, 3rd International Symposium on Information Processing in
  Sensor Networks}, Berkeley, California, USA, April 2004, pp. 20 -- 27.

\bibitem{bazerque_lasso}
G.~Mateos, J.~A. Bazerque, and G.~B. Giannakis, ``Distributed sparse linear
  regression,'' \emph{IEEE Transactions on Signal Processing}, vol.~58, no.~11,
  pp. 5262--5276, November 2010.

\bibitem{bazerque_sensing}
J.~A. Bazerque and G.~B. Giannakis, ``Distributed spectrum sensing for
  cognitive radio networks by exploiting sparsity,'' \emph{IEEE Transactions on
  Signal Processing}, vol.~58, no.~3, pp. 1847--1862, March 2010.

\bibitem{arxivVersion}
D.~Jakovetic, J.~Xavier, and J.~M.~F. Moura, ``Fast distributed gradient
  methods,'' \emph{conditionally accepted to IEEE Trans. Autom. Contr.}, Jan.
  2013, available at: http://arxiv.org/abs/1112.2972.

\bibitem{cdc-submitted}
D.~Jakovetic, J.~M.~F. Moura, and J.~Xavier, ``Distributed {N}esterov-like
  gradient algorithms,'' in \emph{CDC'12, 51$^\textrm{st}$ IEEE Conference on
  Decision and Control}, Maui, Hawaii, December 2012, pp. 5459--5464.

\bibitem{asilomar-nesterov}
------, ``Distributed {N}esterov-like gradient algorithms,'' in \emph{IEEE
  Asilomar Conference on Signals, Systems, and Computers}, Pacific Grove,
  California, November 2012, pp. 1513--1517.

\bibitem{nedic_T-AC}
A.~Nedic and A.~Ozdaglar, ``Distributed subgradient methods for multi-agent
  optimization,'' \emph{IEEE Transactions on Automatic Control}, vol.~54,
  no.~1, pp. 48--61, January 2009.

\bibitem{duchi}
J.~Duchi, A.~Agarwal, and M.~Wainwright, ``Dual averaging for distributed
  optimization: Convergence and network scaling,'' \emph{IEEE Transactions on
  Automatic Control}, vol.~57, no.~3, pp. 592--606, March 2012.

\bibitem{BoydGossip}
S.~Boyd, A.~Ghosh, B.~Prabhakar, and D.~Shah, ``Randomized gossip algorithms,''
  \emph{IEEE Transactions on Information Theory}, vol.~52, no.~6, pp.
  2508--2530, June 2006.

\bibitem{nedic-gossip}
S.~S. Ram, A.~Nedic, and V.~Veeravalli, ``Asynchronous gossip algorithms for
  stochastic optimization,'' in \emph{CDC '09, 48th IEEE International
  Conference on Decision and Control}, Shanghai, China, December 2009, pp. 3581
  -- 3586.

\bibitem{nedic_T-AC-private}
A.~Nedic, A.~Ozdaglar, and A.~Parrilo, ``Constrained consensus and optimization
  in multi-agent networks,'' \emph{IEEE Transactions on Automatic Control},
  vol.~55, no.~4, pp. 922--938, April 2010.

\bibitem{Matei}
I.~Matei and J.~S. Baras, ``Performance evaluation of the consensus-based
  distributed subgradient method under random communication topologies,''
  \emph{IEEE Journal of Selected Topics in Signal Processing}, vol.~5, no.~4,
  pp. 754--771, 2011.

\bibitem{Sonia-Martinez}
M.~Zhu and S.~Mart{\'\i}nez, ``On distributed convex optimization under
  inequality and equality constraints,'' \emph{IEEE Transactions on Automatic
  Control}, vol.~57, no.~1, pp. 151--164, Jan. 2012.

\bibitem{SayedConf}
C.~Lopes and A.~H. {S}ayed, ``Adaptive estimation algorithms over distributed
  networks,'' in \emph{21st IEICE Signal Processing Symposium}, Kyoto, Japan,
  Nov. 2006.

\bibitem{SayedOptim}
J.~Chen and A.~H. Sayed, ``Diffusion adaptation strategies for distributed
  optimization and learning over networks,'' \emph{IEEE Trans. Sig. Process.},
  vol.~60, no.~8, pp. 4289--4305, Aug. 2012.

\bibitem{Theodoridis}
S.~Chouvardas, K.~Slavakis, Y.~Kopsinis, and S.~Theodoridis, ``A sparsity
  promoting adaptive algorithm for distributed learning,'' \emph{IEEE
  Transactions on Signal Processing}, vol.~60, no.~10, pp. 5412--5425, Oct.
  2012.

\bibitem{nedic_novo}
S.~Ram, A.~Nedic, and V.~Veeravalli, ``Distributed stochastic subgradient
  projection algorithms for convex optimization,'' \emph{Journal of
  Optimization Theory and Applications}, vol. 147, no.~3, pp. 516--545, 2011.

\bibitem{asu-new-jbg}
I.~Lobel, A.~Ozdaglar, and D.~Feijer, ``Distributed multi-agent optimization
  with state-dependent communication,'' \emph{Mathematical Programming}, vol.
  129, no.~2, pp. 255--284, 2011.

\bibitem{asu-random}
I.~Lobel and A.~Ozdaglar, ``Convergence analysis of distributed subgradient
  methods over random networks,'' in \emph{46th Annual Allerton Conference
  onCommunication, Control, and Computing}, Monticello, Illinois, September
  2008, pp. 353 -- 360.

\bibitem{Karl-Johansson-Subgradient}
B.~Johansson, T.~Keviczky, M.~Johansson, and K.~H. Johansson, ``Subgradient
  methods and consensus algorithms for solving separable distributed control
  problems,'' in \emph{CDC'08, 47$^\textrm{th}$ IEEE Conference on Decision and
  Control}, Cancun, Mexico, December 2008, pp. 4185--4190.

\bibitem{Rabbat-Consensus-Dual-Avg}
K.~Tsianos and M.~Rabbat, ``Distributed consensus and optimization under
  communication delays,'' in \emph{proc. 49th Allerton Conference on
  Communication, Control, and Computing}, Monticello, Illinois, Sept. 2011, pp.
  974--982.

\bibitem{AnnieChen}
I.-A. Chen and A.~Ozdaglar, ``A fast distributed proximal gradient method,'' in
  \emph{Allerton Conference on Communication, Control and Computing},
  Monticello, {IL}, October 2012.

\bibitem{cooperative-convex}
D.~Jakovetic, J.~Xavier, and J.~M.~F. {M}oura, ``Cooperative convex
  optimization in networked systems: {A}ugmented {L}agrangian algorithms with
  directed gossip communication,'' \emph{IEEE Transactions on Signal
  Processing}, vol.~59, no.~8, pp. 3889--3902, Aug. 2011.

\bibitem{Joao-Mota}
J.~Mota, J.~Xavier, P.~Aguiar, and M.~Pueschel, ``Basis pursuit in sensor
  networks,'' in \emph{ICASSP '11, IEEE International Conference on Acoustics,
  Speech, and Signal Processing}, Prague, Czech Republic, May 2011, pp.
  2916--2919.

\bibitem{Joao-Mota-2}
------, ``Distributed basis pursuit,'' \emph{IEEE Trans. Sig. Process.},
  vol.~60, no.~4, pp. 1942--1956, April 2011.

\bibitem{Uday}
U.~V. Shanbhag, J.~Koshal, and A.~Nedic, ``Multiuser optimization: distributed
  algorithms and error analysis,'' \emph{SIAM Journal on Control and
  Optimization}, vol.~21, no.~2, pp. 1046--1081, 2011.

\bibitem{Terelius}
H.~Terelius, U.~Topcu, and R.~M. {M}urray, ``Decentralized multi-agent
  optimization via dual decomposition,'' in \emph{18th World Congress of the
  International Federation of Automatic Control (IFAC)}, Milano, Italy, August
  2011, identifier: 10.3182/20110828-6-IT-1002.01959.

\bibitem{dimakiskarmourarabbatscaglione}
A.~Dimakis, S.~Kar, J.~M.~F. Moura, M.~Rabbat, and A.~Scaglione, ``Gossip
  algorithms for distributed signal processing,'' \emph{Proceedings of the
  IEEE}, vol.~98, no.~11, pp. 1847--1864, 2010.

\bibitem{weightOpt}
D.~Jakoveti\'{c}, J.~Xavier, and J.~M.~F. Moura, ``Weight optimization for
  consensus algorithms with correlated switching topology,'' \emph{IEEE
  Transactions on Signal Processing}, vol.~58, no.~7, pp. 3788--3801, July
  2010.

\bibitem{Nesterov-Gradient}
Y.~E. Nesterov, ``A method for solving the convex programming problem with
  convergence rate {O}$(1/k^2)$,'' \emph{Dokl. {A}kad. {N}auk {SSSR}}, vol.
  269, pp. 543–--547, 1983, (in Russian).

\bibitem{BoydADMoM}
S.~Boyd, N.~Parikh, E.~Chu, B.~Peleato, and J.~Eckstein, ``Distributed
  optimization and statistical learning via the alternating direction method of
  multipliers,'' \emph{Foundations and Trends in Machine Learning, Michael
  Jordan, Editor in Chief}, vol.~3, no.~1, pp. 1--122, 2011.

\bibitem{Blatt-Hero-Gauchman}
D.~Blatt, A.~Hero, and H.~Gauchman, ``A convergent incremental gradient method
  with a constant step size,'' \emph{Siam J. Optim.}, vol.~18, no.~1, pp.
  29--51, 2009.

\bibitem{jadbabaie_on_consensus}
A.~Tahbaz-Salehi and A.~Jadbabaie, ``On consensus in random networks,'' in
  \emph{44th Annual Allerton Conference on Communication, Control, and
  Computing}, Allerton House, Illinois, USA, September 2006, pp. 1315--1321.

\bibitem{BoydFusion}
L.~Xiao, S.~Boyd, and S.~Lall, ``A scheme for robust distributed sensor fusion
  based on average consensus,'' in \emph{IPSN '05, Information Processing in
  Sensor Networks}, Los Angeles, California, 2005, pp. 63--70.

\end{thebibliography}
\end{document}